\documentclass[aps,reprint,10pt,showkeys,prx,superscriptaddress,twocolumn,longbibliography,nofootinbib]{revtex4-2}

\usepackage[english]{babel}
\usepackage[utf8]{inputenc}
\usepackage{amsmath,dsfont,mathtools}
\usepackage{tikz}
\usepackage{multirow}
\usepackage{amssymb}
\usepackage{graphicx}
\usepackage{pgf}
\usepackage{algorithm}
\usepackage{placeins}
\usepackage{float}
\usepackage{dcolumn}
\usepackage{bm}
\usepackage{makecell}
\usepackage[normalem]{ulem}
\usepackage{color}
\graphicspath{{figures/}}
\usepackage{adjustbox}
\usepackage{physics}
\usepackage{amsfonts}
\usepackage{soul} 
\usepackage{overpic}
\usepackage{lmodern}
\usepackage{fix-cm}
\usepackage{nowidow}
\usepackage{amsthm}
\usepackage{comment}
\usepackage[noend]{algpseudocode}
\usepackage{thmtools}
\usepackage{thm-restate}
\usepackage{xcolor}
\usepackage{mathdots}
\usepackage{yhmath}
\usepackage{ragged2e}
\usepackage{silence}
\usepackage[scr=rsfso]{mathalfa}
\usepackage{stmaryrd} 
\usepackage{booktabs}
\usepackage{array}

\usepackage[dvipsnames]{xcolor}

\usepackage{caption}
\usepackage{afterpage}
\usepackage{subcaption}
\usepackage{dblfloatfix}

\usetikzlibrary{quantikz2}
\usetikzlibrary{fadings}
\usetikzlibrary{patterns}
\usetikzlibrary{shadows.blur}
\usetikzlibrary{shapes}
\usetikzlibrary{positioning}
\usetikzlibrary{fit}
\theoremstyle{definition}
\newtheorem{definition}{Definition}[section]
\theoremstyle{plain}
\newtheorem{lemma}{Lemma}[section]
\newtheorem{proposition}{Proposition}[section]

\newtheorem{corollary}{Corollary}[section]

\usepackage[table]{xcolor}

\definecolor{customblue}{HTML}{17178B}

\newcommand{\customauthors}[1]{
    \bgroup\def\and{, }\def\And{, }\def\AND{, }
    \author{#1}
    \egroup
}

\definecolor{darkred}{RGB}{200,20,20}
\definecolor{darkblue}{RGB}{20,20,200}

\usepackage[colorlinks=true,citecolor=blue,linkcolor=blue]{hyperref}
\newcommand{\figref}[2][]{\hyperref[#2]{\ref*{#2}#1}}
\usepackage{cleveref}
\usepackage[final]{pdfpages}

\begin{document}
\preprint{APS/123-QED}
\makeatletter
\AtBeginDocument{\let\LS@rot\@undefined}
\let\orig@addcontentsline\addcontentsline
\def\addcontentsline#1#2#3{}
\makeatother

\title{Circuit Harmonic Matrices: A Spectral Framework for Quantum Machine Learning}

\author{Kyle J. S. Campbell}
\email{Kyle.J.S.Campbell@ed.ac.uk}
\affiliation{Quantum Software Lab, School of Informatics, The University of Edinburgh, United Kingdom}

\author{Luigi Del Debbio}
\affiliation{School of Physics and Astronomy, The University of Edinburgh, United Kingdom}

\author{Petros Wallden}
\affiliation{Quantum Software Lab, School of Informatics, The University of Edinburgh, United Kingdom}

\hyphenpenalty=10000
\exhyphenpenalty=10000
\begin{abstract}
Parametrised quantum circuits are a central framework for near term quantum machine learning. However, it remains challenging to determine in advance how architectural choices, such as encoding strategies, gate placement, and entangling structure, influence both the expressive capacity of the model and its trainability during optimisation. We introduce a data-agnostic framework, one requiring no knowledge of a training dataset or optimisation trajectory, that maps a broad family of circuits into a single architecture matrix built over learnable features and parameters. We show that this framework provides an explicit link between circuit structure, the correlations among learnable features, and the geometry of training kernels through the factorisation of each of these objects as quadratic forms in terms of these matrices. We show how correlations between learnable features arise from shared parameter-induced harmonics generated by non-commuting gate–observable interactions during Heisenberg back-propagation, and how these correlations are encoded directly in the architecture matrix. From this perspective, kernel structure and coefficient statistics can be reconstructed analytically from circuit design alone, without reference to a dataset or optimisation trajectory. The resulting framework makes circuit-induced structure explicit, separating architectural effects from data-dependent ones, and provides a principled foundation for analysing and comparing parametrised quantum circuits based on intrinsic, design-level signatures.
\end{abstract}

\maketitle

\section{Introduction}\label{sec:intro}

Parametrised quantum circuits (PQCs) are a central modelling tool in near-term quantum computing. They underpin variational quantum algorithms for optimisation and simulation, and they also serve as learners in supervised and unsupervised quantum machine learning (QML) settings, where a quantum circuit is trained to match labelled data or to implement a useful data-dependent representation \cite{biamonte_quantum_2017, schuld_supervised_2018, cerezo_variational_2021}. A common motivation for QML is that quantum dynamics can generate feature maps and hypothesis classes that are difficult to emulate classically, potentially enabling new inductive biases or computational advantages in certain regimes \cite{havlicek_supervised_2019}. At the same time, the practical effectiveness of PQC learning is highly architecture-dependent, and understanding which circuit designs learn well, and why, remains an open challenge.

A typical supervised-learning PQC has two conceptual components. First, an \emph{encoding} (or \emph{data-loading}) stage maps a classical input $x$ to a quantum state $\rho(x)$ (or, more generally, to an $x$-dependent unitary $S(x)$ while maintaining a universal fixed $\rho$ independent of input) using an \emph{encoder} subcircuit. Second, a \emph{trainable} stage applies a parametrised \emph{ans\"atz} with tunable angles $\theta$, after which one measures an observable a number of times to approximate its expectation value as some scalar model output $f(x;\theta)$. Training proceeds by evaluating a loss function comparing $f(x;\theta)$ to targets on a finite training set, and then updating $\theta$ using a classical optimiser, often via gradient-based steps computed by analytic rules or parameter-shift methods \cite{schuld_evaluating_2019}. This hybrid loop is straightforward in principle, but in practice training performance can be limited by optimisation difficulties and by strong sensitivity to circuit design choices.

A major obstacle is the prevalence of \emph{barren plateaux}, in which loss gradients concentrate near zero under broad parameter initialisations, making training prohibitively slow as system size or depth grows \cite{mcclean_barren_2018, larocca_review_2024, ragone_lie_2024}. Related issues include noise-induced gradient suppression \cite{wang_noise-induced_2021} and the tension between expressivity and trainability: circuits that are `too random' can exhibit strong concentration phenomena limiting their trainability, while circuits that are `too structured' may lack representational capacity to be expressive enough, or may lock learning into restricted subspaces \cite{sim_expressibility_2019, larocca_theory_2023}. In parallel, a growing body of work has analysed regimes where PQC learning is well-approximated by kernel methods. In particular, quantum neural tangent kernel (QNTK) formalisms clarify when training behaves `lazily' (i.e. close to linearised dynamics when parameters change very little) and when representation learning effects become important \cite{liu_representation_2022, abedi_quantum_2023}. In particular, the QNTK describes the local geometry of learning: it captures which directions in function space the circuit can move in most easily at its current parameters, and therefore which patterns in the data it will fit first. These perspectives strongly indicate that many observed training behaviours are governed not only by the dataset, but also by architecture-level properties that are present \emph{prior} to training.

This paper presents an architecturally focused framework for a broad and practically relevant class of PQCs that admit a \emph{quantum Fourier model} (QFM) description. For widely used commuting phase encoders of the form $R_P(x)$ for some Pauli $P$, the model output can be expanded in a finite set of input harmonics, so that the encoder determines \emph{which} input-frequency components, $\omega$, are accessible to build the output model out of. In a specific class of re-uploading circuits, where an encoder block is interleaved repeatedly with trainable blocks, the accessible frequency set expands in a controlled way with depth and encoder design \cite{perez-salinas_data_2020, schuld_effect_2021}. From this viewpoint, the learned function is naturally described by a finite Fourier series in $x$ containing only terms with frequencies $\omega$ that the encoder has access to, while the coefficients of these terms are trainable functions that depend on $\theta$.

The central idea of this work is to refine this Fourier viewpoint by making the \emph{trainer--encoder interaction} explicit. For trainable blocks that contain Pauli-rotation gates with Clifford interleavings (such as entangling $CNOT$s), the trainable Fourier coefficient functions are finite trigonometric polynomials over parameter-space, whose harmonic content is generated by non-commuting gate--observable interactions under Heisenberg back-propagation \cite{gottesman_heisenberg_1998, aaronson_improved_2004, rall_simulation_2019}. Combining the aforementioned input-harmonic expansion (encoder side) with the parameter-harmonic expansion (trainer side) yields a \emph{joint} harmonic representation of the model. We collect the joint Fourier coefficients into a matrix $C$, whose rows index encoder-accessible input harmonics $\omega \in \Omega$ and whose columns index parameter-space harmonics $k \in K$ generated by the trainable circuit. The row index set $\Omega$ is determined solely by the encoder architecture, and the column index set $K$ solely by the trainable block structure; the values of the entries $C_{\omega k}$ depend additionally on the observable $O$ and input state $\rho$, which set the amplitude of each encoder--trainer coupling through the branch prefactors arising in Heisenberg back-propagation (see appendix \ref{appendix:pauli_propagation}). In this way, $C$ is independent of any dataset or optimisation trajectory, but encodes the full interaction between the encoder, trainable blocks, observable, and input state at the level of their joint harmonic structure. Crucially, $C$ serves as a structural representation of the circuit itself. It acts as a building block from which learning relevant objects can be constructed directly. In particular, trainable coefficient variances, covariance and correlation matrices, and gradient based kernels such as the quantum neural tangent kernel all factor through $C$. In this way, architectural choices enter learning dynamics only through this matrix, making their influence explicit and algebraically tractable prior to training.

\subsection{Related work}

Fourier-based descriptions of PQCs have been developed for broad classes of commuting phase encoders, making the encoder-accessible input spectrum explicit and clarifying how re-uploading structure and depth control the set of representable frequencies \cite{schuld_effect_2021, perez-salinas_data_2020, casas_multidimensional_2023}. These results underlie the QFM representation reviewed in Section \ref{sec:prelims}, where we summarise standard results on encoder-accessible harmonics and their construction from difference sets and re-uploading.

Recent work on QFMs links expressivity limitations to second-order coefficient statistics, identifying regimes where the variance of some Fourier coefficients decays rapidly with the number of qubits \cite{mhiri_constrained_2025}. Our analysis similarly focuses on second-order coefficient statistics, but makes both variances and cross-frequency covariances explicit as quadratic forms in an circuit-defined interaction matrix (determined by the encoder, trainable blocks, observable, and input state). Closely related work analyses spectral bias and frequency structure in parametrised quantum circuits through Fourier based diagnostics, including Fourier coefficient correlation matrices \cite{strobl_fourier_2025} and studies of how coefficient variances or gradients scale with frequency \cite{wiedmann_fourier_2024, duffy_spectral_2026}. Our work is complementary: we make the trainer–encoder coupling explicit by writing the model in a joint Fourier expansion over inputs and parameters, and collecting the resulting coefficients into a single circuit-defined interaction matrix $C$ (Section \ref{sec:interaction_matrix}). From this object we derive second order covariances and correlation matrices in Section \ref{sec:coeff_stats}, while Section \ref{sec:kernels} shows that gradient based training kernels can likewise be written directly in terms of $C$. Indeed, a complementary line of work studies learning dynamics through such kernel methods and their quantum analogues. QNTK frameworks clarify when training is well approximated by linearised dynamics and when higher-order corrections and representation learning (i.e. the learning of features) become relevant \cite{liu_representation_2022, abedi_quantum_2023, shirai_quantum_2024}. In this paper, we recover the standard empirical QNTK, as it exists within this literature, as a projection of an internal, data-independent coefficient-space kernel (Section \ref{sec:kernels}), and show that both objects factor through the same same architecture-level matrix $C$. This provides an explicit link between local kernel geometry and the directions in function space the PQC can most easily explore and Fourier-coefficient statistics via covariance matrices, where both are constructed from the architectural $C$ matrix.

Finally, our coefficient construction is based on Heisenberg-picture operator tracking: we compute both the encoder-side and trainer-side Fourier coefficients by propagating observables through the circuit and updating their Pauli expansions using \textit{Pauli propagation}, a standard tool in stabilizer and simulation theory \cite{aaronson_improved_2004, rall_simulation_2019}. Recent QML work applies related propagation ideas to extract Fourier structure and study spectral bias in variational models \cite{wiedmann_fourier_2024, duffy_spectral_2026}.

\subsection{Summary of results}

We study re-uploading circuits whose input dependence is band-limited under commuting phase encoders acting across multiple qubits and layers, so that for each fixed parameter vector $\theta$ the model output admits a finite Fourier expansion in the input variable,
$f(x;\theta)=\sum_{\omega\in\Omega} a_\omega(\theta)e^{i\omega\cdot x}$ (Section \ref{sec:prelims}). The accessible frequency set $\Omega$ is determined by the joint eigenvalue structure of the encoding generators across qubits and by their repeated insertion across layers. For Pauli-rotation trainable blocks with Clifford interleavings, each trainable coefficient $a_\omega(\theta)$ is itself a finite trigonometric polynomial on parameter space (Section \ref{sec:prelims}), generated by the sequence of commuting and anti-commuting encounters that arise under Heisenberg back-propagation. Combining these two harmonic descriptions yields a \emph{joint} input--parameter expansion whose joint Fourier coefficients form an circuit-defined circuit harmonic matrix $C$ (Section \ref{sec:interaction_matrix}),
\begin{equation}
    f(x;\theta) = \sum_{\omega \in \Omega} \sum_{k \in K}C_{\omega k}e^{i \omega \cdot x}e^{ik\cdot \theta}.
\end{equation}
We construct $C$ equivalently (i) as joint Fourier coefficients and (ii) directly from Pauli-propagation branch/node expansions (Section \ref{sec:interaction_matrix}; details in Appendix \ref{appendix:pauli_propagation}).

This joint representation makes several learning-relevant objects explicit in terms of $C$:
\begin{itemize}
\item \textbf{Second-order coefficient statistics.} Under uniform parameter sampling, the mean and second moments of the coefficient vector $a(\theta)$ reduce to quadratic forms in $C$. In particular, centred coefficient covariances are row Gram matrices $CPC^\dagger$ (with $P$ removing the constant $k=0$ mode when present), and correlations follow by standard normalisation (Section \ref{sec:coeff_stats}; proofs in Appendix \ref{appendix:covariance_proofs}). These matrices quantify frequency--frequency couplings induced by shared parameter harmonics.
\item \textbf{Kernel factorisations.} Gradient-based kernel objects factor through the same coefficient space. The coefficient-space (harmonic) QNTK is the Gram matrix of coefficient gradients and admits a representation of the form $H(\theta)=C\,M(\theta)\,C^\dagger$, where $M(\theta)$ is a universal character-gradient kernel determined solely by the choice of parameter manifold and its differential structure (Section \ref{sec:kernels}). The usual data-space QNTK on a finite input set is recovered by projection with the design matrix $V$ (Section \ref{sec:kernels}), yielding a direct link between kernel geometry and the architecture matrix $C$.
\end{itemize}

Section \ref{sec:numerics} provides numerical evidence supporting the main identities (covariances/correlations from $C$) and the kernel reconstruction relations from an example circuit with scaling depth. These approximated $C$ matrix constructed quantities are compared to their Monte-Carlo estimated versions to show structurally these agree.

\subsection{Contributions}
The main contributions are:
\begin{enumerate}
    \item We introduce a joint input--parameter harmonic representation for quantum Fourier models, and define an architecture-level circuit harmonic matrix \(C\) whose entries are joint Fourier coefficients determined by the encoder, ans\"atz, and measurement.
    \item We derive closed-form expressions for second-order coefficient statistics under uniform parameter sampling: centred coefficient covariances as a row Gram matrix (of \(C\), with the constant mode removed when present), variances as its diagonal entries, and correlations obtained by standard Pearson normalisation. We validate these expressions numerically against Monte Carlo estimates.
    \item We define a coefficient-space (harmonic) quantum neural tangent kernel (QNTK) and show that both its pointwise and parameter-averaged forms can be written as weighted quadratic forms in \(C\). We further give the projection that recovers the standard data-space kernel on a finite input set. We validate the parameter-averaged QNTK constructed from $C$ numerically against Monte Carlo estimates.
\end{enumerate}

\subsection{Organisation}
Section~\ref{sec:prelims} collects preliminaries on quantum Fourier models, encoder-accessible spectra, and parameter-harmonic expansions for trainable Pauli-rotation ans\"atze. Section~\ref{sec:interaction_matrix} defines the circuit harmonic matrix \(C\) from joint Fourier coefficients and discusses its interpretation. Section~\ref{sec:coeff_stats} derives coefficient covariance and correlation structure as quadratic forms in \(C\). Section~\ref{sec:kernels} develops coefficient-space kernel representations in terms of \(C\) and shows how standard data-space kernels are recovered by projection. Section~\ref{sec:numerics} presents supporting numerical experiments, and we conclude in Section~\ref{sec:discussion} with implications for further architecture-sensitive analysis and future work.

\section{Preliminaries}\label{sec:prelims}

This section fixes notation and summarises standard facts used throughout the paper. We work with a re-uploading PQC model, introduce the quantum Fourier model (QFM) input-harmonic decomposition and the encoder-accessible spectrum, and record the parameter-harmonic expansion that arises for single-use trainable Pauli-rotation ans\"atze. Detailed derivations are deferred to the appendices.

\subsection{PQC model and re-uploading structure}\label{subsec:pqc_model}

Let $\rho$ be an $n$-qubit input state and $O$ a Hermitian observable. We consider re-uploading circuits of the form
\begin{equation}
U(\theta,x) = W_L(\theta^{(L)})\,S_L(x)\cdots W_1(\theta^{(1)})\,S_1(x),
\label{eq:prelims_reupload}
\end{equation}
where $S_\ell(x)$ are encoding blocks acting on an input $x\in\mathcal{X}$ and $W_\ell(\theta^{(\ell)})$ are trainable blocks with parameters $\theta^{(\ell)}$. The model output is
\begin{equation}
f(x;\theta)=\Tr\!\left[O\,U(\theta,x)\,\rho\,U(\theta,x)^\dagger\right],
\label{eq:prelims_model}
\end{equation}
with $\theta\in\mathbb{T}^m$ the concatenation of all trainable parameters. We assume $f(x;\theta)\in\mathbb{R}$ throughout. Re-uploading architectures of this type are standard in variational and supervised-learning settings \cite{cerezo_variational_2021,perez-salinas_data_2020}.

\subsection{Quantum Fourier models and encoder-accessible input harmonics}\label{subsec:qfm_input}

For a broad class of re-uploading circuits with commuting phase encoders, the input dependence of \eqref{eq:prelims_model} is band-limited: for each fixed $\theta$, the map $x\mapsto f(x;\theta)$ admits a finite Fourier expansion
\begin{equation}
f(x;\theta)=\sum_{\omega\in\Omega} a_\omega(\theta)\,e^{i\omega\cdot x},
\qquad
a_{-\omega}(\theta)=\overline{a_\omega(\theta)}.
\label{eq:prelims_qfm_expansion}
\end{equation}
We refer to $[a_\omega(\theta)]_{\omega\in\Omega}$ as the trainable Fourier coefficients. The finite set $\Omega$ is determined by the encoder family and its repetition structure, and is independent of the choice of trainable blocks; see, e.g., \cite{schuld_effect_2021,perez-salinas_data_2020,casas_multidimensional_2023}.

\subsubsection{Commuting phase-encoder assumption}
Consider some input $x \in \mathbb{R}^d$ and assume each encoder block can be written as
\begin{equation}
S_\ell(x)=\exp\bigl(-i\,x\cdot G^{(\ell)}\bigr),
\qquad
G^{(\ell)}=\bigl(G^{(\ell)}_1,\dots,G^{(\ell)}_d\bigr),
\label{eq:prelims_encoder_diag}
\end{equation}
where the components of $G^{(\ell)}$ mutually commute within a block and hence admit a common eigenbasis with joint eigenvalues $\lambda^{(\ell)}_j = \left(\lambda^{(\ell)}_{1,j}, \dots, \lambda^{(\ell)}_{d,j}\right)\in\mathbb{R}^d$.

\subsubsection{Difference sets and the accessible frequency set}\label{subsubsec:difference_set}
Fix $\theta$ and consider a single encoder insertion in the Heisenberg picture. Expanding in the joint eigenbasis of $G^{(\ell)}$ gives
\begin{equation}
\Tr[A\,S_\ell(x)\,B\,S_\ell(x)^\dagger]
=\sum_{j,k} A_{kj}B_{jk}\,e^{-i\left(\lambda^{(\ell)}_j-\lambda^{(\ell)}_k\right)\cdot x},
\label{eq:prelims_single_layer_trace_expansion}
\end{equation}
so a single encoder layer contributes only frequencies in the difference set
\begin{equation}
\Omega^{(\ell)} := \{\lambda^{(\ell)}_j - \lambda^{(\ell)}_{k} : j,k\}.
\label{eq:prelims_diffset}
\end{equation}
For $L$ re-uploading layers, phases multiply and frequencies add, so the accessible set satisfies
\begin{equation}
\Omega \subseteq \Omega^{(1)} \oplus \cdots \oplus \Omega^{(L)},
\label{eq:prelims_minkowski_subset}
\end{equation}
where $\oplus$ denotes the Minkowski sum. In many standard encoder families (including the Pauli-rotation encoders used in our experiments), this containment is saturated and one has equality; see \cite{schuld_effect_2021,perez-salinas_data_2020} and Appendix~\ref{appendix:qfms} for details.

\subsubsection{Reality condition}
Since $f(x;\theta)\in\mathbb{R}$, the coefficients satisfy conjugate symmetry as in \eqref{eq:prelims_qfm_expansion}. We take $\Omega$ symmetric and treat half the spectrum as independent.

\subsection{Parameter harmonics for single-use Pauli-rotation ans\"atze}\label{subsec:param_harmonics}

We summarise the parameter-harmonic structure induced by trainable Pauli-rotation ans\"atze and its consequences for the coefficient functions $\{a_\omega(\theta)\}_{\omega\in\Omega}$. We index trainable rotations in layer $\ell$ by $r\in\{1,\dots,m_\ell\}$ with parameters $\theta^{(\ell)}=(\theta^{(\ell)}_1,\dots,\theta^{(\ell)}_{m_\ell})$, and write $\theta=(\theta_1,\dots,\theta_m)$ for the concatenation over layers, with $m=\sum_{\ell=1}^L m_\ell$. We take each trainable block to have the form
\begin{equation}
W_\ell(\theta^{(\ell)})
=\prod_{r=1}^{m_\ell}\exp\!\left(-\frac{i}{2}\theta^{(\ell)}_r\,P^{(\ell)}_r\right)\,U_\ell^{\mathrm{Cliff}},
\label{eq:prelims_trainable_block}
\end{equation}
where $P^{(\ell)}_r$ are strings of Pauli operators and $U_\ell^{\mathrm{Cliff}}$ is a fixed Clifford unitary (including entanglers). Heisenberg-picture tracking of Pauli operators through such circuits (often called Pauli propagation) is standard in stabilizer and simulation theory \cite{gottesman_heisenberg_1998,aaronson_improved_2004,rall_simulation_2019}.

\subsubsection{Heisenberg back-propagation through a single Pauli rotation}
For a single rotation $U_r(\theta_r)=\exp(-\tfrac{i}{2}\theta_r P_r)$ and \textit{any} Pauli string $Q$, conjugation yields the two-branch decomposition
\begin{align}
U_r&(\theta_r)^\dagger\, Q\, U_r(\theta_r)= \nonumber \\
&=
\begin{cases}
    Q, & \text{if}~[P_r,Q]=0,\\[2pt]
    Q\cos\theta_r + (i P_r Q)\sin\theta_r, & \text{if}~\{P_r,Q\}=0,
\end{cases}
\label{eq:prelims_pauli_conjugation_dichotomy}
\end{align}
using $P_r^2=\mathds{1}$ and that $\{P_r,Q\}=0$ implies $P_rQ$ is again a Pauli string up to phase. This identity underlies Pauli propagation rules and is a standard tool in Heisenberg-picture analyses \cite{gottesman_heisenberg_1998,rall_simulation_2019}. Commuting encounters contribute no $\theta_r$-dependence, while anti-commuting encounters generate a binary $\cos/\sin$ branching.

\subsubsection{Trigonometric-polynomial structure and character expansion on $\mathbb{T}^m$}
Iterating \eqref{eq:prelims_pauli_conjugation_dichotomy} through the circuit expresses each coefficient function $a_\omega(\theta)$ as a finite trigonometric polynomial in the parameters. Equivalently, from our analysis, we find that each $a_\omega(\theta)$ admits a finite Fourier (character) expansion on the parameter torus,
\begin{equation}
a_\omega(\theta)=\sum_{k\in K} C_{\omega k}\,e^{ik\cdot\theta},
\label{eq:prelims_param_fourier_generic}
\end{equation}
for some finite support set $K\subset\mathbb{Z}^m$ determined by the ans\"atz, observable, and the commutation/anti-commutation structure encountered under propagation. This is standard harmonic analysis on $m$ dimensional tori, $\mathbb{T}^m$, which we take our parameter space to be \cite{katznelson_introduction_2004, stein_fourier_2003}. We also give a constructive derivation via iterated Pauli propagation in Appendix~\ref{appendix:pauli_propagation}. Note that the use of notation here is atypical where the Fourier coefficients that we label $a_{\omega}(\theta)$ are often denoted $c_\omega(\theta)$ in other literature (e.g. \cite{wiedmann_fourier_2024, duffy_spectral_2026, strobl_fourier_2025}).

The constant character corresponds to $k=0$. By orthogonality of characters on $\mathbb{T}^m$, $\mathbb{E}_\theta[e^{ik\cdot\theta}]=0$ for $k\neq 0$ and equals $1$ for $k=0$, hence
\begin{equation}
\mathbb{E}_\theta[a_\omega(\theta)] = C_{\omega 0} \qquad\text{(when }0\in K\text{),}
\label{eq:prelims_mean_a_generic}
\end{equation}
and centring removes only the $k=0$ component. (We formalise this in Appendix~\ref{appendix:covariance_proofs}.)

\subsubsection{Single-use support restriction}
In the single-use regime, each parameter $\theta_r$ enters any trigonometric monomial with degree at most one. This also means that $\theta \in \Theta \cong \mathbb{T}^m$. Converting $\cos\theta_r$ and $\sin\theta_r$ to characters,
\[
\cos\theta_r=\tfrac12(e^{i\theta_r}+e^{-i\theta_r}),\qquad
\sin\theta_r=\tfrac{1}{2i}(e^{i\theta_r}-e^{-i\theta_r}),
\]
shows that each coordinate $k_r$ can only take values in $\{-1,0,1\}$. Hence
\begin{equation}
K \subset \{-1,0,1\}^m.
\label{eq:prelims_k_support}
\end{equation}
If parameters are reused, higher degrees appear and this restriction relaxes (Appendix~\ref{appendix:pauli_propagation}).

\subsubsection{Branch growth and effective harmonic support}
Each effective anti-commuting encounter introduces a $\cos/\sin$ branch and can increase the number of trigonometric monomials by a factor of two. After conversion to characters, a term involving $|A|$ anti-commuting (`active') parameters generates up to $2^{|A|}$ character contributions. This branch-growth description is used to argue how circuit depth affects the number of contributing $k$ modes in Appendix~\ref{appendix:k_mode_scaling}.

\subsection{Parameter averaging and centring conventions}\label{subsec:centering}

We will frequently study statistics obtained by averaging over random parameter initialisations $\theta\sim\mathrm{Unif}(\mathbb{T}^m)$, and (separately) empirical averages over input samples. To avoid ambiguity, we fix the following conventions.

\subsubsection{Averaging operators}
For any integrable function $g(\theta)$ on the parameter torus, we write
\begin{equation}
\mathbb{E}_\theta[g(\theta)]
:=\int_{\mathbb{T}^m} g(\theta)\,\mathrm{d}\mu(\theta)\, ,
\label{eq:prelims_Etheta_def}
\end{equation}
where the Haar measure for an $m$-torus is simply,
\begin{equation}
    \label{eq:DefMeasureOnTorus}
    \mathrm{d}\mu(\theta) = \prod_{i=1}^m \frac{\mathrm{d}\theta_i}{2\pi}\, ,
\end{equation}
and we denote the covariance of two integrable functions, $g(\theta)$ and $h(\theta)$, by
\begin{equation}
\mathrm{Cov}_\theta[g,h]
:=\mathbb{E}_\theta\!\bigl[(g-\mathbb{E}_\theta g)\,(\overline{h-\mathbb{E}_\theta h})\bigr].
\label{eq:prelims_cov_def}
\end{equation}
When working with a finite training set $\{x_i\}_{i=1}^N$, we use the empirical average
\begin{equation}
\mathbb{E}_{\mathrm{data}}[\phi(x)]
:=\frac{1}{N}\sum_{i=1}^N \phi(x_i),
\label{eq:prelims_Edata_def}
\end{equation}
and similarly for empirical covariances. We will indicate explicitly when a quantity is averaged over $\theta$ versus over data.

\subsubsection{Centring of Fourier coefficients}
Recall the coefficient vector $a(\theta)\in\mathbb{C}^{|\Omega|}$ with components $a_\omega(\theta)$. We define the centred coefficients by
\begin{equation}
\label{eq:centered_a_def}
\tilde a(\theta):=a(\theta)-\mathbb{E}_\theta[a(\theta)]\, ,
\end{equation}
or, writing explicitly the components of the vector, 
\begin{equation}
\tilde a_\omega(\theta):=a_\omega(\theta)-\mathbb{E}_\theta[a_\omega(\theta)].
\label{eq:prelims_centered_a_def}
\end{equation}
If $a_\omega(\theta)$ admits a finite character expansion on $\mathbb{T}^m$,
\begin{equation}
a_\omega(\theta)=\sum_{k\in K} C_{\omega k}\,e^{ik\cdot\theta},
\label{eq:prelims_param_fourier_generic_repeat}
\end{equation}
then orthogonality of characters implies $\mathbb{E}_\theta[e^{ik\cdot\theta}]=0$ for $k\neq 0$ and $\mathbb{E}_\theta[e^{i0\cdot\theta}]=1$ for $k=0$ which is standard harmonic analysis on tori \cite{katznelson_introduction_2004, stein_fourier_2003}. Consequently,
\begin{align}
\mathbb{E}_\theta[a_\omega(\theta)]&=C_{\omega 0}\, , &\text{when }0\in K,\\
\mathbb{E}_\theta[a_\omega(\theta)]&=0\, , &\text{when }0\notin K,
\label{eq:prelims_mean_a_generic}
\end{align}
so centring removes only the constant ($k=0$) character component.

\subsubsection{Centring of feature and gradient statistics}
Many learning-relevant quantities are built from more general feature vectors such as coefficient features $[a_\omega(\theta)]_{\omega \in \Omega}$, character features $[\psi_k(\theta)]_{k \in K}$, or empirical model-evaluation features $[F_i(\theta)]^N_{i=1}$ over a fixed dataset $\{x_i\}^N_{i=0}$, as well as Jacobian (gradient) features such as $[\partial_{\theta_a} a_\omega(\theta)]^m_{a=1}$ for fixed $\omega$ (e.g.\ tangent-kernel constructions; see \cite{liu_representation_2022} for the QNTK setting). For any general feature vector $\phi(\theta)$ we define the centred version $\tilde\phi(\theta)=\phi(\theta)-\mathbb{E}_\theta[\phi(\theta)]$ and note the standard second-moment decomposition
\begin{equation}
\mathbb{E}_\theta[\phi(\theta)\phi(\theta)^\dagger]
=\mathbb{E}_\theta[\tilde\phi(\theta)\tilde\phi(\theta)^\dagger]
+\mathbb{E}_\theta[\phi(\theta)]\,\mathbb{E}_\theta[\phi(\theta)]^\dagger.
\label{eq:prelims_second_moment_decomp}
\end{equation}
In particular, for the torus characters $\psi_k(\theta):=e^{ik\cdot\theta}$ one has $\mathbb{E}_\theta[\psi_k(\theta)]=0$ for $k\neq 0$ and $\mathbb{E}_\theta[\psi_0(\theta)]=1$ by character orthogonality \cite{katznelson_introduction_2004}. Moreover, whenever $k_a=0$ the character $\psi_k(\theta)$ is independent of $\theta_a$ and contributes nothing to $\partial_{\theta_a}$; consequently, gradient features automatically discard the $k_a=0$ coordinates and are typically mean-zero under uniform $\theta$.

\subsection{Notation summary}
For convenience, Table~\ref{tab:notation_prelims} in Appendix \ref{appendix:notation} summarises the principal objects introduced in Section~\ref{sec:prelims} and the conventions used throughout the paper.

\section{Variational Fourier Matrices}\label{sec:interaction_matrix}

This section introduces the architecture-level object that couples encoder-accessible input harmonics to trainer-induced parameter harmonics. Throughout, $\Omega$ denotes the encoder-accessible input-frequency set from the QFM expansion (Section~\ref{subsec:qfm_input}), and $K \subset\{-1,0,1\}^m$ denotes the finite parameter-harmonic support induced by the ans\"atz and measurement (Section~\ref{subsec:param_harmonics}). The central object is a matrix $C\in\mathbb{C}^{|\Omega|\times|K|}$ of joint input--parameter Fourier coefficients, which depends only on the circuit architecture (encoder, ans\"atz, measurement, and $(\rho,O)$) and is independent of the dataset and of the parameter values visited during training.

\subsection{Definition as joint Fourier coefficients}\label{subsec:C_def}

Starting from the quantum Fourier model (QFM) input-harmonic expansion, for each fixed $\theta$ one has
\begin{equation}
f(x;\theta)=\sum_{\omega\in\Omega} a_\omega(\theta)\,e^{i\omega\cdot x},
\label{eq:C_qfm}
\end{equation}
with $\Omega$ determined by the encoder family and its re-uploading structure \cite{schuld_effect_2021,perez-salinas_data_2020,casas_multidimensional_2023}. For the single-use Pauli-rotation ans\"atze considered here, each trainable coefficient function $a_\omega(\theta)$ is a finite trigonometric polynomial in $\theta$ and hence admits a finite Fourier (character) expansion on the parameter torus $\mathbb{T}^m$ (Section~\ref{subsec:param_harmonics}),
\begin{equation}
a_\omega(\theta)=\sum_{k\in K} C_{\omega k}\,e^{ik\cdot\theta}.
\label{eq:C_param_series}
\end{equation}
Here $K$ is a finite support set (with $K\subset\{-1,0,1\}^m$ in the single-use regime--generally $K\subset\mathbb{Z}^m$ for non-independent parameters) generated by the commutation/anti-commutation structure encountered under Heisenberg back-propagation (Appendix~\ref{appendix:pauli_propagation}).

\subsubsection{Fourier coefficient formula and orthogonality}
Equivalently, $C_{\omega k}$ are the Fourier coefficients of $a_\omega(\theta)$ on $\mathbb{T}^m$:
\begin{equation}
C_{\omega k}
=
\int_{\mathbb{T}^m} \mathrm{d}\mu(\theta)\, a_\omega(\theta)\,e^{-ik\cdot\theta}\, ,
\label{eq:C_fourier_coeff}
\end{equation}
using the standard orthogonality of characters $e^{ik\cdot\theta}$ on $\mathbb{T}^m$ \cite{katznelson_introduction_2004, stein_fourier_2003}. In particular, the constant component is $k=0$, and $\mathbb{E}_\theta[a_\omega(\theta)]=C_{\omega 0}$ when $0\in K$ (Appendix~\ref{appendix:covariance_proofs}).

\subsubsection{Architecture-resolving factorisation}
Collecting the coefficients into $a(\theta)\in\mathbb{C}^{|\Omega|}$ with entries $a_\omega(\theta)$, and defining the character feature map $\psi(\theta)\in\mathbb{C}^{|K|}$ by $\psi_k(\theta)=e^{ik\cdot\theta}$, then \eqref{eq:C_param_series} becomes the factorisation
\begin{equation}
a(\theta)=C\,\psi(\theta),
\qquad
\psi_k(\theta)=e^{ik\cdot\theta}.
\label{eq:C_aCpsi}
\end{equation}
We refer to $C$ as the \emph{circuit harmonic matrix}. Its structure is determined by the encoder family and trainable block architecture jointly, via the encoder-accessible frequency set $\Omega$ and the parameter-harmonic support $K$ respectively. The values of those entries depend additionally on the observable 
$O$ and input state $\rho$, which set the amplitude of each encoder--trainer coupling through the branch prefactors arising in Heisenberg back-propagation (Appendix~\ref{appendix:pauli_propagation}). It is independent of the dataset and of the particular parameter values visited during training.

\subsubsection{Joint harmonic representation}
Combining \eqref{eq:C_qfm} and \eqref{eq:C_param_series} yields the joint harmonic representation
\begin{equation}
f(x;\theta)=\sum_{\omega\in\Omega}\sum_{k\in K} C_{\omega k}\,e^{i\omega\cdot x}\,e^{ik\cdot\theta},
\label{eq:C_joint_harmonic}
\end{equation}
so $C$ is the matrix of joint coefficients for the product basis $\{e^{i\omega\cdot x}e^{ik\cdot\theta}\}$.

\subsection{Definition as encoding path amplitudes}\label{subsec:C_paths}

A complementary interpretation of $C$ is obtained by grouping contributions to each input frequency $\omega$ according to \emph{encoder-induced paths}. Under commuting phase encoders, each layer contributes a difference frequency drawn from its difference set, and re-uploading accumulates one difference per encoder insertion (Section~\ref{subsec:qfm_input}; Appendix~\ref{appendix:qfms}; see also \cite{schuld_effect_2021,perez-salinas_data_2020}). Accordingly, define the path set contributing to a particular frequency $\omega$ as
\begin{align}
R(\omega):= \Bigl\{p=&(\delta^{(1)},\dots,\delta^{(L)}) \in \Omega^{(1)}\times\cdots\times\Omega^{(L)}: \nonumber \\
    & : \sum_{\ell=1}^L \delta^{(\ell)}=\omega \Bigr\}.
\label{eq:C_Romega_def}
\end{align}
For fixed $\theta$, the coefficient $a_\omega(\theta)$ can be written as a sum over the amplitudes accumulated on each path
\begin{equation}
a_\omega(\theta)=\sum_{p\in R(\omega)} A_p(\theta),
\label{eq:C_path_sum}
\end{equation}
where $A_p(\theta)$ depends only on the trainable blocks and on Heisenberg back-propagation of $O$ along the branch specified by $p$; all $x$-dependence is carried by the encoder phases and is fixed by the choice of $p$.

\subsubsection{Path-amplitude harmonics and aggregation}
Under single-use Pauli rotations, each path amplitude is itself a finite Fourier series on $\mathbb{T}^m$,
\begin{equation}
A_p(\theta)=\sum_{k\in K} c_{p,k}\,e^{ik\cdot\theta},
\label{eq:path_amp_fourier}
\end{equation}
where $K\subset\{-1,0,1\}^m$ arises from selecting the constant, positive-frequency, or negative-frequency character term generated by each non-commuting rotation encountered along the branch (Section~\ref{subsec:param_harmonics}). Substituting \eqref{eq:path_amp_fourier} into \eqref{eq:C_path_sum} and collecting character coefficients yields
\begin{equation}
C_{\omega k}=\sum_{p\in R(\omega)} c_{p,k}.
\label{eq:C_path_aggregation}
\end{equation}
This representation makes the meaning of rows and columns explicit: columns index parameter-harmonic patterns $k$, while each row aggregates all encoder pathways contributing to a given learnable input harmonic $\omega$. The size and structure of $R(\omega)$ connects directly to encoder redundancy and frequency coupling phenomena studied in QFM expressivity analyses \cite{mhiri_constrained_2025}.

\subsection{Definition from Pauli-propagation nodes}\label{subsec:C_pauli_nodes}

The same joint coefficients can be constructed directly from an explicit Heisenberg back-propagation expansion through the full circuit. Conjugation of a Pauli string through a trainable Pauli rotation either leaves it unchanged (commuting case) or produces a two-term $\cos/\sin$ branching (anti-commuting case), cf.\ Eq.~\eqref{eq:prelims_pauli_conjugation_dichotomy}. Iterating this rule yields a finite sum over branches/nodes in a propagation tree, a standard viewpoint in Pauli propagation and stabilizer/simulation methods \cite{gottesman_heisenberg_1998,aaronson_improved_2004,rall_simulation_2019}. For a fuller derivation see Appendix \ref{appendix:pauli_propagation}.

\subsubsection{Node expansion and separability}
We write the resulting expansion in the separable form
\begin{equation}
f(x;\theta)
=
\sum_{\nu\in\mathcal{N}} d_\nu\; N_\nu(x)\; M_\nu(\theta),
\label{eq:node_expansion_fx}
\end{equation}
where $\nu$ indexes nodes (branches), $d_\nu\in\mathbb{C}$ is an $x$- and $\theta$-independent scalar determined by operator algebra along the branch, $N_\nu(x)$ is a product of encoder-induced trigonometric factors, and $M_\nu(\theta)$ is a product of trainable trigonometric factors.

Fourier expanding each node factor,
\begin{align}
N_\nu(x)&=\sum_{\omega\in\Omega} \widehat N_\nu(\omega)\,e^{i\omega\cdot x},
\label{eq:C_node_N_FT}\\
M_\nu(\theta)&=\sum_{k\in K} \widehat M_\nu(k)\,e^{ik\cdot\theta},
\label{eq:C_node_M_FT}
\end{align}
and collecting coefficients of $e^{i\omega\cdot x}e^{ik\cdot\theta}$ in \eqref{eq:node_expansion_fx} gives
\begin{equation}
C_{\omega k}
=
\sum_{\nu\in\mathcal{N}} d_\nu\;\widehat N_\nu(\omega)\;\widehat M_\nu(k).
\label{eq:C_node_factorisation}
\end{equation}
Equation \eqref{eq:C_node_factorisation} makes explicit how parameter-harmonic support is generated by non-commuting gate--observable interactions: if the back-propagated Pauli string commutes with a rotation along a branch, then the corresponding factor contributes no $\theta$-dependence and $\widehat M_\nu(k)$ is supported only on $k_r=0$ for that parameter.

\subsection{Equivalence of constructions}\label{subsec:C_equivalence}

The definitions \eqref{eq:C_fourier_coeff}, \eqref{eq:C_path_aggregation}, and \eqref{eq:C_node_factorisation} are equivalent. Equation \eqref{eq:C_fourier_coeff} is the coordinate definition in the character basis on $\mathbb{T}^m$; \eqref{eq:C_path_aggregation} regroups contributions by encoder frequency pathways; and \eqref{eq:C_node_factorisation} realises the same coefficients via explicit Pauli propagation. Which form is most convenient depends on whether one wishes to reason about encoder frequency combinatorics (e.g.\ redundancy and constrained expressivity \cite{mhiri_constrained_2025}) or about trainer-induced branching and spectral-bias mechanisms \cite{wiedmann_fourier_2024,duffy_spectral_2026}.


\section{Coefficient statistics from the interaction matrix}\label{sec:coeff_stats}

This section shows that second-order statistics of the trainable Fourier coefficients $[a_\omega(\theta)]_{\omega\in\Omega}$ are explicit algebraic functions of the interaction matrix $C$. Under uniform sampling $\theta\sim\mathrm{Unif}(\mathbb{T}^m)$, the population covariance of centred coefficients is a row Gram matrix, variances are row energies, and Pearson normalisation yields a frequency--frequency correlation structure (known as the Fourier fingerprint in \cite{strobl_fourier_2025}) determined entirely by $C$. These identities provide architecture-revealing diagnostics of coefficient coupling structure that are, in principle, computable from the circuit design, prior to any training. How much of these couplings survive into actual training and how much they augment or restrict learned correlations in data is left as future work.

\subsection{Centring and second moments}\label{subsec:centering_moments}

Recall the coefficient-space representation
\begin{equation}
a(\theta)=C\,\psi(\theta),\qquad \psi_k(\theta)=e^{ik\cdot\theta},
\label{eq:coeffstats_aCpsi}
\end{equation}
with $C\in\mathbb{C}^{|\Omega|\times |K|}$. Expectations are taken with respect to the uniform measure on $\mathbb{T}^m$ (Section~\ref{subsec:centering}). We define the centred coefficients
\begin{equation}
\tilde a(\theta) := a(\theta)-\mathbb{E}_\theta[a(\theta)].
\label{eq:coeffstats_centered_a}
\end{equation}

\subsubsection{Character orthogonality and the $k=0$ mode}
When $0\in K$, character orthogonality on $\mathbb{T}^m$ implies $\mathbb{E}_\theta[\psi_k]=0$ for all $k\neq 0$ and $\mathbb{E}_\theta[\psi_0]=1$ which is standard harmonic analysis on compact abelian groups (since $\mathbb{T}^m \cong U(1)^m$) \cite{katznelson_introduction_2004, stein_fourier_2003}. Hence $\mathbb{E}_\theta[a(\theta)]=C_{\bullet 0}$ and centring removes exactly the $k=0$ contribution. Here $\bullet$ indicates the $\omega$ index is vectorised over the whole spectrum $\Omega$, so $C_{\bullet 0}\in\mathbb{C}^{|\Omega|}$ is the column of $k=0$ coefficients across all $\omega$.

It is convenient to encode removal of the constant mode using the diagonal projector
\begin{equation}
P:=\mathrm{diag}(\mathbf{1}_{k\neq 0})\in\mathbb{R}^{|K|\times|K|},
\label{eq:coeffstats_P_def}
\end{equation}
so that (whether or not $0\in K$) the centred second moment may be expressed using $P$.

\subsection{Mean, second moment, and covariance as row Gram matrices}\label{subsec:covariance_CCdag}

The basic identities are direct consequences of character orthogonality (Appendix~\ref{appendix:covariance_proofs}). We state them in a form that makes the role of centring explicit.

\begin{proposition}[Coefficient moments from $C$]\label{prop:coeff_moments_from_C}
Let $a_\omega(\theta)=\sum_{k\in K} C_{\omega k}e^{ik\cdot\theta}$ and $\theta\sim\mathrm{Unif}(\mathbb{T}^m)$.
Then:
\begin{align}
\mathbb{E}_\theta[a(\theta)] &= C_{\bullet 0}, 
\label{eq:coeffstats_mean_vec}\\
\mathbb{E}_\theta\!\left[a(\theta)a(\theta)^\dagger\right] &= C C^\dagger,
\label{eq:coeffstats_second_moment_CCdag}\\
\mathrm{Cov}\!\left[a(\theta), a(\theta)^\dagger\right]
:=\mathbb{E}_\theta\!\left[\tilde a(\theta)\tilde a(\theta)^\dagger\right]
&= C P C^\dagger.
\label{eq:coeffstats_cov_CPCdag}
\end{align}
In components, for $\omega,\mu\in\Omega$,
\begin{equation}
\mathrm{Cov}\!\left[a(\theta), a(\theta)^\dagger\right]_{\omega\mu}
=
\sum_{k\in K\setminus\{0\}} C_{\omega k}\,\overline{C_{\mu k}}.
\label{eq:coeffstats_cov_components}
\end{equation}
\end{proposition}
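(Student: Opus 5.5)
The plan is to work entirely with the factorisation $a(\theta)=C\psi(\theta)$ from \eqref{eq:C_aCpsi}, so that every moment of $a$ becomes a moment of the character vector $\psi$ sandwiched between $C$ and $C^\dagger$. Since $C$ is $\theta$-independent and $K$ is finite, linearity of $\mathbb{E}_\theta$ lets me write
\[
\mathbb{E}_\theta[a] = C\,\mathbb{E}_\theta[\psi],\qquad
\mathbb{E}_\theta[aa^\dagger]=C\,\mathbb{E}_\theta[\psi\psi^\dagger]\,C^\dagger .
\]
No integrability issues arise, because all entries are bounded trigonometric polynomials.

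The key computation is the Gram matrix of characters under the Haar measure \eqref{eq:DefMeasureOnTorus}. For $k,k'\in K$,
\[
\mathbb{E}_\theta[\psi_k\overline{\psi_{k'}}]=\prod_{i=1}^m\int_0^{2\pi}\frac{\mathrm d\theta_i}{2\pi}\,e^{i(k_i-k'_i)\theta_i}=\delta_{kk'} .
\]
This holds because each one-dimensional integral vanishes unless $k_i=k'_i$. Hence $\mathbb{E}_\theta[\psi\psi^\dagger]=\mathds{1}_{|K|}$, which gives \eqref{eq:coeffstats_second_moment_CCdag}. Taking $k'=0$ in the same computation gives $\mathbb{E}_\theta[\psi_k]=\delta_{k0}$, which yields \eqref{eq:coeffstats_mean_vec}. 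When $0\notin K$ the mean is the zero vector, and I will read $C_{\bullet0}$ as $0$ in that case, in keeping with the convention of \eqref{eq:prelims_mean_a_generic}.

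For the covariance, I will use the decomposition \eqref{eq:prelims_second_moment_decomp} applied to $\phi=a$:
\[
\mathbb{E}_\theta[\tilde a\tilde a^\dagger]=CC^\dagger - C_{\bullet0}C_{\bullet0}^\dagger .
\]
I then observe that $C_{\bullet0}C_{\bullet0}^\dagger = C(\mathds{1}-P)C^\dagger$, since $\mathds{1}-P=e_0e_0^\top$ is the projector onto the $k=0$ coordinate, or is zero if $0\notin K$. Subtracting gives $CPC^\dagger$.

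An equivalent and slightly cleaner route is to note directly that $\tilde a=C(\psi-\mathbb{E}\psi)=CP\psi$, since $\mathbb{E}\psi=e_0$. Then $\mathbb{E}[\tilde a\tilde a^\dagger]=CP\,\mathbb{E}[\psi\psi^\dagger]\,P^\dagger C^\dagger=CP^2C^\dagger=CPC^\dagger$, using $P^\dagger=P=P^2$. Reading off the $(\omega,\mu)$ entry, $\sum_k C_{\omega k}P_{kk}\overline{C_{\mu k}}$, gives \eqref{eq:coeffstats_cov_components}.

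The main obstacle is bookkeeping rather than analysis. I must check that the covariance convention in \eqref{eq:prelims_cov_def}, which conjugates the second argument, matches $\tilde a\tilde a^\dagger$ entrywise. I must also handle the $0\notin K$ case uniformly through $P$, and confirm that the finite support $K$ permits the interchange of sum and integral.
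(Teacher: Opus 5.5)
Your proposal is correct and takes essentially the same route as the paper's proof in Appendix~\ref{appendix:covariance_proofs}. There, character orthogonality (Lemma~\ref{lem:app_char_orth}) gives $\mathbb{E}_\theta[\psi_k]=\delta_{k0}$ and $\mathbb{E}_\theta[\psi\psi^\dagger]=\mathds{1}$, and the covariance follows from $\tilde a = CP\psi$, just as in your ``cleaner route''; your alternative via $C_{\bullet 0}C_{\bullet 0}^\dagger = C(\mathds{1}-P)C^\dagger$ is an equally valid minor variant.
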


\begin{proof}[Proof sketch]
Write $\tilde a(\theta)=\sum_{k\neq 0} C_{\bullet k}e^{ik\cdot\theta}$ and use orthogonality $\mathbb{E}_\theta[e^{i(k-l)\cdot\theta}]=\delta_{k l}$ to obtain $\mathbb{E}_\theta[\tilde a\,\tilde a^\dagger]=\sum_{k\neq 0} C_{\bullet k}C_{\bullet k}^\dagger=CPC^\dagger$. Full details are in Appendix~\ref{appendix:covariance_proofs}.
\end{proof}

\subsection{Coefficient variances, row energies, and Pearson correlation matrices}\label{subsec:variance_correlation}

The diagonal entries of \eqref{eq:coeffstats_cov_CPCdag} give the coefficient variances:
\begin{equation}
\mathrm{Var}\!\left[a_\omega(\theta)\right]
=
\mathrm{Cov}\!\left[a(\theta), a(\theta)^\dagger\right]_{\omega\omega}
=
\sum_{k\in K\setminus\{0\}} |C_{\omega k}|^2.
\label{eq:coeffstats_variance}
\end{equation}

\subsubsection{Parseval/Plancherel and row energies}
More generally, Parseval's identity (equivalently, Plancherel on $L^2(\mathbb{T}^m)$) yields the squared-magnitude second moment \cite{katznelson_introduction_2004, stein_fourier_2003}
\begin{equation}
\mathbb{E}_\theta\!\left[|a_\omega(\theta)|^2\right]
=\sum_{k\in K} |C_{\omega k}|^2
= |C_{\omega 0}|^2 + \sum_{k\neq 0}|C_{\omega k}|^2.
\label{eq:coeffstats_parseval}
\end{equation}
This motivates the \emph{row energy} of $C$,
\begin{equation}
E_\omega:=\sum_{k\in K} |C_{\omega k}|^2
= |\,\mathbb{E}_\theta[a_\omega(\theta)]\,|^2 + \mathrm{Var}\!\left[a_\omega(\theta)\right],
\label{eq:coeffstats_row_energy}
\end{equation}
so that, after centring (or when $0\notin K$), $E_\omega$ coincides with the variance.

\subsubsection{Pearson normalisation and correlation as a Gram matrix}
To isolate \emph{frequency--frequency coupling structure} independent of marginal scale, define the population Pearson correlation matrix by
\begin{align}
\mathrm{Corr}\!\left[a(\theta),a(\theta)^\dagger\right]
&=
D^{-1/2}\,\mathrm{Cov}\!\left[a(\theta), a(\theta)^\dagger\right]\,D^{-1/2} \nonumber \\
&=
D^{-1/2}\,(CPC^\dagger)\,D^{-1/2}\, ,
\label{eq:coeffstats_corr_def}
\end{align}
where $D\in\mathbb{R}^{|\Omega|\times|\Omega|}$ is the diagonal matrix of variances,
\begin{equation}
D:=\mathrm{diag}\left(\mathrm{Var}[a_\omega(\theta)]\right)_{\omega\in\Omega}.
\label{eq:coeffstats_corr_D}
\end{equation}
Equivalently, with the row-normalised, centred matrix
\begin{equation}
\tilde C := D^{-1/2} C P,
\label{eq:coeffstats_C_hat}
\end{equation}
one has $\mathrm{Corr}[a(\theta), a(\theta)^\dagger]=\tilde C\,\tilde C^\dagger$, which is therefore a row Gram matrix. This identifies $\mathrm{Corr}[a(\theta), a(\theta)^\dagger]$ as an architecture-induced coefficient correlation structure: off-diagonal entries are non-zero precisely when the corresponding rows of $C$ overlap coherently on a shared set of parameter harmonics $k$ for $k\neq0$ contributing to $a_\omega(\theta)$. Empirically, this matrix has been used to compare ans\"atze and predict performance prior to training \cite{strobl_fourier_2025}.

\subsection{Expressivity, redundancy, and architectural coupling}\label{subsec:expressivity_redundancy}

The interaction matrix clarifies how encoder-induced redundancy and trainer-induced mode support combine to determine coefficient statistics.

\subsubsection{Encoder-side redundancy and variance scaling}
For a fixed encoder, each $\omega\in\Omega$ can arise from multiple layer-wise spectral choices; let $R(\omega)$ denote the set of encoder-induced paths producing $\omega$ (Section~\ref{subsec:C_paths} and Appendix~\ref{appendix:qfms}). The accessible spectrum and its construction via difference sets and Minkowski sums is standard in the QFM literature \cite{schuld_effect_2021,perez-salinas_data_2020,casas_multidimensional_2023,mhiri_constrained_2025}. Using the path aggregation rule \eqref{eq:C_path_aggregation}, the variance can be expanded as
\begin{equation}
\mathrm{Var}\!\left[a_\omega(\theta)\right]
=
\sum_{k\neq 0}\Big|\sum_{p\in R(\omega)} c_{p,k}\Big|^2.
\label{eq:coeffstats_variance_paths}
\end{equation}
Thus redundancy can increase variance via constructive accumulation across paths, or suppress it via cancellations. In regimes where the path contributions are sufficiently incoherent (e.g.\ when trainable blocks are highly mixing or approximate unitary-design behaviour), cross terms in \eqref{eq:coeffstats_variance_paths} are expected to average out, suggesting an approximate scaling
\begin{equation}
\mathrm{Var}\!\left[a_\omega(\theta)\right]\ \propto\ |R(\omega)|,
\label{eq:coeffstats_var_redundancy_scaling}
\end{equation}
up to architecture-dependent normalisation and centring (captured by $|C_{\omega 0}|^2$ in \eqref{eq:coeffstats_row_energy}). This provides a route by which encoder combinatorics induces spectral bias: encoders with redundancy profiles peaked at low $\|\omega\|$ may assign larger row energies $E_\omega$ from $C$ to low-frequency coefficients. Exploring how this affects spectral bias in training is left to future work, however, currently spectral bias analysis in variational quantum circuits can be found in \cite{duffy_spectral_2026}.\\

\subsubsection{Trainer-side coupling}
Similarly, coefficient correlations are controlled by overlap of $k$-support. From \eqref{eq:coeffstats_cov_components}, $\mathrm{Cov}[a]_{\omega\mu}$ is large when the rows $C_{\omega,\bullet}$ and $C_{\mu,\bullet}$ have aligned phases on a shared set of centred harmonics.

\section{Training Kernels}\label{sec:kernels}

This section formalises learning kernels as geometric objects derived from the circuit harmonic matrix. We introduce a generalised kernel tensor that resolves both data and parameter indices and show that conventional objects, such as the empirical quantum neural tangent kernel (QNTK), arise by contracting parameter indices. This makes explicit which aspects of kernel structure are architectural (encoded by $C$ and differentiation weights in $k$) and which arise from data projection (encoded by the design matrix $V$). Kernel-based descriptions of learning dynamics in parametrised models are standard in classical NTK literature \cite{jacot_neural_2018} and have been extended into the quantum setting for PQCs in particular \cite{liu_representation_2022, shirai_quantum_2024}.

\subsection{Generalised learning kernels}\label{subsec:generalisedkernels}

Let $\mathcal{X}=\{x_1,\dots,x_N\}$ be a finite input set and define the empirical model vector $F(\theta)=[f(x_i;\theta)]_{i=1}^N\in\mathbb{R}^N$. The \emph{generalised empirical learning kernel} is the rank-$4$ tensor
\begin{equation}
\mathcal{G}_{ia,jb}(\theta)
:=
\frac{\partial f(x_i;\theta)}{\partial\theta_a}\,
\overline{\frac{\partial f(x_j;\theta)}{\partial\theta_b}},
\label{eq:kernels_G_def}
\end{equation}
which resolves both data indices $(i,j)$ and parameter indices $(a,b)$. Intuitively, $\mathcal{G}(\theta)$ captures the full covariance structure of model gradients jointly over data points and parameter directions; conventional kernel matrices arise by contracting (summing over) one pair 
of these indices. For example, the empirical QNTK is obtained by summing over the parameter indices (cf.\ Section~\ref{subsec:qntk}), while parameter-space sensitivity matrices arise from contracting over data indices. Such constructions are standard for Jacobian-feature models and Gram kernels \cite{rasmussen_gaussian_2005, jacot_neural_2018}.

\subsection{Internal kernel factorisation via the design matrix}\label{subsec:kernelsfromCmatrices}

Using the quantum Fourier model representation
\begin{equation}
f(x;\theta)=\sum_{\omega\in\Omega} a_\omega(\theta)\,e^{i\omega\cdot x},
\label{eq:kernels_qfm}
\end{equation}
(Section~\ref{subsec:qfm_input}; see also \cite{schuld_effect_2021,perez-salinas_data_2020,casas_multidimensional_2023}),
with the empirical model evaluated on a discrete input lattice,
\begin{equation}
    F_i(\theta) := f(x_i;\theta) = \sum_{\omega \in \Omega}a_\omega(\theta)e^{i\omega \cdot x_i}.
\end{equation}
As a vector, this then factorises as
\begin{equation}
F(\theta)=V\,a(\theta),
\qquad
V_{i\omega}=e^{i\omega\cdot x_i},
\label{eq:kernels_FVa}
\end{equation}
where $V\in\mathbb{C}^{N\times|\Omega|}$ is the design matrix projecting the internal QFM onto the discrete input lattice and $a(\theta)\in\mathbb{C}^{|\Omega|}$ is the vector of trainable Fourier coefficients.

Let $X(\theta)\in\mathbb{C}^{|\Omega|\times m}$ be the coefficient-space Jacobian,
\begin{equation}
X_{\omega a}(\theta):=\partial_{\theta_a}a_\omega(\theta).
\label{eq:kernels_X_def}
\end{equation}
Differentiating \eqref{eq:kernels_FVa} yields the empirical Jacobian $J(\theta)=\nabla_\theta F(\theta) = V \nabla_\theta a(\theta) =V\,X(\theta)$, so the generalised empirical kernel can be written as the projection
\begin{equation}
\mathcal{G}_{ia,jb}(\theta)
=
\sum_{\omega,\mu\in\Omega}
V_{i\omega}\,
\mathcal{T}_{\omega a,\mu b}(\theta)\,
\overline{V_{j\mu}},
\label{eq:kernels_G_VTV}
\end{equation}
where we define the \emph{generalised internal learning kernel} as
\begin{equation}
\mathcal{T}_{\omega a,\mu b}(\theta)
:=
\frac{\partial a_\omega(\theta)}{\partial\theta_a}\,
\overline{\frac{\partial a_\mu(\theta)}{\partial\theta_b}}
=
X_{\omega a}(\theta)\,\overline{X_{\mu b}(\theta)}.
\label{eq:kernels_T_def}
\end{equation}
Thus all empirical learning kernels are data-dependent projections of the same internal object $\mathcal{T}(\theta)$ through $V$.

\subsection{Internal kernels as explicit functions of $C$}\label{subsec:T_from_C}

Recall the interaction-matrix representation (Section~\ref{sec:interaction_matrix})
\begin{equation}
a(\theta)=C\,\psi(\theta),\qquad \psi_k(\theta)=e^{ik\cdot\theta}.
\label{eq:kernels_aCpsi}
\end{equation}
As we have seen, the joint coefficients $C_{\omega k}$ may be defined as Fourier coefficients on $\mathbb{T}^m$ or constructed explicitly by (i) encoder path aggregation and (ii) Pauli-propagation node expansions (Section~\ref{sec:interaction_matrix}; Appendices~\ref{appendix:qfms} and \ref{appendix:pauli_propagation}). Differentiating the character expansion gives
\begin{equation}
\partial_{\theta_a}a_\omega(\theta)
=
\sum_{k\in K} C_{\omega k}\,(ik_a)\,e^{ik\cdot\theta},
\label{eq:kernels_d_aomega}
\end{equation}
so substituting \eqref{eq:kernels_d_aomega} into \eqref{eq:kernels_T_def} yields the joint-harmonic expansion
\begin{equation}
\mathcal{T}_{\omega a,\mu b}(\theta)
=
\sum_{k,l\in K}
C_{\omega k}\,\overline{C_{\mu l}}\,
k_al_b\,
e^{i(k-l)\cdot\theta}.
\label{eq:kernels_T_expansion}
\end{equation}
Hence the full local learning response is determined by quadratic combinations of $C$ together with harmonic weights inherited from differentiation (the $k_a$ factors).

A compact matrix form is obtained by defining diagonal matrices $B_a\in\mathbb{C}^{|K|\times|K|}$ with
$(B_a)_{kk}=ik_a$. Then
\begin{align}
\partial_{\theta_a}a(\theta)&=C\,B_a\,\psi(\theta),\\
\mathcal{T}_{\bullet a,\bullet b}(\theta)
&=
\big(CB_a\psi(\theta)\big)\big(CB_b\psi(\theta)\big)^\dagger,
\label{eq:kernels_T_matrix_form}
\end{align}
so, for fixed $(a,b)$, the slice $\mathcal{T}_{\bullet a,\bullet b}(\theta)$ is an outer product in coefficient space, and internal kernel objects are Gram matrices built from the Jacobian feature vectors $[\partial_{\theta_a}a(\theta)]_{a=1}^m$ (cf.\ standard random-feature / Gram-kernel constructions, e.g. \cite{rasmussen_gaussian_2005, rahimi_random_2007}).

\subsection{QNTK as a parameter contraction}\label{subsec:qntk}

Contracting parameter indices of the generalised empirical kernel \eqref{eq:kernels_G_def} yields the standard empirical (data-space) QNTK,
\begin{equation}
K_{ij}(\theta)
= \sum^m_{a=1} \mathcal{G}_{ia, ja}(\theta) =
\sum_{a=1}^m
\frac{\partial f(x_i;\theta)}{\partial\theta_a}\,
\overline{\frac{\partial f(x_j;\theta)}{\partial\theta_a}}.
\label{eq:kernels_dataQNTK}
\end{equation}
This is the Jacobian Gram matrix $K(\theta)=J(\theta)J(\theta)^\dagger$ and is the central object in QNTK analyses of learning dynamics in PQCs \cite{liu_representation_2022} (see also \cite{jacot_neural_2018} for the classical NTK perspective).

Using $J(\theta)=V X(\theta)$, \eqref{eq:kernels_G_VTV} implies the exact decomposition
\begin{equation}
K(\theta)=V\,H(\theta)\,V^\dagger,
\label{eq:kernels_K_VHV}
\end{equation}
where the \emph{internal harmonic QNTK} is the parameter contraction of $\mathcal{T}(\theta)$,
\begin{equation}
H_{\omega\mu}(\theta)
=
\sum_{a=1}^m
\mathcal{T}_{\omega a,\mu a}(\theta)
=
\sum_{a=1}^m
\frac{\partial a_\omega(\theta)}{\partial\theta_a}\,
\overline{\frac{\partial a_\mu(\theta)}{\partial\theta_a}}.
\label{eq:kernels_H_def}
\end{equation}
Equivalently, $H(\theta)=X(\theta)X(\theta)^\dagger$ is the Gram matrix of coefficient-space Jacobian features.

\subsubsection{Harmonic factorisation through $C$}
In $C$-matrix notation the harmonic QNTK factors as
\begin{equation}
H(\theta) = C\,M(\theta)\,C^\dagger,
\qquad
M_{kl}(\theta) = \sum^m_{a=1} \frac{\partial \psi_k(\theta)}{\partial \theta_a}\,
\overline{\frac{\partial \psi_l(\theta)}{\partial \theta_a}}.
\label{eq:kernels_H_CMCT}
\end{equation}
This isolates a universal torus object $M(\theta)$ (depending only on the character map and the parameter space manifold structure) from the architecture-dependent mapping encoded by $C$.

\subsubsection{Parameter averaging}
Using \eqref{eq:kernels_T_expansion} and character orthogonality $\mathbb{E}_\theta[e^{i(k-l)\cdot\theta}]=\delta_{k l}$ (Section~\ref{subsec:centering}; \cite{katznelson_introduction_2004}), parameter averaging diagonalises the $(k,l)$ sum and yields
\begin{align}
\mathbb{E}_\theta[H(\theta)]
&=
C\,\mathbb{E}_\theta [M(\theta)]\,C^\dagger,\\
\mathbb{E}_\theta [M(\theta)]_{kk}&=\sum_{a=1}^m k_a^2=\|k\|_2^2.
\label{eq:kernels_Hbar_CWC}
\end{align}

\subsubsection{Interpretation of the $k_a^2$ weights}
In the single-use Pauli-rotation regime $K\subset\{-1,0,1\}^m$, so $k_a^2\in\{0,1\}$ acts as an indicator: only those parameter-harmonic patterns $k$ in which parameter $\theta_a$ appears (i.e.\ $k_a=\pm 1$) contribute to $\partial_{\theta_a}a(\theta)$, while harmonics with $k_a=0$ give a vanishing contribution to the derivative $\partial_{\theta_a}$. Intuitively, $\|k\|_2^2$ counts how many parameter directions act non-trivially on a given $k$-mode, so modes involving more parameters receive larger weight in the parameter-averaged kernel.

If one wishes to exclude any $k=0$ contribution explicitly, one may replace $M$ by $PMP$ using the projector $P$ from \eqref{eq:coeffstats_P_def}; for gradients, these components do not contribute whenever $k=0$ or $k_a=0$. By orthogonality of characters, $\mathbb{E}_\theta[M(\theta)]$ yields the diagonal weight $\|k\|_2^2$, so
\begin{equation}
\mathbb{E}_\theta[H(\theta)] = C\,\mathrm{diag}(\|k\|_2^2)\,C^\dagger.
\label{eq:kernels_Hbar_diagweight}
\end{equation}
Consequently, the averaged data-space kernel is $\mathbb{E}_\theta[K(\theta)]=V\,\mathbb{E}_\theta[H(\theta)]\,V^\dagger$.

\subsection{The character-gradient kernel $M(\theta)$}\label{subsec:Mkernel}

Since $a(\theta)=C\,\psi(\theta)$ with $\psi_k(\theta)=e^{ik\cdot\theta}$, the coefficient-space Jacobian factors as $\partial_{\theta_a}a(\theta)=C\,\partial_{\theta_a}\psi(\theta)$. This isolates a purely toroidal object, independent of the circuit architecture, obtained by taking the Gram matrix of the character gradients:
\begin{align}
M(\theta)\;&:=\;\sum_{a=1}^m \partial_{\theta_a}\psi(\theta)\,\partial_{\theta_a}\psi(\theta)^\dagger
\;\in\;\mathbb{C}^{|K|\times|K|},\\
M_{kl}(\theta)&=\sum_{a=1}^m \frac{\partial \psi_k(\theta)}{\partial\theta_a}\,
\overline{\frac{\partial \psi_l(\theta)}{\partial\theta_a}}.
\label{eq:kernels_M_def}
\end{align}
Equivalently, $\psi(\theta)$ defines a feature map from the parameter torus $\mathbb{T}^m$ to a complex feature space $\mathbb{C}^{|K|}$: its $k$-th coordinate $\psi_k(\theta) = e^{ik\cdot\theta}$ is a Fourier character encoding how the parameter configuration $\theta$ aligns with the $k$-th harmonic mode in $K$. The inner product in this feature space, $\psi(\theta)^\dagger\psi(\theta') = \sum_{k\in K} e^{ik\cdot(\theta'-\theta)}$, recovers a Fourier kernel on $\mathbb{T}^m$ restricted to the harmonic support $K$ as determined by the training block of the circuit. It is in this senses that $\psi$ acts as a Fourier feature embedding~\cite{rahimi_random_2007}. As $\theta$ varies, $\psi(\theta)$ traces a curve in $\mathbb{C}^{|K|}$; the tangent vector along parameter direction $\theta_a$ has components $\partial_{\theta_a}\psi_k(\theta) = ik_a\,e^{ik\cdot\theta}$. The matrix $M(\theta) = \sum_a \partial_{\theta_a}\psi(\theta)\,\overline{\partial_{\theta_a}\psi(\theta)}^\top$ is the Gram matrix of these tangent vectors, capturing the rate and geometry of $\psi$'s movement through $\mathbb{C}^{|K|}$ under infinitesimal parameter changes. This is the NTK of the feature map $\psi$ in the classical sense~\cite{jacot_neural_2018}, and accounts for the appearance of $M(\theta)$ in the harmonic factorisation $H(\theta) = CM(\theta)C^\dagger$. Using $\partial_{\theta_a}\psi_k(\theta)=ik_a e^{ik\cdot\theta}$ gives
\begin{equation}
M_{kl}(\theta)=(k\cdot l)\,e^{i(k-l)\cdot\theta},
\label{eq:kernels_M_explicit}
\end{equation}
so $M(\theta)$ depends on $\theta$ only through relative phases $e^{i(k-l)\cdot\theta}$ and on the harmonic indices only through inner products $k\cdot l$. In particular, parameter averaging diagonalises this kernel:
\begin{equation}
\mathbb{E}_\theta[M(\theta)]_{kl}=(k\cdot l)\,\delta_{kl}=\|k\|_2^2\,\delta_{kl},
\label{eq:kernels_EM_diag}
\end{equation}
again by character orthogonality on $\mathbb{T}^m$ \cite{katznelson_introduction_2004}. However, it should be noted that this is a feature of the Euclidean parameter space $\mathbb{T}^m$ and in general for parameter spaces that admit some non-trivial structure (for example via parameter coupling), would not be the case.

With this notation the internal harmonic QNTK is the architecture push-forward of $M(\theta)$ by conjugation with $C$,
\begin{align}
H(\theta)&=\sum_{a=1}^m \partial_{\theta_a}a(\theta)\,\partial_{\theta_a}a(\theta)^\dagger
= C\,M(\theta)\,C^\dagger,\\
\mathbb{E}_\theta[H(\theta)]&=C\,\mathbb{E}_\theta[M(\theta)]\,C^\dagger,
\end{align}
making explicit that all $\theta$-dependence enters through the torus kernel $M(\theta)$, while the circuit architecture that defines how these $k$ modes are mapped to encoding $\omega$ modes enters only through the interaction matrix $C$. Similarly, the data QNTK is the design-matrix push-forward of the harmonic QNTK by conjugation with $V$ as in \eqref{eq:kernels_K_VHV}.


\section{Numerical experiments}\label{sec:numerics}

\subsection{Experimental protocol and circuit families}\label{subsec:protocol}

\subsubsection{Circuit family}
We study re-uploading quantum Fourier models of the form
\begin{equation}
U(\theta,x)=\prod_{\ell=1}^L W_\ell(\theta^{(\ell)})\,S_\ell(x),
\end{equation}
where $S_\ell(x)$ are Pauli-encoding blocks and $W_\ell(\theta^{(\ell)})$ are single-use parametrised Pauli-rotation blocks interleaved with Clifford entangling gates.  Throughout, the trainable block $W_\ell$ is taken to be a depth-$d$ repetition of a fixed ans\"atz pattern as defined in \cite{sim_expressibility_2019} and also used in \cite{strobl_fourier_2025}. Indeed, we follow in the footsteps of \cite{strobl_fourier_2025} and use the circuits shown in Figures~\ref{fig:yzy-no-ent-ry}--\ref{fig:c17-ry} and we also fix the input state to $\ket{0}^{\otimes n}$ and take the measured observable to be the mean magnetisation $O = \frac{1}{n}\sum^n_i \sigma^z_i$. Our construction deviates slightly in that we define a layer to be an encoding block and training block pair which is repeated $L$ times. While the \textit{depth}, $d$, of a training block is the number of repetitions of the ans\"atz pattern used by \cite{strobl_fourier_2025, sim_expressibility_2019}. With the number of encoding qubits given by $n$, this means that the total number of trainable parameters are,
\begin{align}
m &= 3n\,d\,L \qquad &\text{for YZY (both)},\\
m &= (3n-1)dL \qquad &\text{for Circuits 16 and 17}.
\end{align}

\subsubsection{Coefficient estimation via discrete input Fourier transform}
For each sampled parameter vector $\theta$, coefficients $a_\omega(\theta)$ are estimated from the circuit output $f(x;\theta)=\langle O\rangle_{U(\theta,x)}$ by a discrete Fourier transform over a uniform grid
$x_j\in[0,2\pi)$, $j=1,\dots,n_x$:
\begin{equation}
a_\omega(\theta)\;\approx\;\frac{2\pi}{n_x}\sum_{j=1}^{n_x} f(x_j;\theta)\,e^{-i\omega x_j},
\qquad \omega\in\Omega,
\end{equation}
where $\Omega=\{-\omega_{\max},\dots,\omega_{\max}\}$ the trainable frequency set as defined by the number of qubits and layers the input data is re-encoded (recall Section \ref{subsubsec:difference_set} and see Appendix \ref{appendix:qfms} for further details). For our circuits where we re-encode on each qubit and each layer, $\omega_{\max} = nL$. This produces a coefficient vector $a(\theta)\in\mathbb{C}^{|\Omega|}$ for each $\theta$.

\subsubsection{Sampling and split-sample estimators}
Parameter-dependent quantities are estimated either (i) from joint Fourier coefficients $C_{\omega k}$ via constructions made in Sections \ref{sec:coeff_stats} and \ref{sec:kernels} or (ii) by usual Monte Carlo sampling $\theta^{(s)}\sim\mathrm{Unif}(\mathbb{T}^m)$ and forming empirical averages. To avoid spuriously inflating agreement when comparing two estimators of the same quantity, we use disjoint split-sample ensembles: one subset of the sampled parameters is used to estimate $\widehat{C}$ (and any objects derived from it), while the other subset is used to estimate the corresponding Monte Carlo quantity directly. We will continue to denote estimated quantities with wide hats.

\subsubsection{$C$-matrix truncation}
Exact evaluation of the full column support $K\subset\mathbb{Z}^m$ of the interaction matrix becomes intractable as $m$ grows as generally (with sufficient entangling) the number of Pauli propagation branches, and therefore active $k$ modes, grows exponentially with increasing depth. In practice, when estimating $C_{\omega k}=\mathbb{E}_\theta[a_\omega(\theta)e^{-ik\cdot\theta}]$
we restrict to a controlled subset $K\subset\{-1,0,1\}^m$ consisting of all vectors up to a fixed Hamming weight $h$ (with an additional hard cap on $|K|$ due to memory constraints). As we will see, this naive method of truncation produces good agreement between the two methods of calculation for each object. This suggests that for the circuits explored here, each Hamming weight sector has similar structure and therefore up to normalisation, Hamming weight truncation works well for the purposed of confirming agreement in structure. This means that all objects presented here are appropriately normalised in order to compare structural similarity. However, more careful forms of marginalisation may admit absolute comparison or even tractable $C$ matrices for certain circuits. This is left for future work

\subsubsection{Matrix similarity measures}
To quantify agreement between matrices, we use scale-sensitive and scale-insensitive diagnostics. The relative Frobenius error is
\begin{equation}
\varepsilon_F(A,B)=\frac{\|A-B\|_F}{\|B\|_F},
\end{equation}
and a scale-insensitive off-diagonal alignment (Frobenius cosine similarity) is
\begin{equation}
\mathcal{A}(A,B)=\frac{\mathrm{Re}\,\mathrm{Tr}(A^\dagger B)}{\|A\|_F\,\|B\|_F}.
\end{equation}
For correlation matrices, we also report the mean absolute off-diagonal magnitude as a scalar measure of average correlation, similar to the FCC scalar used in \cite{strobl_fourier_2025}.


\subsection{Coefficient variances}\label{subsec:exp_variance}

\subsubsection{Setup}
We first test the variance identity
\begin{equation}
\mathrm{Var}_\theta[a_\omega(\theta)]
= \sum_{k\neq 0} |C_{\omega k}|^2
= [CPC^\dagger]_{\omega\omega},
\end{equation}
which is the diagonal of the centred Gram matrix $CPC^\dagger$ derived in Section \ref{sec:coeff_stats}.

We set $n=6$ qubits and $L=1$ layer and vary the training-block depth $d\in\{1,\dots,5\}$ for circuits YZY (no entangling), YZY (with entangling), Circuit 16, and Circuit 17 with either $R_X$ or $R_Y$ encoder. We present results for four of these combinations seen in Figures~\ref{fig:yzy-no-ent-ry}--\ref{fig:c17-ry}. To estimate coefficient functions $a_\omega(\theta)$, we discretise the input interval $x\in[0,2\pi)$ using $n_x=128$ points and compute a discrete Fourier transform over $\omega\in\{-6,\dots,6\}$ (recall the maximum trainable frequency support is $\omega_\mathrm{max} = |nL|$). Parameters are sampled uniformly from the $m$-torus, $\theta^{(s)}\sim \mathrm{Unif}(\mathbb{T}^m)$, using $S=100096$ samples.

To avoid correlations induced by reusing the same samples, we split the parameter samples evenly into two disjoint sets, $S_{\mathrm{MC}}$ and $S_C$. One split is used to estimate the empirical variance profile
\begin{equation}\label{eqn:var_estimator_mc}
\widehat{\mathrm{Var}}_\theta[a_\omega]
= \frac{1}{S_\mathrm{MC}}\sum_{s=1}^{S_\mathrm{MC}}\bigl|a_\omega(\theta^{(s)})-\widehat{\mathbb{E}}_\theta[a_\omega]\bigr|^2,
\end{equation}
while the other split is used to estimate truncated joint coefficients,
\begin{equation}\label{eq:c_mc}
\widehat{C}_{\omega k}
= \frac{1}{S_C}\sum_{s=1}^{S_C} a_\omega(\theta^{(s)})\,e^{-ik\cdot\theta^{(s)}}, \quad k \in K^{(\mathrm{trunc)}}.
\end{equation}
From $\widehat{C}$ we form the $k\neq0$ row energy
\begin{equation}\label{eqn:row_energy_estimator}
\widehat{E}_\omega
:= \sum_{k\neq 0} |\widehat{C}_{\omega k}|^2,
\end{equation}
which is the truncated-$K$ approximation to $[CPC^\dagger]_{\omega\omega}$. Due to the differences in scale induced by truncation of the $C$ matrix, we compare \emph{normalised} variance profiles of the direct Monte Carlo estimator and the truncated $C$ matrix estimator respectively,
\begin{equation}
\bar{V}_\omega := \frac{\widehat{\mathrm{Var}}_\theta[a_\omega]}{\sum_{\omega'\in\Omega}\widehat{\mathrm{Var}}_\theta[a_{\omega'}]},
\qquad
\bar{E}_\omega := \frac{\widehat{E}_{\omega}}{\sum_{\omega'\in\Omega}\widehat{E}_{\omega'}},
\end{equation}
and report the Pearson correlation between the centred vectors $\bar{V}$ and $\bar{E}$ as a scale-insensitive agreement metric to show their structural similarity.


\begin{figure*}[p]
    \centering
    \input{fig_yzy_ent_rx.tex}
    \caption{\justifying{Composite summary for the YZY circuit with entangling gates and the associated variance and correlation structure for $R_X$ encoding. See Appendix \ref{appendix:results} for additional results for other circuits.}}
    \label{fig:yzy_ent_rx}
\end{figure*}

\subsubsection{Results}\label{subsubsec:results_var}
Subfigures (b) in Figures~\ref{fig:yzy-no-ent-ry}--\ref{fig:c17-ry} compare the normalised empirical variance profile $\bar{V}_\omega$ to the normalised truncated row-energy profile $\bar{E}_\omega$ across depths $d=1,\dots,5$. Across all depths tested, the two curves track one another closely as functions of $\omega$, and the per-depth Pearson correlations are high, indicating strong structural agreement between Monte Carlo variances and the $C$-derived prediction. This confirms that row energies of $C$ matrices are indeed equivalent to variances for independent trainable Pauli gates. Furthermore, different variance profiles are observed for the different circuits which shows how the expressivity profiles differ between the circuits. Notably, for all four circuits, the full encoding bandwidth is not achieved with depth 1 alone and requires further mixing from increased depths to reach the bandwidth saturation of $\omega_{\mathrm{max}} = nL$. Further, note that Circuit 16 and Circuit 17 (Figures \ref{fig:c16-ry} and \ref{fig:c17-ry}) have vanishing variances for odd frequencies while YZY with no entangling (Figure~\ref{fig:yzy-no-ent-ry}) is unable to build higher frequencies beyond $\omega \in\{-1,+1\}$ on each encoding qubit due to zero mixing, regardless of depth. It is important then to make sure there is sufficient mixing between the encoding qubits in order to achieve the frequency support desired if one is only encoding on a single layer. A further benefit to scaling training block depth over layers of joint encoder-trainer blocks is the ability to control the variance profiles of the circuit without completely changing the trainable frequency support. With deliberate control over entangling, one can ensure select frequencies possess vanishing variances, such as those above a certain frequency threshold or of a particular parity. Even then, by controlling training block depth one can even change the profile of the non-vanishing variances themselves (see subfigure (b) in Figure \ref{fig:yzy_ent_rx}).

\subsection{Coefficient correlation matrices}\label{subsec:exp_covariance}

\subsubsection{Setup}
Section \ref{sec:coeff_stats} shows that, under uniform parameter sampling $\theta\sim\mathrm{Unif}(\mathbb{T}^m)$, the centred coefficient covariance is the row Gram matrix
\begin{equation}
\mathrm{Cov}_\theta[\tilde a(\theta)] \;=\; CPC^\dagger,
\end{equation}
and the corresponding Pearson correlation matrix is obtained by normalising by the marginal variances,
\begin{align}
\mathrm{Corr}_\theta[a(\theta)]&= D^{-1/2}(CPC^\dagger)D^{-1/2},\\
D &:= \mathrm{diag}\bigl(\mathrm{Var}_\theta[a_\omega(\theta)]\bigr)_{\omega\in\Omega}.
\end{align}
We numerically validate this identity by comparing two independent estimators of the same population correlation structure:

\begin{figure}
    \centering
    \includegraphics[width=1\linewidth]{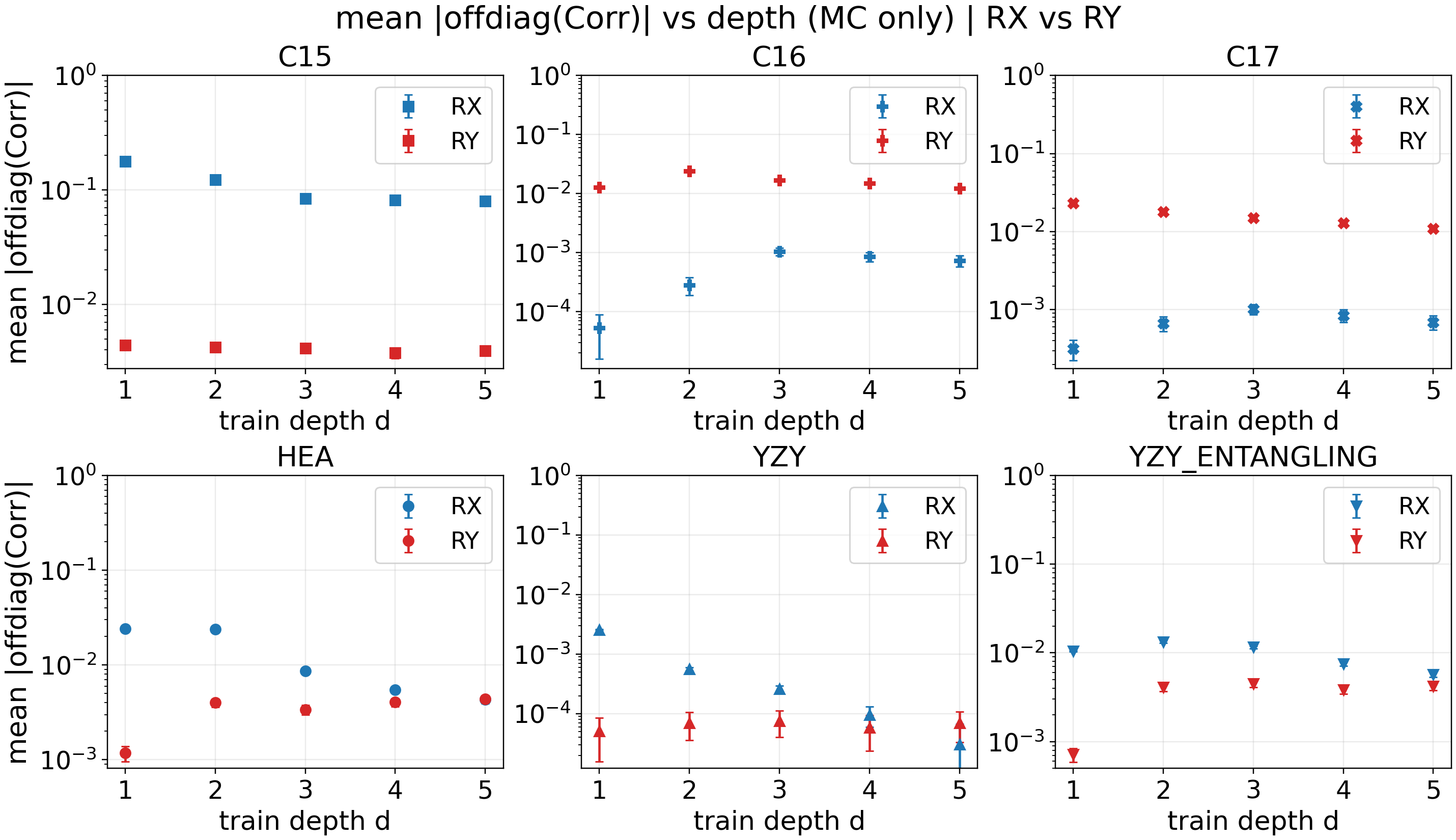}
    \caption{\justifying Mean off-diagonal correlation for increasing depth across the range of circuits tested, with both $R_X$ and $R_Y$ encoding. Note that Circuits 16 and 17 with $R_Y$ encoders and Circuit 15 with $R_X$ encoder stabilise to a non-trivial ($\gtrsim 10^{-2}$), mean correlation value.}
    \label{fig:mean_off_diag}
\end{figure}
\begin{enumerate}
\item \textbf{$C$-derived estimator:} estimate $\widehat{C}_{\omega k}$
by Monte Carlo as in \eqref{eq:c_mc}, remove the $k=0$ column, and form 
\begin{align}
\widehat{\mathrm{Cov}}_{C}&=\widehat{C}P\widehat{C}^\dagger\\
\widehat{\mathrm{Corr}}_{C}&=\widehat{D}_C^{-1/2}\widehat{\mathrm{Cov}}_{C}\widehat{D}_C^{-1/2}
\end{align}
where $\widehat{D}_C = \mathrm{diag}(\widehat{E}_\omega)_{\omega \in \Omega}$ from the variance estimator using row energies in Equation \eqref{eqn:row_energy_estimator}. 
\item \textbf{Direct Monte Carlo estimator:} compute coefficient vectors $a(\theta^{(s)})$ for sampled parameters, centre by $\widehat{\mathbb{E}}[a]$, and form 
\begin{align}
\widehat{\mathrm{Cov}}_{\mathrm{MC}}&=\frac{1}{S}\sum_s \tilde a(\theta^{(s)})\tilde a(\theta^{(s)})^\dagger\\
\widehat{\mathrm{Corr}}_{\mathrm{MC}}&=\widehat{D}_{\mathrm{MC}}^{-1/2}\widehat{\mathrm{Cov}}_{\mathrm{MC}}\widehat{D}_{\mathrm{MC}}^{-1/2}
\end{align}
where $\widehat{D}_{\mathrm{MC}} = \mathrm{diag}(\widehat{\mathrm{Var}}_\theta[a_\omega])_{\omega \in \Omega}$ from the variance estimator using Monte Carlo in Equation \eqref{eqn:var_estimator_mc}.
\end{enumerate}
To avoid artificially inflated agreement, the samples used to estimate $\widehat{C}$ and those used to estimate $\widehat{\mathrm{Cov}}_{\mathrm{MC}}$ are again disjoint. We set $n=6$ qubits and $L=1$ layers, and vary training-block depth $d\in\{1,\dots,5\}$. We produce results here for the YZY circuit with no entangling using $R_Y$ encoders, YZY circuit with entangling using $R_X$ encoders, Circuit 16 with $R_Y$ encoders, and Circuit 17 with $R_Y$ encoders to show a variety of special cases.

Coefficients $a_\omega(\theta)$ are estimated by a discrete Fourier transform of $n_x=126$ points over $x\in[0,2\pi)$, using $\omega\in\{-6,\dots,6\}$ (recall $\omega_\mathrm{max} = nL$) and a uniform parameter ensemble of $S=100096$ samples. For $d>1$, the interaction matrix column support is truncated to the Hamming-weight-$3$ sector of $K\subset\{-1,0,1\}^m$ (with a further fixed cap on $|K|$). To expose the off-diagonal structure, the colouring is scaled to the off-diagonal values, while the diagonal entries in complex correlation plots are naturally unity and coloured grey for visual clarity.

\subsubsection{Results}\label{subsubsec:results_corr}

Subfigures (c) in Figures~\ref{fig:yzy-no-ent-ry}--\ref{fig:c17-ry} compare the Hermitian Complex Pearson correlation matrices obtained via our $C$ matrix construction, and usual normalised Monte Carlo covariances. Note that the colour-bar is scaled to the magnitude of the off-diagonal values to illustrate structural similarities between the two, while the diagonal is unity by definition and is coloured grey. Shown here is the absolute value of the complex correlation matrices with both the normalised Frobenius error and cosine similarity between them reported. The Frobenius error is a useful metric that is sensitive to scale differences and reports the overall magnitude of the difference between the matrices. The cosine similarity on the other hand is scale independent and measures the alignment of patterns and can take negative values if opposite patterns are seen. Note that for the YZY circuit with no entangling, seen in Figure \ref{fig:yzy-no-ent-ry}, cosine similarity becomes a less useful metric due to only having a single upper-triangle off-diagonal entry that is of the order $10^{-3}$ compared to the other circuits with order $10^{-1}$. We also report a mean off-diagonal value (over all upper-triangle frequency pairs, not just those with non-vanishing variances) to provide a sense of average correlation. This is similar to the FCC scalar reported in \cite{strobl_fourier_2025}.

It should be noted that the initial run of these experiments used complex single precision, which caused numerical noise to dominate in regions of the correlation matrix with vanishing variance support. This led to spurious artefacts of near-zero correlation in regions of non-vanishing variance. This also provided incorrect results such as large correlation values for $|\omega| >1$ in the YZY circuit with no entangling, which cannot happen for a single layer of encoding. However, after rerunning these with double precision and further masking out vanishing variances, the actual physical correlation structure is exposed leading the results seen in Subfigure (c) in Figures~\ref{fig:yzy-no-ent-ry}--\ref{fig:c17-ry}.

We can see that not only do the $C$ matrix construction and usual Monte Carlo estimations agree well (with Frobenius errors of the order of $10^{-2}$ for all circuits and depths), but we can also observe a variety of phenomenologies across the different circuits. Firstly, as expected, the correlation matrix for the YZY circuit with no entanglers is fixed for all depths with a very small, though non-trivial, correlation between the $\omega =+1$ and $\omega = -1$ pairings. As for the variances, this makes sense as there is no mechanism for the encoding gates to build up higher frequencies with no entanglers or re-encoding on subsequent layers. The YZY circuit \textit{with} entanglers seen in Figure \ref{fig:yzy_ent_rx}, however, begins with a trainable frequency support that saturates after depth 2. This is reflected in the correlation matrix. In fact, for increasing depth, this circuit tends to a 2-design which is reflected increasingly Gaussian distribution of variances (recall variances are proportional to redundancy and are linearly proportional in the limit the circuit forms a 2-design \cite{mhiri_constrained_2025}) and diagonalisation of the correlation matrix. This diagonalisation is also seen in the decreasing mean off-diagonal value for both the $C$-derived and Monte Carlo matrices.

Circuits 16 and 17, seen in Figures \ref{fig:c16-ry} and \ref{fig:c17-ry}, also confirm the vanishing odd parity of trainable frequencies by displaying non-zero correlations for only odd pairs of frequencies. Interestingly, both Circuits 16 and 17 have very similar structure, with Circuit 17's patterns starting with a trainable frequency support size of 4 than Circuit 16's trivial 2. A further interesting phenomenon is that both circuits seem to stabilise to a fixed pattern after depth 3 rather than diagonalising. This is also reflected in the stabilisation of the mean off-diagonal correlation value seen in Figure \ref{fig:mean_off_diag}.

\subsection{Parameter Averaged Kernels from $C$}\label{subsec:exp_kernels}

\subsubsection{Setup}
We now validate the parameter-averaged harmonic QNTK reconstruction derived in Section \ref{sec:kernels}.  Let
\begin{equation}
X_{\omega a}(\theta) := \partial_{\theta_a} a_\omega(\theta), 
\qquad
H(\theta) := X(\theta)X(\theta)^\dagger,
\end{equation}
so that the parameter-averaged harmonic QNTK is
\begin{equation}
\bar H := \mathbb{E}_{\theta\sim \mathrm{Unif}(\mathbb{T}^m)}\!\left[H(\theta)\right]
= \mathbb{E}_\theta\!\left[X(\theta)X(\theta)^\dagger\right].
\end{equation}

Section \ref{sec:kernels} shows that due to independent trainable parameters, we have character orthogonality on $\mathbb{T}^m$ which diagonalises the parameter-side contraction and yields the closed form
\begin{equation}
\bar H
= C\,\mathbb{E}_\theta[M(\theta)]\,C^\dagger
= C\,\mathrm{diag}\!\bigl(\|k\|_2^2\bigr)\,C^\dagger,
\label{eq:expC_Hbar_factorisation}
\end{equation}
where the weight $\|k\|_2^2$ arises from differentiating characters (for $k\in\{-1,0,1\}^m$ this equals the Hamming weight). 

We compare two estimators of the same object $\bar H$:
\begin{enumerate}
\item \textbf{$C$-derived reconstruction.}
We estimate $\widehat{C}$ by Monte Carlo as in \eqref{eq:c_mc} and then form
\begin{equation}
\widehat{\bar H}_{C}=\widehat{C}\,\mathrm{diag}\!\bigl(\|k\|_2^2\bigr)\,\widehat{C}^\dagger.
\end{equation}
In practice we avoid constructing the $|K|\times |K|$ diagonal weight matrix by column-weighting $\widehat{C}$ by $\sqrt{\|k\|_2^2}$ so that the appropriate hamming-weight sectors are already weighted correctly before truncation. This is fine as we are comparing structural similarities rather than absolute values.

\onecolumngrid
\begin{figure*}[t]
    \centering
    \includegraphics[width=1.0\linewidth]{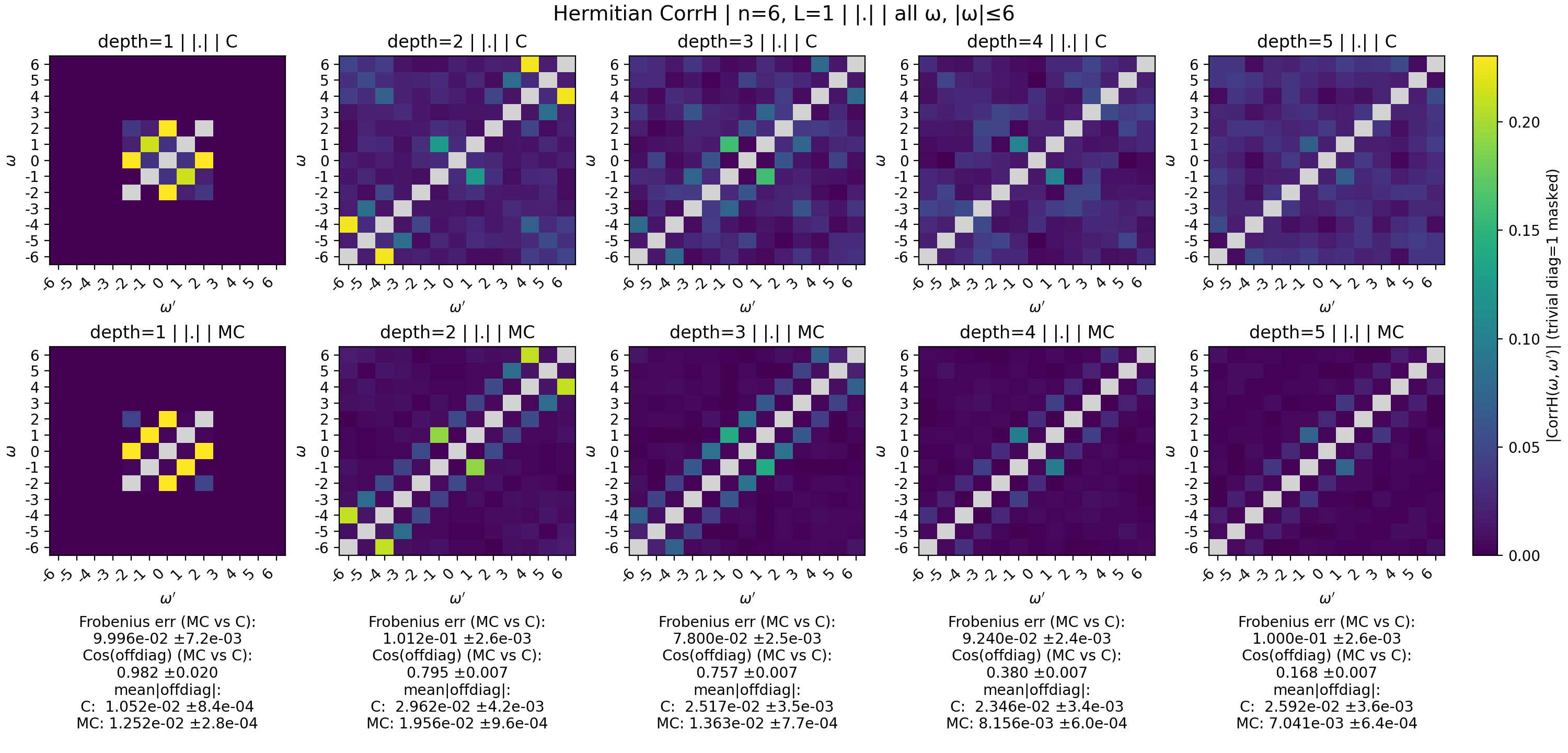}
    \caption{\justifying{Averaged harmonic QNTK for the YZY circuit with entanglers with $R_X$ encoders for depths $1$ to $5$. Note the similarity in structure to the circuit's correlation matrix in Figure \ref{fig:yzy_ent_rx}. Also note that for the QNTK, the $C$ matrix approximation is less accurate, which makes sense as we are not just truncating the matrix itself, but also the $\|k\|_2^2$ weights. See Appendix \ref{appendix:qntk_results}} for additional results with other circuits.}
    \label{fig:qntk-yzy-ent-rx}
\end{figure*}
\twocolumngrid

\item \textbf{Jacobian Monte Carlo.}
Independently, we estimate
\begin{equation}
\widehat{\bar H}_{\mathrm{MC}}
= \frac{1}{S_H}\sum_{s=1}^{S_H} X(\theta^{(s)})X(\theta^{(s)})^\dagger,
\end{equation}
where $X(\theta)$ is computed by differentiating $a_\omega(\theta)$ with respect to $\theta$.
\end{enumerate}
Again, to prevent artificially inflated agreement, the parameter ensemble of $S=100096$ is split into disjoint subsets. Coefficients $a_\omega(\theta)$ are obtained from a discrete Fourier transform over an $x$-grid of $n_x=126$ points on $[0,2\pi)$, using encoding frequencies $\omega\in\{-6,\dots,6\}$. As in the preceding experiments, we use a controlled harmonic subset $K\subset\{-1,0,1\}^m$ (Hamming-weight sector truncation, with a fixed cap on $|K|$) to estimate $\widehat{C}$.

\subsubsection{Results}\label{subsubsec:results_qntk}
Figures~\ref{fig:qntk-fig-yzy-ry}--\ref{fig:qntk-c17-ry} compare the absolute values of the variance-normalised, parameter-averaged harmonic QNTKs derived from the $C$-matrix construction against direct Monte Carlo estimation of the Gram matrix of coefficient-space Jacobians. As in the correlation-matrix comparisons of Figures~\ref{fig:yzy-no-ent-ry}--\ref{fig:c17-ry}, we report both normalised Frobenius errors and cosine similarities. Agreement is close across all circuits and depths, though the $C$-derived estimator deviates from the Monte Carlo reference somewhat more than the corresponding correlation matrices. This is expected: the QNTK reconstruction requires truncating not only the higher Hamming-weight columns of $\widehat{C}$, but also the $\|k\|_2^2$ weighting factors that preferentially amplify precisely those sectors, compounding the effect of truncation relative to the unweighted covariance case.

Despite this, across all circuits displayed (including those in Appendix \ref{appendix:results}), the variance-normalised average harmonic QNTK closely resembles the Pearson correlation matrix of the same circuit. Since the harmonic QNTK characterises the preferential directions in coefficient space along which gradient flow acts, this similarity suggests that the directions most accessible to gradient-based training at initialisation are closely aligned with the dominant correlation structure of the learnable coefficients. This provides further confirmation that the $C$ matrix encodes both static coefficient coupling structure and directly informs the space of QNTKs accessible during training. Characterising the distribution of QNTKs around this parameter-averaged mean, and understanding how that distribution is shaped by the architecture through $C$, alongside the extension to higher-order kernel objects, are is left for future work.

\subsection{Summary of numerical findings}\label{subsec:numerics_summary}
Across all experiments, the numerical results support the central organising claim of the joint harmonic framework: the circuit-defined circuit harmonic matrix $C$ appears as a structural factor in a range of statistically meaningful objects associated with parametrised circuits. Where deviations from the theoretical predictions occur, they can be attributed to the numerical approximations inherent to the estimation procedure, namely finite parameter sampling, discrete Fourier resolution, and truncation of $C$ to a controlled Hamming-weight sector. Despite these approximations, $C$-derived variance profiles, correlation matrices, and parameter-averaged harmonic QNTKs consistently recover the structural features of their Monte Carlo--estimated counterparts.

\paragraph{Coefficient statistics are governed by $CPC^\dagger$.}
Experiments~\ref{subsec:exp_variance}--\ref{subsec:exp_kernels} validate the coefficient-level identities derived in Sections~\ref{sec:coeff_stats}--\ref{sec:kernels}. The empirical coefficient variance profile $\mathrm{Var}_\theta[a_\omega(\theta)]$ agrees closely, up to truncation-dependent rescaling, with the $k\neq 0$ row-energy prediction $[CPC^\dagger]_{\omega\omega} = \sum_{k\neq 0}|C_{\omega k}|^2$. When comparing normalised profiles over $\omega$, the two estimators track each other consistently across all depths tested, demonstrating that $C$ correctly encodes how variance is distributed over input frequencies even under heavy truncation. This indicates that the low-Hamming-weight sector of $C$ is structurally representative of the full matrix, permitting computationally tractable approximations when only the relative covariance profile is required. Moving beyond marginals, the Hermitian complex Pearson correlation matrix is similarly reproduced by the $C$-derived Gram form $\mathrm{Corr}_\theta[a] = D^{-1/2}(CPC^\dagger)D^{-1/2}$: 
the $C$-based and direct Monte Carlo estimators yield consistent off-diagonal coupling patterns across all circuits examined.

\paragraph{Parameter-averaged kernel structure is reconstructible from $C$.}
Experiment~\ref{subsec:exp_kernels} supports the factorisation derived in Section~\ref{sec:kernels}: the parameter-averaged harmonic QNTK $\bar{H} = \mathbb{E}_\theta[X(\theta)X(\theta)^\dagger]$, with $X_{\omega a} = \partial_{\theta_a}a_\omega(\theta)$, is recovered from $C$ via 
$\bar{H} = C\,\mathrm{diag}(\|k\|_2^2)\,C^\dagger$. Agreement with the Monte Carlo reference is close in structure, though the absolute scale is more sensitive to truncation than in the covariance case, since the $\|k\|_2^2$ weighting amplifies precisely the higher Hamming-weight modes that the truncation omits. Notably, across all circuits the variance-normalised average harmonic QNTK closely resembles the corresponding Pearson correlation matrix, as expected given that both objects factor through the same $C$ matrix and differ only by the $\|k\|_2^2$ column weighting. Since the harmonic QNTK characterises the preferential directions in coefficient space along which gradient flow acts, this structural agreement confirms that $C$ not only encodes static correlation structure but directly informs the space of training dynamics accessible to the circuit.

\section{Discussion, limitations, and outlook}\label{sec:discussion}
\subsection{Assumptions and limitations}\label{subsec:limitations}
Our results provide an architecturally focused harmonic description within a specific but broad class of parametrised quantum models. The present paper nevertheless has several scope limitations that would be worth addressing in future work.

\subsubsection{Restricted to commuting phase encoders}
The framework is developed for the Quantum Fourier Model setting, where commuting phase-style encoders yield a finite encoder-accessible input spectrum and an explicit harmonic factorisation of the learned function. Architectures employing non-commuting feature maps or qualitatively different data-loading schemes may not admit the same finite harmonic description without modification.

\subsubsection{Single-use parameters}
A key simplification in the present framework is the single-use regime, where each trainable parameter $\theta_a$ appears uniquely in a one-parameter rotation. Although typical, in this case, the circuit output is at most degree-one trigonometric in each coordinate $\theta_a$, so the joint parameter-harmonic index satisfies $k_a\in\{-1,0,+1\}$ and hence $K\subseteq\{-1,0,+1\}^m$. If parameters are reused (tied or shared across multiple gates), then products of the resulting trigonometric factors generate higher harmonics in the reused coordinates: if $\theta_a$ appears $r_a$ times, then harmonics with $k_a\in\{-r_a,\ldots,+r_a\}$ become available (up to possible symmetry-induced cancellations). Thus parameter sharing generically enlarges the parameter-harmonic support and modifies the corresponding bookkeeping in $k$-space. This simplification has further consequences for the parameter-averaged objects derived throughout the paper. In the single-use regime the parameter torus $\mathbb{T}^m$ carries a product structure in which each coordinate $\theta_a$ is independent, and the uniform (Haar) measure factorises accordingly. This factorisation underlies the character orthogonality relation $\mathbb{E}_\theta[e^{i(k-l)\cdot\theta}]=\delta_{kl}$ used throughout, ensuring that covariances and kernels factor cleanly through $C$. With parameter coupling, the effective measure on parameter space becomes non-product, invalidating this orthogonality and complicating the harmonic analysis. Non-trivial parameter space geometry may in fact be useful (perhaps in a similar way that the quantum Fisher information matrix is useful for natural gradient descent), but would make the relationships more complicated.

\subsubsection{Gate-set and propagation assumptions}
Our analytic derivations exploit a setting where trainable blocks are Pauli rotations and the circuit structure supports a clean Pauli-propagation picture. For more general non-Pauli trainables and/or less Pauli-trackable interleavings, the same closed harmonic structure need not persist in the same form, and architecture-induced couplings can change qualitatively. In particular, the trainer-side construction rests on the assumption that trainable blocks admit a Pauli-rotation structure supporting clean Heisenberg back-propagation, while the encoder-side construction rests separately on the commuting phase-encoder assumption that yields a finite difference-set spectrum, as discussed above.

\subsubsection{Uniform parameter averaging versus trained parameter 
distributions}
Many closed-form identities in this paper are population statements under uniform parameter sampling, $\theta \sim \mathrm{Unif}(\mathbb{T}^m)$, where character orthogonality collapses mixed terms. Objects of this type, such as coefficient variances, covariance matrices, and parameter-averaged kernels, are therefore best interpreted as initialisation-time characterisations of a circuit's architectural prior: they describe the structure a circuit imposes on the learnable function class before any data has been seen, not the statistics of a \textit{trained} model, for which the parameter distribution will have evolved away from uniformity in a dataset- and trajectory-dependent way that invalidates the uniform-sampling identities.

However, the kernel factorisations $H(\theta) = CM(\theta)C^\dagger$ and $K(\theta) = VCM(\theta)C^\dagger V^\dagger$ derived in Section~\ref{sec:kernels} hold at every point in parameter space and are therefore not subject to this limitation: $C$ constrains the kernel geometry throughout the training trajectory, independently of the distribution over $\theta$. In particular, the column space
of $K(\theta)$ is contained within that of $VC$ at every $\theta$, which is an architectural restriction on which directions in function space are reachable under gradient-based training. How the coefficient statistics under uniform sampling, variances, correlations, and spectral bias signatures, relate to the corresponding statistics under the trained parameter distribution remains an open question, and constitutes one of the main directions for future work identified in Section~\ref{subsec:outlook}.

\subsubsection{Second-order statistics}
In this paper, we emphasise \emph{second-order} objects, covariances, correlation matrices, and Gram/kernel constructions, because they constitute the next natural level of structure beyond first-moment (mean) behaviour and already capture a substantial portion of the learning-relevant geometry in many over-parametrised regimes. Concretely, second-order quantities control the linearised training dynamics around a parameter point (e.g.\ through Gram matrices of Jacobians and their parameter averages), and they are the primary statistics that distinguish isotropic universal behaviour from structured, architecture-dependent coupling.

While our joint harmonic framework makes it transparent how \emph{non-universal} signatures can persist under uniform parameter averaging, our main results stop at second order. In particular, we do not develop a general, architecture-uniform theory controlling higher-order moments/cumulants of the relevant random features (or of induced kernels) across broad circuit families. Such a theory would be needed to quantify departures from Gaussian/Wishart limits beyond covariance structure, to characterise non-trivial multi-point correlations between harmonics, and to understand when these higher-order effects can materially influence learning dynamics rather than appearing only as finite-depth corrections. Developing systematic higher-order statistics within the present harmonic representation is therefore left to future work.

\subsubsection{Finite input lattice effects}
Data-space kernels inherit additional structure from the finite input design (e.g.\ sampling, conditioning, and lattice effects). Consequently, some phenomena observed in data space reflect the interaction between architecture and the chosen input design rather than architecture alone. The aliasing effects arising from projecting the encoder-defined frequency basis onto sparse input grids, via $V$, are not explored here, and represent a source of deviation between continuum-theory predictions and finite-sample empirical kernels.

\subsubsection{Noise and hardware effects}
All results are derived for idealised circuits and exact expectations (or noiseless parameter-shift estimates). Device noise, finite sampling in expectation value estimates, and compilation constraints can suppress gradients and alter effective harmonic content in ways not captured by the present framework.

\subsubsection{Numerical limitations}\label{subsubsec:numerical_limitations}
The numerical results in this paper should be interpreted as evidence that the predicted structures are visible in compute-limited regimes, rather than as demonstrations of convergence to the full, untruncated objects. The dominant limitation is the severe truncation used to estimate the interaction matrix $C_{\omega k}$: although the theory is formulated for the full harmonic support $K\subset\mathbb{Z}^m$, in practice we restrict to a small subset of low-weight $k$ (with an explicit cap on $|K|$), which can miss variance mass and coupling structure carried by higher-weight harmonics. This truncation has a compounded effect on the QNTK estimates, where the omitted high-Hamming-weight modes also carry the largest $\|k\|_2^2$ weights, leading to greater systematic deviation than is seen in the unweighted covariance case. A second limitation is finite Monte Carlo sampling of $\theta\sim\mathrm{Unif}(\mathbb{T}^m)$, which introduces estimator noise and seed dependence, especially in small-magnitude features such as off-diagonal correlations and imaginary components; where we compare two estimators of the same quantity we use split samples to avoid shared-sample bias, at the cost of reducing the effective sample size per estimator. Finally, coefficients $a_\omega(\theta)$ are obtained via a discrete Fourier transform on a finite $x$ grid over a finite frequency window $\Omega$, so finite resolution can introduce leakage/aliasing which could propagate to downstream quantities (including $\widehat{C}$ and kernel estimates). All of the above limitations are expected to become more pronounced with increasing system size and depth. These limitations motivate future scaling studies that systematically increase sample size, Fourier resolution, and the truncation radius in $k$ to quantify convergence and determine when higher-order harmonic structure becomes essential.

\subsection{Outlook and Future Work}\label{subsec:outlook}

\subsubsection{Architectural bias during training}
A primary motivation for introducing a joint input--parameter harmonic representation is to provide a description in which architectural contributions to learning dynamics can be stated explicitly. Within this formalism, the joint-coefficient structure (and the associated second-order objects derived from it) may be interpreted as an architecture-dependent prior over cross-frequency coupling during training. A relevant direction for future work is therefore to perform controlled training studies that compare regimes in which (i) cross-frequency couplings are negligible, so that learning is well approximated by an isotropic Gaussian/Wishart baseline; (ii) couplings are predominantly positive in a suitable sense, so that spectral components that exist in some target exhibit cooperative interactions during training; and (iii) couplings are predominantly negative, so that spectral components exhibit competitive interactions. One approach is to construct target functions with Fourier support that is either aligned with, or misaligned with, the circuit's Fourier correlation matrix, and to compare convergence behaviour, generalisation performance, and the evolution of learned spectra across these settings. Such studies would clarify how spectral-bias developed in classical models manifest in parametrised quantum circuits through architecture-dependent harmonic couplings~\cite{rahaman_spectral_2018, duffy_spectral_2026}.

\subsubsection{Finite input lattice effects}
The harmonic-space description isolates architecture-dependent coupling structure, whereas supervised learning is determined by data-space objects evaluated on a finite set of inputs. The dependence of data-space kernels on the input design via $V$ in $K(\theta) = VH(\theta)V^\dagger$, is therefore a relevant consideration. In particular, finite input sets can introduce discretisation effects such as aliasing between nearby frequencies, changes in the conditioning of design matrices, and deviations of data-space kernel spectra from their continuum counterparts. Future work should characterise how the geometry of the input design (e.g.\ regular grids versus random designs, spacing, and boundary conditions) modifies the spectra and eigenvector structure of data-space kernels, and should identify conditions under which qualitative kernel diagnostics are stable with respect to the input design. A related objective is to formulate input-design principles that allow architecture-induced harmonic coupling structure to be probed consistently across problem instances.

\subsubsection{Extending to multi-dimensional inputs with multi-dimensional frequencies}
Many applications involve multivariate inputs $x\in\mathbb{R}^d$ and therefore admit Fourier representations indexed by multi-indices $\omega\in\mathbb{Z}^d$ \cite{casas_multidimensional_2023, strobl_fourier_2025}. An extension of the present analysis to this setting requires (i) formulating the encoder-accessible input spectrum as a subset $\Omega\subset\mathbb{Z}^d$ determined by the chosen multi-feature encoding strategy, and (ii) characterising how circuit architecture constrains and couples the resulting multi-indexed Fourier coefficients.

\subsubsection{Connections to generative quantum machine learning}
A natural domain of application for the present framework is quantum generative modelling, and in particular quantum circuit Born machines 
(QCBMs)~\cite{herrero-gonzalez_born_2025}, where a parametrised circuit $U(\theta)$ acting on $|0\rangle$ produces a probability distribution $\mathrm{Pr}_{{\theta}}({x}) = |\langle{x}|U({\theta})|0\rangle|^2$ over $n$-qubit bitstrings, trained to match a target distribution. Via a Fourier decomposition of the Born rule, this distribution can be expressed as a linear combination of $Z$-basis correlators~\cite{herrero-gonzalez_born_2025},
\begin{equation}
    \mathrm{Pr}_{{\theta}}({x}) = \frac{1}{2^n} 
    \sum_{\mathbf{i} \in 2^{[n]}} (-1)^{\sum_{i \in \mathbf{i}} x_i} 
    \langle Z_\mathbf{i} \rangle_{{\theta}},
\end{equation}
where $\langle Z_\mathbf{i}\rangle_{{\theta}} = 
\mathrm{Tr}[Z_\mathbf{i}\,U({\theta})|0\rangle\langle 0|U({\theta})^\dagger]$ is the expectation value of the Pauli-$Z$ product on the qubit subset $\mathbf{i}$ in the output state. Each such correlator is an expectation value of a fixed observable $O = Z_\mathbf{i}$ in a state prepared by the trainable circuit, and therefore falls directly within the present framework. Specifically, by the same Pauli-propagation argument used throughout this paper, each correlator admits a parameter-harmonic expansion
\begin{equation}
    \langle Z_\mathbf{i}\rangle_{{\theta}} = \sum_{k \in K} 
    C_{\mathbf{i}k}\, e^{ik \cdot {\theta}},
\end{equation}
and collecting these expansions across all $\mathbf{i}$ yields a circuit harmonic matrix $C$ with rows instead indexed by correlator labels $\mathbf{i}$ and columns by parameter harmonics $k$, playing precisely the same role as the circuit harmonic matrix in the supervised setting, with the observable index $\mathbf{i}$ taking the place of the encoder-accessible input frequency $\omega$.

The second-order statistics of the correlator vector $c({\theta}) = \{{\langle Z_\mathbf{i}\rangle_{{\theta}}}\}$ under 
uniform parameter sampling then follow directly from the identities of Section~\ref{sec:coeff_stats}: the covariance matrix is $CPC^\dagger$, the individual correlator variances are the diagonal entries
\begin{equation}
    \mathrm{Var}_{{\theta}}[\langle Z_\mathbf{i}\rangle_{{\theta}}]
    = \sum_{k \neq 0} |C_{\mathbf{i}k}|^2,
\end{equation}
and the Pearson correlation matrix $D^{-1/2}(CPC^\dagger)D^{-1/2}$ describes which pairs of correlators share parameter-harmonic support and therefore co-vary under random initialisation. This variance directly governs trainability: the vanishing of $\mathrm{Var}_{{\theta}}[\langle Z_\mathbf{i}\rangle_{{\theta}}]$, studied in Gonzalez et al. \cite{herrero-gonzalez_born_2025} as a source of inaccessibility of optimal quantum parameters, corresponds precisely to the $\mathbf{i}$-th row of $C$ having negligible energy 
on non-zero harmonics, i.e.\ $E_\mathbf{i} = \sum_{k \neq 0} |C_{\mathbf{i}k}|^2 \approx 0$, a condition assessable from circuit structure prior to training.

For conditional Born machines in which the circuit includes data-encoding blocks acting on a conditioning variable $z$, each correlator acquires the full QFM structure $\langle Z_\mathbf{i}\rangle_{{\theta},z} = \sum_{\omega k} C^{(\mathbf{i})}_{\omega k}\,e^{i\omega\cdot z}e^{ik\cdot{\theta}}$, recovering the circuit harmonic matrix studied in this paper for \textit{each} correlator. Extending the second-order analysis of Sections~\ref{sec:coeff_stats}--\ref{sec:kernels} to characterise the architecture-dependent inductive bias of both unconditional and conditional quantum generative models constitutes a natural direction for future work.

\subsubsection{Higher-order statistics and training dynamics beyond the kernel 
regime}\label{subsec:outlook_higher_moments}
This paper develops a joint input--parameter harmonic representation and uses it to obtain closed expressions for second-order objects, including covariances and Gram/kernel constructions, which govern linearised training dynamics and provide a natural baseline for architecture-level comparisons. As noted in the limitations, this focus leaves open the systematic role of higher-order statistics. A first direction for future work is therefore to extend the current second-order analysis to higher moments and cumulants of the relevant random features, both at the level of the output/harmonic coefficients and at the level of induced data-space kernels. In the joint-harmonic setting, this amounts to characterising the extent to which higher-order contractions remain non-zero under parameter averaging due to exact phase-cancellation constraints, and how such non-vanishing cumulants quantify departures from purely second-order descriptions beyond covariance structure.

A second direction is to develop a dynamical description of training that goes beyond the strictly kernel-dominated regime. In classical settings, the neural tangent kernel formalism provides a leading-order description of gradient-flow dynamics around an initialisation point in the lazy-training (NTK) approximation, while a series expansion in higher derivatives yields successive corrections that capture representation change and feature learning beyond this regime. Analogous expansions can be formulated for parametrised quantum circuits by organising the training dynamics in terms of higher-order derivative tensors of the model map and their associated multi-linear forms on the dataset, leading to a hierarchy of higher-order kernels that correct the leading Gram/kernel term. Establishing how these higher-order contributions are constrained by the joint harmonic structure, particularly which multi-index interactions are permitted by the accessible spectra and which survive parameter averaging, would provide a principled route to quantify when learning is well described by the second-order objects studied here and when higher-order effects materially influence training.

Finally, higher-order statistics provide a natural language for connecting architecture-dependent couplings to generalisation. Cumulants of the harmonic 
coefficient distribution encode multi-frequency dependencies invisible at second order, and comparing higher-order kernel corrections across architectures could identify circuit features that promote or suppress representation change in ways that are not captured by covariance structure alone. Developing these connections systematically within the joint harmonic framework, and determining how distinctly quantum features, such as interference structure and entanglement-induced correlations, contribute to higher-order statistics relative to their classical analogues, constitutes an important direction for future work.

\subsubsection{Circuit symmetries and their effect on joint 
coefficients}\label{subsec:outlook_symmetries}
Symmetries in parametrised quantum circuits constrain both expressibility and training dynamics and are widely used to incorporate prior structural information (e.g.\ conservation laws, permutation invariances, and problem-specific equivariances). Within the present joint input--parameter harmonic framework, symmetries are expected to act through explicit selection rules on the joint spectrum: they restrict which pairs $(\omega,k)$ can contribute non-trivially, induce degeneracies among allowed modes, and can enforce block structure in the second-order objects derived from the joint-coefficients. A systematic characterisation of these symmetry-induced constraints is therefore a natural direction for future work.

One objective is to relate symmetry constraints on joint-coefficients to established algebraic descriptions of circuit dynamics. In particular, the dynamical Lie algebra generated by the circuit's trainable and fixed components determines the reachable unitary manifold and thereby constrains how observables and input encodings are propagated~\cite{larocca_theory_2023}. This perspective suggests that symmetry reductions in the dynamical Lie algebra should correspond to structured sparsity patterns and/or decompositions of the joint-coefficient tensor, reflecting restricted operator mixing and restricted mode coupling. A related objective is to connect these constraints to decompositions of the observable algebra induced by symmetries, including sector decompositions that can be expressed using Jordan-algebraic language when focusing on the subspace of observables relevant to learning~\cite{anschuetz_unified_2024}. Understanding how such sector structures manifest at the level of joint harmonic couplings would clarify when architecture-induced structure is protected by symmetry and when it is diluted as mixing increases.

Finally, symmetry considerations are closely tied to over-parametrisation and feature coupling structure. In the absence of strong constraints, increasing depth and mixing can broaden the effective joint support and reduce sparsity in the associated second-order objects, whereas symmetries can enforce persistent selection rules and correlations by restricting the allowed joint modes. Identifying symmetry classes for which structured joint-coupling patterns persist under scaling, and determining how symmetry-breaking transitions modify the joint spectrum, would provide a principled means to design architectures with controllable coupling structure.

\subsubsection{Endowing parameter space with 
curvature}\label{subsec:outlook_curvature}
In this paper, the parameter domain is modelled as an $m$-torus $\Theta^m\simeq\mathbb{T}^m$ with independent angular coordinates and the uniform 
reference measure. This choice underlies the character orthogonality identities used throughout (Appendix~\ref{appendix:covariance_proofs}) and implies that parameter averaging is governed by universal phase factors of the form $e^{i(k-l)\cdot\theta}$, independent of the circuit beyond the joint-coefficients that weight these phases. In particular, the $\theta$-dependence entering the second-order objects appears through a universal modulation factor (denoted $M(\theta)$ in the main text), which controls whether different $k$-indexed contributions add constructively or destructively for a fixed encoder mode $\omega$. A natural direction for future work is to formalise the role of this universal modulation as a geometric object on parameter space and to determine how it is modified when the independent torus model is relaxed.

One extension is to introduce \emph{parameter coupling} through sharing, functional constraints, or correlated priors on $\theta$. Even when the topology remains toroidal, such couplings replace the product structure by a non-product geometry, effectively selecting a non-uniform measure on $\Theta^m$ and, more generally, an associated Riemannian metric. This changes which phase-cancellation patterns are typical under parameter averaging and therefore alters the statistical weighting of the joint harmonic contributions. A related direction is to incorporate an \emph{induced} metric from the model map, for example via information-geometric quantities such as the quantum Fisher information, which provides a natural (generally $\theta$-dependent) metric on parameter space. Studying how this metric interacts with the universal modulation structure would clarify when the harmonic interference effects identified here persist along training trajectories and when they are suppressed or amplified by the local geometry.

A further possibility is to treat the parameter-space geometry as a controllable object during optimisation. For instance, one may consider learning rules that effectively precondition gradient flow by a metric (natural-gradient-type updates) or impose data-dependent parameter correlations, thereby modifying the relative weighting of phase-aligned versus phase-cancelling contributions. Establishing a principled relationship between such geometric control mechanisms and the resulting joint harmonic coupling structure is an open problem and may provide an additional means to shape learning dynamics beyond circuit architecture alone.

\section*{Code Availability}
The code associated with the results of this paper is available at \url{https://github.com/quantumsoftwarelab/Circuit_Harmonic_Matrices}.

\section*{Acknowledgments}

We would like to thank Mario Herrero Gonzalez for many insightful and interesting conversations. \\
KJSC was supported by the Engineering and Physical Sciences Research Council (grant number EP/W524384/1) and the University of Edinburgh. PW was supported by EPSRC grants EP/X026167/1 and EP/Z53318X/1. LDD is supported by an STFC Consolidated Grant (ST/T000600/1, ST/X000494/1).

\newpage

\bibliographystyle{ieeetr}
\bibliography{references}


\clearpage
\appendix

\makeatletter
\let\addcontentsline\orig@addcontentsline
\makeatother

\onecolumngrid
{\Large \textbf{Appendix}}
\begingroup
\setcounter{tocdepth}{2}
\renewcommand{\contentsname}{Appendix contents}
\tableofcontents
\endgroup

\counterwithin*{equation}{section}
\renewcommand\theequation{\thesection\arabic{equation}}    
\newpage


\section{Table of notation used}
\label{appendix:notation}

\begin{table*}[h]
\centering
\begin{tabular}{l l}
\hline\hline
\textbf{Symbol} & \textbf{Meaning / definition} \\
\hline
\raggedright $x\in\mathbb{R}^d$ & Input variable; $d$ is the input dimension. \\
\raggedleft $\theta\in\mathbb{T}^m$ & Trainable parameter vector on the $m$-torus; $m$ is the number of trainable parameters. \\[2pt]
$\rho,\ O$ & Input state and measured observable defining the model output $f(x;\theta)=\Tr[\rho\,U^\dagger O U]$. \\[2pt]
$U(x,\theta)$ & Full circuit unitary with encoder blocks and trainable blocks. \\[2pt]
$S_\ell(x)$ & Encoder block at layer $\ell$ (Fourier encoder), cf.\ Eq.~\eqref{eq:prelims_encoder_diag}. \\[2pt]
$W_\ell(\theta^{(\ell)})$ & Trainable block at layer $\ell$ (e.g.\ Pauli rotations + fixed Clifford), cf.\ Eq.~\eqref{eq:prelims_trainable_block}. \\[2pt]
$f(x;\theta) \in \mathbb{R}$ & Model output. \\[2pt]
$\Omega$ & Encoder-accessible input frequency set in the QFM expansion, cf.\ Eq.~\eqref{eq:prelims_qfm_expansion}. \\[2pt]
$\Omega^{(\ell)}$ & Encoder difference set at layer $\ell$, cf.\ Eq.~\eqref{eq:prelims_diffset}; typically $\Omega=\bigoplus_\ell \Omega^{(\ell)}$. \\[2pt]
$a_\omega(\theta)$ & Trainable Fourier coefficient for input harmonic $\omega$, cf.\ Eq.~\eqref{eq:prelims_qfm_expansion}. \\[2pt]
$a(\theta)\in\mathbb{C}^{|\Omega|}$ & Vector of coefficients with entries $a_\omega(\theta)$. \\[2pt]
$K\subset\mathbb{Z}^m$ & Parameter-harmonic support set in the expansion on $\mathbb{T}^m$, cf.\ Eq.~\eqref{eq:prelims_param_fourier_generic}. \\[2pt]
$k\in K$ & Multi-index of a parameter character $e^{ik\cdot\theta}$; in the single-use regime $K\subset\{-1,0,1\}^m$, cf.\ Eq.~\eqref{eq:prelims_k_support}. \\[2pt]
$\psi_k(\theta)$ & Parameter character feature $\psi_k(\theta)=e^{ik\cdot\theta}$, cf.\ Eq.~\eqref{eq:C_aCpsi}. \\[2pt]
$\psi(\theta)\in\mathbb{C}^{|K|}$ & Vector of characters with entries $\psi_k(\theta)$. \\[4pt]
$C\in\mathbb{C}^{|\Omega|\times|K|}$ & circuit harmonic matrix in $a(\theta)=C\psi(\theta)$, cf.\ Eq.~\eqref{eq:prelims_param_fourier_generic}. \\[2pt]
$C_{\omega k}$ & Joint input--parameter Fourier coefficient (row $\omega$, column $k$ of $C$) in \eqref{eq:prelims_param_fourier_generic} and \eqref{eq:C_joint_harmonic}. \\[2pt]
$\mathbb{E}_\theta[\cdot]$ & Uniform average over $\theta\sim\mathrm{Unif}(\mathbb{T}^m)$, cf.\ Eq.~\eqref{eq:prelims_Etheta_def}. \\[2pt]
$\mathbb{E}_{\mathrm{data}}[\cdot]$ & Empirical average over the training inputs $\{x_i\}_{i=1}^N$, cf.\ Eq.~\eqref{eq:prelims_Edata_def}. \\[2pt]
$\mathrm{Cov}_\theta(\cdot,\cdot)$ & Covariance under $\theta$-averaging, cf.\ Eq.~\eqref{eq:prelims_cov_def}. \\[2pt]
$\tilde a(\theta)$ & Centred coefficient vector $\tilde a(\theta)=a(\theta)-\mathbb{E}_\theta[a(\theta)]$, cf.\ Eq.~\eqref{eq:prelims_centered_a_def}. \\[2pt]
$V$& Design matrix, $V_{i\omega}=e^{i \omega\cdot x_i}$. Maps encoding frequencies to input lattice. cf.\ Eq.~\eqref{eq:kernels_FVa} \\[2pt]
$K(\theta)$ & Data-space QNTK, $K(\theta) = VH(\theta)V^\dagger$. cf.\ Eq.~\eqref{eq:kernels_dataQNTK} \\[2pt]
$H(\theta)$ & Encoding frequency-space QNTK $H(\theta) = CM(\theta)C^\dagger$. cf.\ Eq.~\eqref{eq:kernels_H_def} \\[2pt]
$M(\theta)$ & Parameter frequency-space QNTK, cf.\ Eq.~\eqref{eq:kernels_M_def}. \\[2pt]
\hline\hline
\end{tabular}
\caption{Notation summary for this paper. Preliminaries are discussed in Section \ref{sec:prelims}, trainer-encoder interaction matrices are discussed in Section \ref{sec:interaction_matrix}, variances and correlation matrices of trainable coefficients are discussed in Section \ref{sec:coeff_stats}, while numerical results are discussed in Section \ref{sec:numerics}.}
\label{tab:notation_prelims}
\end{table*}

\clearpage

\begin{figure*}[p]
    \centering
    \section{Additional Results}\label{appendix:results}
    \vspace{1em}
    \input{fig_yzy_no_ent.tex}
    \caption{\justifying{Composite summary for the YZY circuit without entangling gates and the associated variance and correlation structure for $R_Y$ encoding. See Section \ref{sec:numerics} for setup, see \ref{subsubsec:results_var} for discussion of the variances, and see \ref{subsubsec:results_corr} for discussion of the correlation matrices.}}
    \label{fig:yzy-no-ent-ry}
\end{figure*}

\begin{figure*}[p]
    \centering
    \input{fig_c16_ry.tex}
    \caption{\justifying{Composite summary for Circuit 16 and the associated variance and correlation structure for $R_Y$ encoding. See Section \ref{sec:numerics} for setup, see \ref{subsubsec:results_var} for discussion of the variances, and see \ref{subsubsec:results_corr} for discussion of the correlation matrices.}}
    \label{fig:c16-ry}
\end{figure*}

\begin{figure*}[p]
    \centering
    \input{fig_c17_ry.tex}
    \caption{\justifying{Composite summary for Circuit 17 and the associated variance and correlation structure for $R_Y$ encoding. See Section \ref{sec:numerics} for setup, see \ref{subsubsec:results_var} for discussion of the variances, and see \ref{subsubsec:results_corr} for discussion of the correlation matrices.}}
    \label{fig:c17-ry}
\end{figure*}

\clearpage
\subsection{Parameter Averaged Harmonic QNTKs}\label{appendix:qntk_results}

\begin{figure}[h]
    \centering
    \includegraphics[width=1.0\linewidth]{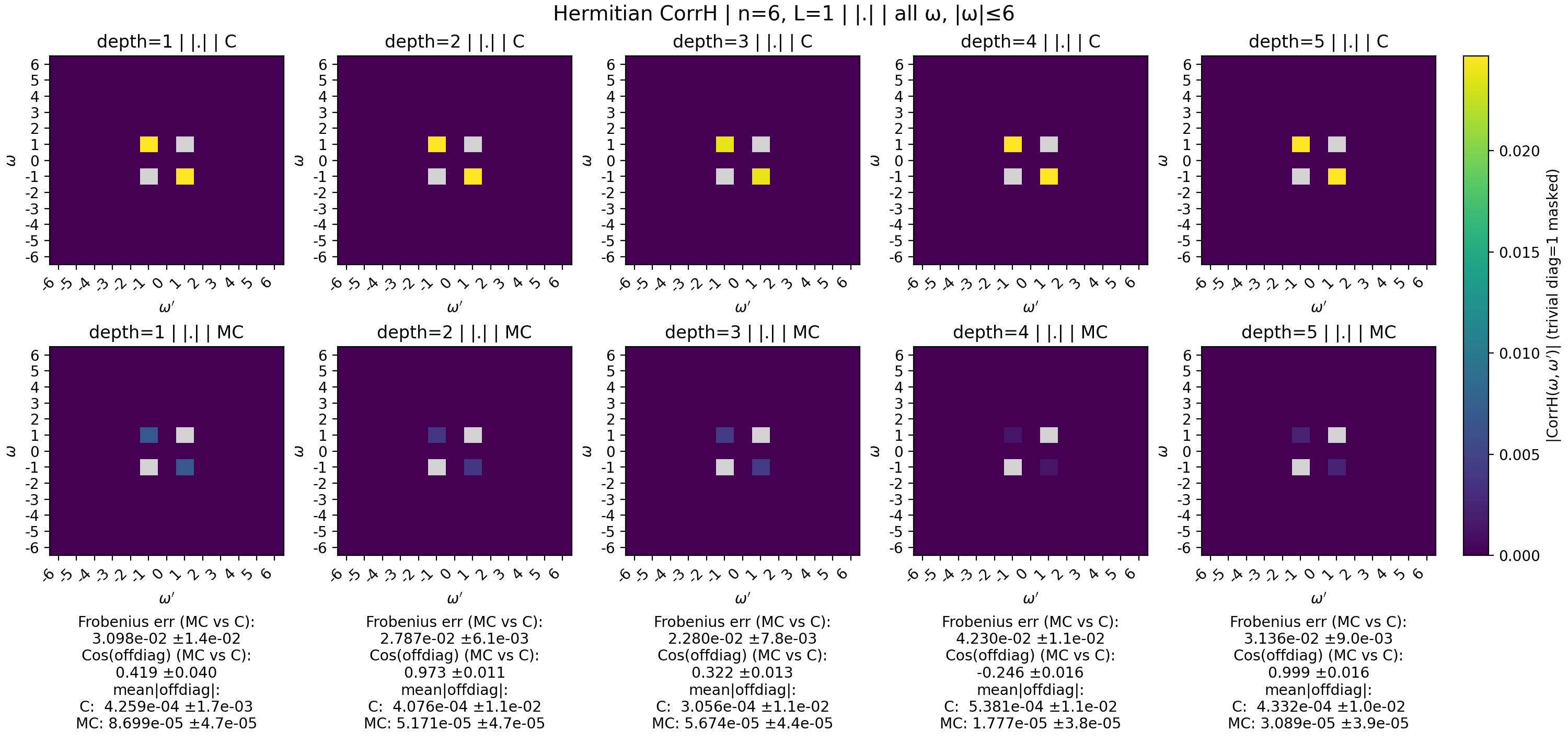}
    \caption{\justifying{Averaged harmonic QNTK for the YZY circuit with no entanglers and $R_Y$ encoders for depths $1$ to $5$. Note the similarity in structure to the circuit's correlation matrix in Figure \ref{fig:yzy-no-ent-ry}. Also note that for the QNTK, the $C$ matrix approximation is less accurate, which makes sense as we are not just truncating the matrix itself, but also the $\|k\|_2^2$ weights. See Section \ref{subsec:exp_kernels} for setup and Section \ref{subsubsec:results_qntk} for discussion of results.}}
    \label{fig:qntk-fig-yzy-ry}
\end{figure}
\begin{figure}[b]
    \centering
    \includegraphics[width=1.0\linewidth]{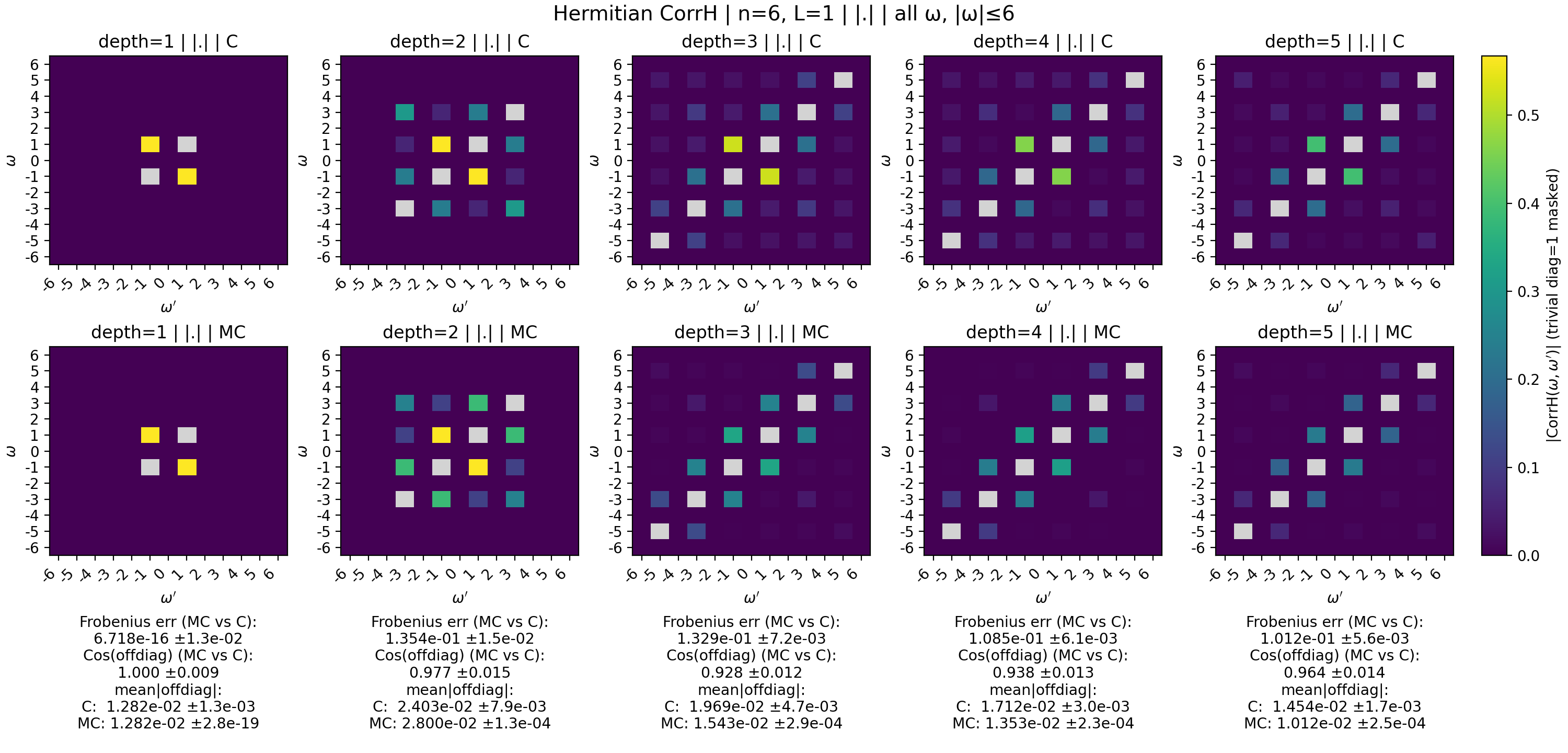}
    \caption{\justifying{Averaged harmonic QNTK for Circuit 16 with $R_Y$ encoders for depths $1$ to $5$. Note the similarity in structure to the circuit's correlation matrix in Figure \ref{fig:c16-ry}. Also note that for the QNTK, the $C$ matrix approximation is less accurate, which makes sense as we are not just truncating the matrix itself, but also the $\|k\|_2^2$ weights. See Section \ref{subsec:exp_kernels} for setup and Section \ref{subsubsec:results_qntk} for discussion of results.}}
    \label{fig:qntk-c16-ry}
\end{figure}

\begin{figure}[t]
    \centering
    \includegraphics[width=1.0\linewidth]{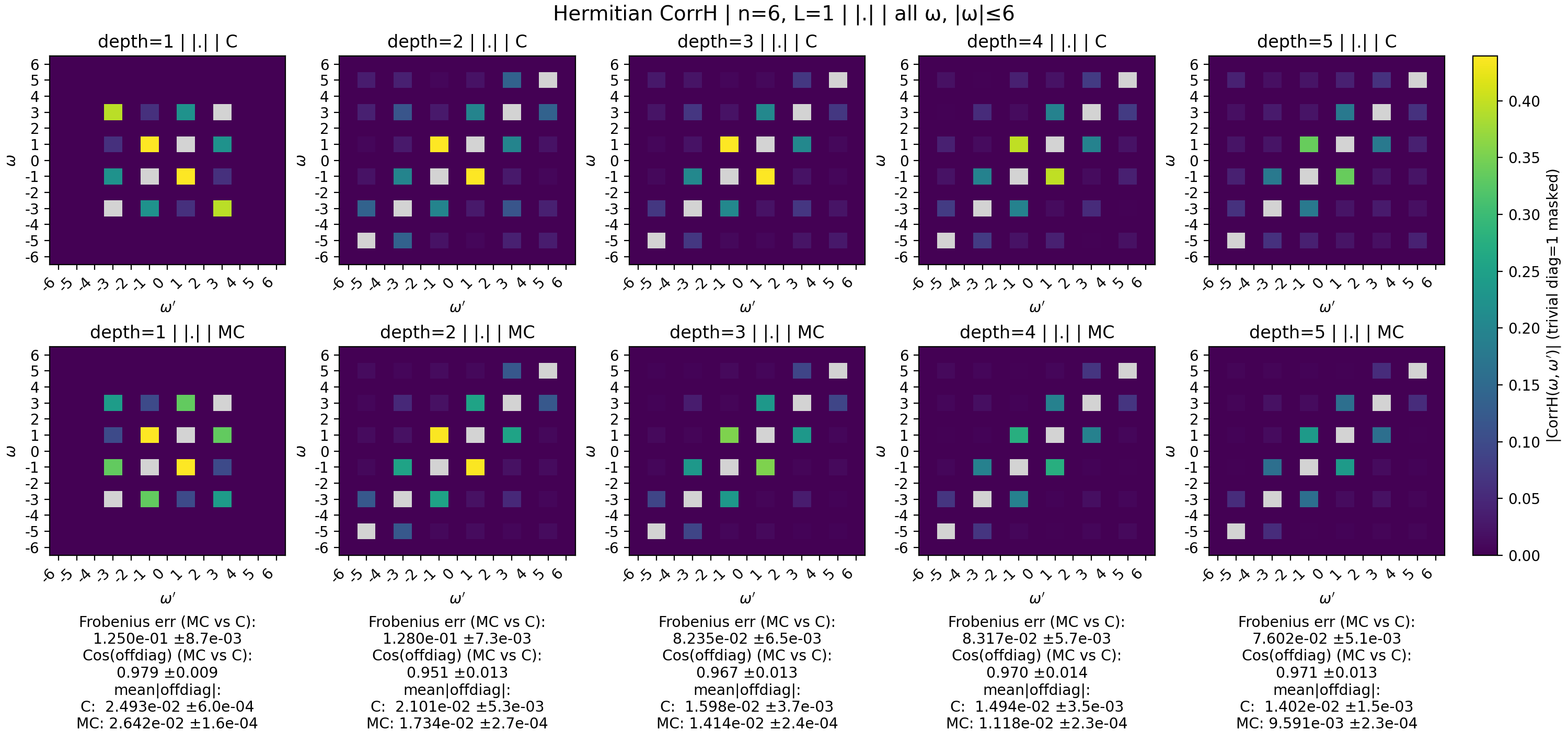}
    \caption{\justifying{Averaged harmonic QNTK for Circuit 17 with $R_Y$ encoders for depths $1$ to $5$. Note the similarity in structure to the circuit's correlation matrix in Figure \ref{fig:c17-ry}. Also note that for the QNTK, the $C$ matrix approximation is less accurate, which makes sense as we are not just truncating the matrix itself, but also the $\|k\|_2^2$ weights. See Section \ref{subsec:exp_kernels} for setup and Section \ref{subsubsec:results_qntk} for discussion of results.}}
    \label{fig:qntk-c17-ry}
\end{figure}
\clearpage


\section{Quantum Fourier models and encoder-accessible harmonics}\label{appendix:qfms}

This appendix collects encoder-side derivations used throughout the main text: (i) why re-uploading
circuits with commuting phase encoders induce a \emph{finite} input-harmonic set $\Omega$, (ii) how $\Omega$ is constructed from layer-wise \emph{difference sets} via Minkowski sums, and (iii) the associated \emph{path decomposition} and redundancy sets $R(\omega)$ used in Section~\ref{sec:interaction_matrix}. These results are standard in the quantum Fourier model (QFM) perspective on Fourier encoders and re-uploading circuits; see, e.g., \cite{schuld_effect_2021,perez-salinas_data_2020,wiedmann_fourier_2024} and related developments in \cite{mhiri_constrained_2025,strobl_fourier_2025}.

\subsection{Encoder assumption and difference sets}\label{app:qfms_assumption}

We assume each encoding block takes the commuting phase-generator form
\begin{equation}
S_\ell(x)=\exp\!\bigl(-i\,x\cdot G^{(\ell)}\bigr),
\qquad
G^{(\ell)}=(G^{(\ell)}_1,\dots,G^{(\ell)}_d),
\label{eq:app_encoder_form}
\end{equation}
where $[G^{(\ell)}_\alpha,G^{(\ell)}_\beta]=0$ for all $\alpha,\beta$. Since the $G^{(\ell)}_\alpha$ are commuting Hermitian operators, they admit a common eigenbasis $\{|\lambda^{(\ell)}_j\rangle\}_j$ with joint eigenvalues
$\lambda^{(\ell)}_j=(\lambda^{(\ell)}_{j,1},\dots,\lambda^{(\ell)}_{j,d})\in\mathbb{R}^d$:
\begin{equation}
G^{(\ell)}_\alpha|\lambda^{(\ell)}_j\rangle=\lambda^{(\ell)}_{j,\alpha}|\lambda^{(\ell)}_j\rangle.
\label{eq:app_joint_eigs}
\end{equation}
(See, e.g., standard results on simultaneous diagonalisation of commuting normal matrices; e.g. \cite{horn_matrix_2017})

Define the \emph{difference set} (layer-wise accessible input frequencies) by
\begin{equation}
\Omega^{(\ell)} := \{\lambda^{(\ell)}_j-\lambda^{(\ell)}_k:\ j,k\}.
\label{eq:app_diffset}
\end{equation}
This set is finite (indeed $|\Omega^{(\ell)}|\le (\dim\mathcal{H})^2$ for the Hilbert space $\mathcal{H}$ of the encoder block), and symmetric: $\omega\in\Omega^{(\ell)}\Rightarrow -\omega\in\Omega^{(\ell)}$. This difference-set construction is the basic encoder-side ingredient behind band-limited QFM expansions for Fourier encoders \cite{schuld_effect_2021}.

\subsection{A single encoder insertion yields only difference frequencies}\label{app:qfms_single_insertion}

The basic mechanism is that conjugation by $S_\ell(x)$ produces phases given by eigenvalue differences (equivalently, transition frequencies between eigenspaces).

\begin{lemma}[Difference-frequency expansion]\label{lem:app_diff_frequency_expansion}
Let $S(x)=\exp(-i\,x\cdot G)$ with commuting Hermitian generators $G=(G_1,\dots,G_d)$ and joint eigenbasis $\{\ket{\lambda_j}\}_j$. For any operators $A,B$ that are independent of $x$ (but may depend on $\theta$ through surrounding trainable blocks), one has
\begin{equation}
\Tr\!\bigl[A\,S(x)\,B\,S(x)^\dagger\bigr]
=\sum_{j,k} A_{kj}B_{jk}\,e^{-i(\lambda_j-\lambda_k)\cdot x},
\label{eq:app_single_layer_trace_expansion}
\end{equation}
where $A_{kj}=\langle\lambda_k|A|\lambda_j\rangle$ and similarly for $B_{jk}$. Hence only frequencies in the difference set $\Omega=\{\lambda_j-\lambda_k\}_{j,k}$ appear.
\end{lemma}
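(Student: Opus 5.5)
The plan is to diagonalise the encoder in its joint eigenbasis and then evaluate the trace directly in that basis. Since the components $G_1,\dots,G_d$ are commuting Hermitian operators, we use their common orthonormal eigenbasis $\{\ket{\lambda_j}\}_j$ (as recalled in Appendix~\ref{app:qfms_assumption}) to write
\begin{equation*}
S(x)=\sum_j e^{-i\lambda_j\cdot x}\ket{\lambda_j}\bra{\lambda_j},
\qquad
S(x)^\dagger=\sum_k e^{i\lambda_k\cdot x}\ket{\lambda_k}\bra{\lambda_k}.
\end{equation*}
This holds because $x\cdot G=\sum_\alpha x_\alpha G_\alpha$ acts on $\ket{\lambda_j}$ as multiplication by $\lambda_j\cdot x$, and the functional calculus then gives the exponential.

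Next, substitute both spectral sums into $\Tr[A\,S(x)\,B\,S(x)^\dagger]$ and use the cyclicity of the trace. This gives
\begin{equation*}
\sum_{j,k} e^{-i\lambda_j\cdot x}e^{i\lambda_k\cdot x}\,\bra{\lambda_k}A\ket{\lambda_j}\bra{\lambda_j}B\ket{\lambda_k},
\end{equation*}
which is exactly $\sum_{j,k}A_{kj}B_{jk}\,e^{-i(\lambda_j-\lambda_k)\cdot x}$. Because $A$ and $B$ do not depend on $x$, the matrix elements $A_{kj}$ and $B_{jk}$ are constants in $x$. Any dependence on $\theta$ is carried along unchanged and plays no role in the argument. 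The only $x$-dependence is therefore the phases $e^{-i(\lambda_j-\lambda_k)\cdot x}$, so the frequencies that can appear lie in the difference set $\{\lambda_j-\lambda_k\}$. Coincident differences simply merge their coefficients into a single term.

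No step here is a real obstacle. The one point needing care is bookkeeping: keep the index order consistent so that the sign of the exponent matches the stated form, and invoke the existence of an orthonormal joint eigenbasis. The latter is where commutativity is used. Without it, $e^{-ix\cdot G}$ need not be diagonal in any $x$-independent basis.
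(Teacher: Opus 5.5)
Your proposal is correct and matches the paper's proof: both insert the spectral decompositions of $S(x)$ and $S(x)^\dagger$ in the joint eigenbasis, substitute them into the trace, and read off $\langle\lambda_k|A|\lambda_j\rangle\langle\lambda_j|B|\lambda_k\rangle\,e^{-i(\lambda_j-\lambda_k)\cdot x}$. Your added remarks, that commutativity is what provides the joint eigenbasis and that coincident differences merge into a single term, are accurate.
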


\begin{proof}
Insert resolutions of the identity in the joint eigenbasis:
\[
S(x)=\sum_j e^{-i\lambda_j\cdot x}\ket{\lambda_j}\!\bra{\lambda_j},
\qquad
S(x)^\dagger=\sum_k e^{+i\lambda_k\cdot x}\ket{\lambda_k}\!\bra{\lambda_k}.
\]
Then
\begin{align*}
\Tr[A\,S(x)\,B\,S(x)^\dagger]
&=\sum_{j,k} e^{-i(\lambda_j-\lambda_k)\cdot x}\Tr\!\Bigl[A\,\ket{\lambda_j}\!\bra{\lambda_j}\,B\,
\ket{\lambda_k}\!\bra{\lambda_k}\Bigr]\\
&=\sum_{j,k} e^{-i(\lambda_j-\lambda_k)\cdot x}
\langle\lambda_k|A|\lambda_j\rangle\langle\lambda_j|B|\lambda_k\rangle,
\end{align*}
which is \eqref{eq:app_single_layer_trace_expansion}.
\end{proof}

\subsection{Re-uploading: Minkowski-sum construction of \texorpdfstring{$\Omega$}{Omega}}\label{app:qfms_minkowski}

Consider a depth-$L$ re-uploading circuit of the form
\begin{equation}
U(x;\theta)=W_L(\theta^{(L)})\,S_L(x)\cdots W_1(\theta^{(1)})\,S_1(x),
\label{eq:app_reuploading_form}
\end{equation}
and an observable model $f(x;\theta)=\Tr\!\bigl[O\,U(x;\theta)\rho\,U(x;\theta)^\dagger\bigr]$. Fix $\theta$.
By repeatedly expanding each encoder insertion using Lemma~\ref{lem:app_diff_frequency_expansion}, one finds that every term in the resulting finite sum acquires a phase
\(
e^{-i(\delta^{(1)}+\cdots+\delta^{(L)})\cdot x}
\)
with $\delta^{(\ell)}\in\Omega^{(\ell)}$. Therefore the accessible set of input harmonics satisfies
\begin{equation}
\Omega \subseteq \Omega^{(1)} \oplus \cdots \oplus \Omega^{(L)},
\label{eq:app_Omega_subset}
\end{equation}
where $\oplus$ denotes Minkowski sum: $A\oplus B=\{a+b:\ a\in A,\ b\in B\}$. This is what is meant by band-limiting under re-uploading for Fourier encoders \cite{schuld_effect_2021,perez-salinas_data_2020}.

\begin{proposition}[Encoder-accessible harmonic set]\label{prop:app_Omega_minkowski}
Under the encoder assumption \eqref{eq:app_encoder_form}, the model admits a finite Fourier expansion
\begin{equation}
f(x;\theta)=\sum_{\omega\in\Omega} a_\omega(\theta)\,e^{i\omega\cdot x},
\label{eq:app_qfm_expansion}
\end{equation}
with $\Omega$ finite and contained in the Minkowski sum \eqref{eq:app_Omega_subset}. In particular, for fixed architecture, $\Omega$ is independent of $\theta$ (only the coefficients $a_\omega(\theta)$ are trainable).
\end{proposition}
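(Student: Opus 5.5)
The plan is to prove the statement by induction on the number of encoder insertions, working in the Schrödinger picture with $\theta$ held fixed throughout. The single-layer case is Lemma~\ref{lem:app_diff_frequency_expansion}; re-uploading is handled by resolving every encoder in its joint eigenbasis simultaneously. Concretely, we write each encoder in spectral form,
\[
S_\ell(x)=\sum_{j}e^{-i\lambda^{(\ell)}_j\cdot x}\,\Pi^{(\ell)}_j ,
\qquad
\Pi^{(\ell)}_j=\ket{\lambda^{(\ell)}_j}\!\bra{\lambda^{(\ell)}_j},
\]
and substitute this into $U(x;\theta)$ and $U(x;\theta)^\dagger$. This gives
\[
U(x;\theta)=\sum_{j_1,\dots,j_L} e^{-i\sum_\ell \lambda^{(\ell)}_{j_\ell}\cdot x}\; W_L\Pi^{(L)}_{j_L}\cdots W_1\Pi^{(1)}_{j_1},
\]
and the analogous expansion of $U^\dagger$ with indices $k_1,\dots,k_L$ and conjugate phases.

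Inserting both expansions into $f(x;\theta)=\Tr[O\,U\rho U^\dagger]$ produces a finite sum over index tuples $(j_1,\dots,j_L,k_1,\dots,k_L)$. Each term is an $x$-independent scalar multiplying the phase $e^{-i\sum_\ell(\lambda^{(\ell)}_{j_\ell}-\lambda^{(\ell)}_{k_\ell})\cdot x}$. The scalar is
\[
\Tr\bigl[O\,W_L\Pi^{(L)}_{j_L}\cdots W_1\Pi^{(1)}_{j_1}\,\rho\,\Pi^{(1)}_{k_1}W_1^\dagger\cdots\Pi^{(L)}_{k_L}W_L^\dagger\bigr],
\]
which depends only on $\theta$. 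Each exponent lies in the Minkowski sum $\Omega^{(1)}\oplus\cdots\oplus\Omega^{(L)}$, up to the sign convention $\omega\mapsto-\omega$. Since every difference set is symmetric, this sign convention is harmless.

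We then group terms by the frequency $\omega=-\sum_\ell(\lambda^{(\ell)}_{j_\ell}-\lambda^{(\ell)}_{k_\ell})$ and define $a_\omega(\theta)$ as the sum of the corresponding scalars. This yields \eqref{eq:app_qfm_expansion}. We take $\Omega$ to be the set of frequencies that can arise this way. It is contained in the Minkowski sum, and it is finite because each $\Omega^{(\ell)}$ has at most $(\dim\mathcal H)^2$ elements, so the sum has at most $\prod_\ell|\Omega^{(\ell)}|$ elements. The same grouping gives the path decomposition $a_\omega=\sum_{p\in R(\omega)}A_p(\theta)$ as a by-product.

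The step needing the most care is the claim that $\Omega$ is independent of $\theta$. The index set and the phases depend only on the spectra of the $G^{(\ell)}$, which are fixed by the architecture. Only the scalar prefactors depend on $\theta$, through the $W_\ell$. So defining $\Omega$ as the full architecture-determined set of frequencies, which may include some $\omega$ with $a_\omega(\theta)\equiv 0$ for particular $\theta$, makes it $\theta$-independent by construction. The remaining point is to ensure the expansion is well defined. Distinct $\omega\in\mathbb{R}^d$ give linearly independent functions $e^{i\omega\cdot x}$, so the coefficients $a_\omega(\theta)$ are uniquely determined, and merging coincident frequencies from different paths loses nothing. Finally, the reality condition $a_{-\omega}=\overline{a_\omega}$ follows from $f\in\mathbb{R}$ together with this uniqueness.
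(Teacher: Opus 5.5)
Your proof is correct and is essentially the paper's argument. The paper applies the single-insertion difference-frequency lemma one encoder at a time and collects like terms, while you resolve all $L$ encoders in their joint eigenbases at once; this is the same mechanism, but it has the minor advantage that every scalar prefactor is genuinely $x$-independent when you group terms, and it makes explicit the $\theta$-independence of $\Omega$ and the uniqueness of the $a_\omega(\theta)$, which the paper's sketch leaves implicit.
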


\begin{proof}[Proof sketch]
Apply Lemma~\ref{lem:app_diff_frequency_expansion} to the rightmost encoder insertion; the remainder of the circuit contributes only $x$-independent operator prefactors. Repeating this layer-by-layer produces a finite sum whose $x$-dependence is always a product of phases $e^{-i\delta^{(\ell)}\cdot x}$ with
$\delta^{(\ell)}\in\Omega^{(\ell)}$. Collecting like terms yields \eqref{eq:app_qfm_expansion} with $\Omega\subseteq\bigoplus_{\ell=1}^L \Omega^{(\ell)}$.
\end{proof}

\noindent
In many common encoder families used in QFM analyses (including Pauli-rotation Fourier encoders and their re-uploading compositions), the containment \eqref{eq:app_Omega_subset} is saturated, so $\Omega=\bigoplus_{\ell=1}^L\Omega^{(\ell)}$; see, e.g., discussions in \cite{schuld_effect_2021,wiedmann_fourier_2024,mhiri_constrained_2025}.

\subsection{Path sets and redundancy}\label{app:qfms_paths}

The Minkowski-sum picture suggests a natural decomposition of input harmonics into layer-wise spectral choices (or paths), which is particularly useful for discussing redundancy and frequency-dependent multiplicities \cite{mhiri_constrained_2025,strobl_fourier_2025}.

\begin{definition}[Path set and redundancy]\label{def:app_Romega}
Let $\Omega^{(\ell)}$ be the difference set of layer $\ell$. For $\omega\in\Omega$, define the \emph{path set}
\begin{equation}
R(\omega):=\Bigl\{(\delta^{(1)},\dots,\delta^{(L)})\in\Omega^{(1)}\times\cdots\times\Omega^{(L)}:
\ \sum_{\ell=1}^L \delta^{(\ell)}=\omega\Bigr\},
\label{eq:app_Romega_def}
\end{equation}
and the \emph{redundancy} (path multiplicity) by $|R(\omega)|$.
\end{definition}

With this notation, each coefficient $a_\omega(\theta)$ admits a decomposition as a sum over paths:
\begin{equation}
a_\omega(\theta)=\sum_{p\in R(\omega)} A_p(\theta),
\label{eq:app_path_decomp}
\end{equation}
where $A_p(\theta)$ is the contribution of branches whose encoder-side phases correspond to the tuple
$p=(\delta^{(1)},\dots,\delta^{(L)})$. This is the encoder-side ingredient used in the path-aggregation view (Section~\ref{subsec:C_paths}); the trainer-side harmonic structure of $A_p(\theta)$ is treated in Appendix~\ref{appendix:pauli_propagation}.

\subsection{Examples}\label{app:qfms_examples}

\paragraph{Single-qubit Pauli encoder.}
Let $S(x)=e^{-i x Z/2}$. The generator has eigenvalues $\pm \tfrac12$, so the difference set is
$\Omega^{(1)}=\{0,\pm 1\}$. With $L$ re-uploadings,
\(
\Omega\subseteq\{ -L,-L+1,\dots,L-1,L\}.
\)
This recovers the standard band-limit scaling for trigonometric polynomials produced by repeated Pauli encodings \cite{schuld_effect_2021,perez-salinas_data_2020}.\\

\paragraph{Multi-qubit commuting Pauli encoders.}
For $S(x)=e^{-i x \sum_{q=1}^n Z_q/2}$, the spectrum of $\sum_q Z_q/2$ is $\{(n-2w)/2:\ w=0,\dots,n\}$, hence
$\Omega^{(1)}=\{-n,-n+1,\dots,n-1,n\}$ (all integers in the range), and $\Omega$ grows by Minkowski addition over $L$ layers. Variants with weighted generators $\sum_q \alpha_q Z_q/2$ yield difference sets generated by $\{\pm \alpha_q\}$ combinations, consistent with the general difference-set picture \cite{schuld_effect_2021}.\\

\paragraph{Multi-dimensional input.}
For $d>1$, the difference sets $\Omega^{(\ell)}\subset\mathbb{R}^d$ are finite subsets of $\mathbb{R}^d$ and all statements above hold component-wise, with Minkowski sums taken in $\mathbb{R}^d$. (For general background on multidimensional Fourier series on tori, see, e.g., \cite{casas_multidimensional_2023}.)\\

\paragraph{Reality constraint.}
If $f(x;\theta)\in\mathbb{R}$ for all $(x,\theta)$, then the Fourier coefficients satisfy the conjugate-symmetry condition $a_{-\omega}(\theta)=\overline{a_\omega(\theta)}$, and one may take $\Omega$ symmetric. This is the convention used in the main text (a standard fact from Fourier analysis; see, e.g., \cite{casas_multidimensional_2023}).


\section{Proofs for coefficient statistics}\label{appendix:covariance_proofs}

\subsection{Character orthogonality on the parameter torus}\label{app:char_orth}

We take $\mathbb{T}^m\simeq [0,2\pi)^m$ with the uniform probability measure
\begin{equation}
\mathrm{d}\mu(\theta)=(2\pi)^{-m}\,\mathrm{d}\theta_1\cdots \mathrm{d}\theta_m .
\end{equation}
For $k\in\mathbb{Z}^m$ define the character $\psi_k(\theta)=e^{ik\cdot\theta}$.
The orthogonality relations below are standard facts about Fourier series on the torus; see, e.g.,
\cite{casas_multidimensional_2023} or any standard harmonic analysis text (e.g. \cite{katznelson_introduction_2004}).

\begin{lemma}[Orthogonality of torus characters]\label{lem:app_char_orth}
For $k,l\in\mathbb{Z}^m$,
\begin{equation}
\mathbb{E}_{\theta\sim\mathrm{Unif}(\mathbb{T}^m)}\!\left[e^{i(k-l)\cdot\theta}\right]
=
\int_{\mathbb{T}^m} e^{i(k-l)\cdot\theta}\,\mathrm{d}\mu(\theta)
=
\delta_{kl}.
\label{eq:app_char_orth}
\end{equation}
Equivalently, $\mathbb{E}_\theta[\psi_k(\theta)\overline{\psi_l(\theta)}]=\delta_{kl}$ and
$\mathbb{E}_\theta[\psi_k(\theta)]=\delta_{k0}$.
\end{lemma}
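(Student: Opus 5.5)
The plan is to use the product structure of the uniform measure on $\mathbb{T}^m$ to reduce the $m$-dimensional integral to one-dimensional integrals. Set $n:=k-l\in\mathbb{Z}^m$. Since $e^{in\cdot\theta}=\prod_{r=1}^m e^{in_r\theta_r}$ and $\mathrm{d}\mu(\theta)=\prod_{r}\mathrm{d}\theta_r/(2\pi)$, Fubini's theorem applies: the integrand is bounded and measurable on a finite measure space. It gives
\begin{equation*}
\int_{\mathbb{T}^m} e^{in\cdot\theta}\,\mathrm{d}\mu(\theta)=\prod_{r=1}^m \frac{1}{2\pi}\int_0^{2\pi} e^{in_r\theta_r}\,\mathrm{d}\theta_r .
\end{equation*}

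Next I would evaluate each one-dimensional factor directly. If $n_r=0$ the integrand is $1$ and the factor equals $1$. If $n_r\neq 0$, the antiderivative $e^{in_r\theta_r}/(in_r)$ is $2\pi$-periodic because $n_r\in\mathbb{Z}$, so the factor vanishes. The product is therefore $1$ exactly when $n_r=0$ for all $r$, i.e. when $k=l$, and $0$ otherwise. This gives $\delta_{kl}$ and establishes \eqref{eq:app_char_orth}.

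The equivalent forms then follow by rewriting. Since $\overline{\psi_l(\theta)}=e^{-il\cdot\theta}$, we have $\psi_k\overline{\psi_l}=e^{i(k-l)\cdot\theta}$, which is the first reformulation. Taking $l=0$ and using $\psi_0\equiv 1$ gives $\mathbb{E}_\theta[\psi_k]=\delta_{k0}$.

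There is no genuine obstacle here. The only point needing care is the integrality of $n_r$: it is what makes the boundary terms cancel. This is exactly why the statement is restricted to $k,l\in\mathbb{Z}^m$, which matches the paper's assumption that the harmonic support satisfies $K\subset\mathbb{Z}^m$.
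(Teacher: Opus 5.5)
Your proof is correct and follows the paper's own argument. Both set $n=k-l$, factorise the integral over the product measure into one-dimensional integrals, and note that each factor vanishes unless $n_r=0$ because $n_r$ is an integer; you also derive the two equivalent forms via $\overline{\psi_l}=e^{-il\cdot\theta}$ and $l=0$, which the paper leaves implicit.
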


\begin{proof}
Write $n:=k-l\in\mathbb{Z}^m$ and factorise the integral:
\begin{align}
\int_{\mathbb{T}^m} e^{in\cdot\theta}\,\mathrm{d}\mu(\theta)
&=(2\pi)^{-m}\prod_{a=1}^m\int_0^{2\pi} e^{in_a\theta_a}\,\mathrm{d}\theta_a.
\end{align}
For each coordinate,
\[
\int_0^{2\pi} e^{in_a\theta_a}\,\mathrm{d}\theta_a
=
\begin{cases}
2\pi, & n_a=0,\\
\frac{e^{i2\pi n_a}-1}{in_a}=0, & n_a\neq 0,
\end{cases}
\]
since $e^{i2\pi n_a}=1$ for integer $n_a$. Thus the product equals $1$ if $n=0$ (i.e.\ $k=l$) and $0$ otherwise, giving \eqref{eq:app_char_orth}.
\end{proof}

\subsection{Mean, second moment, and covariance as row Gram matrices}\label{app:cov_CCdag}

Let $a(\theta)\in\mathbb{C}^{|\Omega|}$ be the coefficient vector with Fourier expansion on $\mathbb{T}^m$
\begin{equation}
a(\theta)=\sum_{k\in K} C_{\bullet k}\,e^{ik\cdot\theta}=C\,\psi(\theta),
\qquad
\psi_k(\theta)=e^{ik\cdot\theta},
\label{eq:app_a_expansion}
\end{equation}
where $C\in\mathbb{C}^{|\Omega|\times|K|}$ and $K\subset\mathbb{Z}^m$ is finite. Define the centred vector
\begin{equation}
\tilde a(\theta)=a(\theta)-\mathbb{E}_\theta[a(\theta)].
\label{eq:app_center_def}
\end{equation}

It is convenient to express centring by removing the $k=0$ mode with the diagonal projector
\begin{equation}
P:=\mathrm{diag}(\mathbf{1}_{k\neq 0})\in\mathbb{R}^{|K|\times|K|},
\label{eq:app_P_def}
\end{equation}
which is the identity when $0\notin K$.

\begin{proposition}[Moments from $C$]\label{prop:app_moments_from_C}
Under $\theta\sim\mathrm{Unif}(\mathbb{T}^m)$, the coefficient vector \eqref{eq:app_a_expansion} satisfies
\begin{align}
\mathbb{E}_\theta[a(\theta)] &= C_{\bullet 0}
\quad\text{(interpreting $C_{\bullet 0}=0$ if $0\notin K$)}, 
\label{eq:app_mean}\\
\mathbb{E}_\theta\!\left[a(\theta)a(\theta)^\dagger\right] &= C C^\dagger,
\label{eq:app_second_moment}\\
\mathrm{Cov}\!\left[a(\theta)\right]
:=\mathbb{E}_\theta\!\left[\tilde a(\theta)\tilde a(\theta)^\dagger\right]
&= C P C^\dagger.
\label{eq:app_cov}
\end{align}
In components, for $\omega,\mu\in\Omega$,
\begin{equation}
\mathrm{Cov}[a(\theta)]_{\omega\mu}
=
\sum_{k\in K\setminus\{0\}} C_{\omega k}\,\overline{C_{\mu k}}.
\label{eq:app_cov_components}
\end{equation}
\end{proposition}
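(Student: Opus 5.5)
The plan is to derive all three identities from the factorisation $a(\theta)=C\psi(\theta)$ together with Lemma~\ref{lem:app_char_orth}, using linearity of expectation to move $C$ and $C^\dagger$ outside every average. Since $K$ is finite and every entry of $\psi(\theta)$ is bounded, all expectations are finite sums of integrals of bounded continuous functions. Interchanging expectation with finite sums is therefore immediate and needs no convergence argument.

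\emph{Mean.} For the mean I write $\mathbb{E}_\theta[a(\theta)]=C\,\mathbb{E}_\theta[\psi(\theta)]$. Lemma~\ref{lem:app_char_orth} gives $\mathbb{E}_\theta[\psi_k]=\delta_{k0}$, so this is the vector $e_0$ when $0\in K$ and the zero vector otherwise. Hence $C e_0=C_{\bullet 0}$, with the stated convention $C_{\bullet 0}=0$ when $0\notin K$, which proves \eqref{eq:app_mean}.

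\emph{Second moment.} For the second moment I write $\mathbb{E}_\theta[a a^\dagger]=C\,\mathbb{E}_\theta[\psi\psi^\dagger]\,C^\dagger$. The $(k,l)$ entry of the inner average is $\mathbb{E}_\theta[e^{i(k-l)\cdot\theta}]=\delta_{kl}$ by Lemma~\ref{lem:app_char_orth}, so $\mathbb{E}_\theta[\psi\psi^\dagger]=I_{|K|}$. This gives $CC^\dagger$.

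\emph{Covariance.} For the covariance I use the mean just computed to write $\tilde a(\theta)=C\psi(\theta)-C_{\bullet 0}=C P\psi(\theta)$, since subtracting $C_{\bullet0}$ removes exactly the $k=0$ term $C_{\bullet0}\psi_0=C_{\bullet0}$. When $0\notin K$, $P=I$ and nothing is subtracted. Repeating the second-moment computation gives $\mathbb{E}_\theta[\tilde a\tilde a^\dagger]=CP\,\mathbb{E}_\theta[\psi\psi^\dagger]\,P^\dagger C^\dagger=CP^2C^\dagger=CPC^\dagger$, because $P$ is a real diagonal idempotent. As a cross-check, the same result follows from the decomposition \eqref{eq:prelims_second_moment_decomp}: $CC^\dagger-C_{\bullet0}C_{\bullet0}^\dagger=C(I-e_0e_0^\top)C^\dagger$. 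The component formula \eqref{eq:app_cov_components} is then just the $(\omega,\mu)$ entry of $CPC^\dagger$, namely $\sum_k C_{\omega k}P_{kk}\overline{C_{\mu k}}$.

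There is no genuine obstacle. The only point needing care is bookkeeping: the $0\notin K$ case must be handled consistently through the conventions for $P$ and $C_{\bullet0}$. One should also note that the expansion's uniqueness is not needed, since only the given finite representation is used.
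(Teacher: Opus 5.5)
Your proof is correct and follows essentially the same route as the paper: apply character orthogonality (Lemma~\ref{lem:app_char_orth}) to $a=C\psi$, and write centring as $\tilde a=CP\psi$. The differences are cosmetic. You package the paper's explicit double sum as $C\,\mathbb{E}_\theta[\psi\psi^\dagger]\,C^\dagger=CC^\dagger$ with $\mathbb{E}_\theta[\psi\psi^\dagger]=I$, and you add a consistent cross-check via \eqref{eq:prelims_second_moment_decomp}.
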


\begin{proof}
For the mean, apply Lemma~\ref{lem:app_char_orth} to \eqref{eq:app_a_expansion}:
\[
\mathbb{E}_\theta[a(\theta)]
=
\sum_{k\in K} C_{\bullet k}\,\mathbb{E}_\theta[e^{ik\cdot\theta}]
=
\sum_{k\in K} C_{\bullet k}\,\delta_{k0}
=
C_{\bullet 0}.
\]
For the second moment,
\begin{align}
\mathbb{E}_\theta[a(\theta)a(\theta)^\dagger]
&=
\mathbb{E}_\theta\!\left[\sum_{k\in K}\sum_{l\in K} C_{\bullet k}C_{\bullet l}^\dagger\,
e^{i(k- l)\cdot\theta}\right] \nonumber\\
&=
\sum_{k,l\in K} C_{\bullet k}C_{\bullet l}^\dagger\,
\mathbb{E}_\theta[e^{i(k-l)\cdot\theta}] \nonumber\\
&=
\sum_{k\in K} C_{\bullet k}C_{\bullet k}^\dagger
=
C C^\dagger,
\end{align}
again by Lemma~\ref{lem:app_char_orth}. Finally, the covariance follows from $\tilde a(\theta)=a(\theta)-\mathbb{E}_\theta[a(\theta)]$ and the fact that subtracting the constant mode removes exactly the $k=0$ component:
\[
\tilde a(\theta)=\sum_{k\in K\setminus\{0\}} C_{\bullet k}e^{ik\cdot\theta}
= C P \psi(\theta),
\]
and repeating the second-moment computation with the restricted sum yields
$\mathbb{E}_\theta[\tilde a\,\tilde a^\dagger]=\sum_{k\neq 0} C_{\bullet k}C_{\bullet k}^\dagger=CPC^\dagger$,
which is \eqref{eq:app_cov}.
\end{proof}

\begin{corollary}[Variance and row-energy identities]\label{cor:app_variance_row_energy}
For each $\omega\in\Omega$,
\begin{equation}
\mathrm{Var}[a_\omega(\theta)]
=\mathrm{Cov}[a(\theta)]_{\omega\omega}
=\sum_{k\in K\setminus\{0\}} |C_{\omega k}|^2.
\label{eq:app_variance}
\end{equation}
Moreover, Parseval/Plancherel on $\mathbb{T}^m$ for the trigonometric polynomial $a_\omega(\theta)$ gives
\begin{equation}
\mathbb{E}_\theta\!\left[|a_\omega(\theta)|^2\right]
=\sum_{k\in K}|C_{\omega k}|^2
=|C_{\omega 0}|^2+\sum_{k\neq 0}|C_{\omega k}|^2
=|\mathbb{E}_\theta[a_\omega(\theta)]|^2+\mathrm{Var}[a_\omega(\theta)].
\label{eq:app_parseval}
\end{equation}
\end{corollary}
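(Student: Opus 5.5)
The plan is to read both identities off Proposition~\ref{prop:app_moments_from_C} by taking diagonal entries, and then to use the second-moment identity \eqref{eq:app_second_moment} together with the mean \eqref{eq:app_mean} for the Parseval decomposition.

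\emph{Variance identity.} For the first identity, the variance of a complex scalar under the convention \eqref{eq:prelims_cov_def} is $\mathrm{Var}[a_\omega]=\mathbb{E}_\theta[|\tilde a_\omega|^2]$. This is exactly the $(\omega,\omega)$ entry of $\mathbb{E}_\theta[\tilde a\,\tilde a^\dagger]$. Setting $\mu=\omega$ in \eqref{eq:app_cov_components} then gives $\sum_{k\neq 0}C_{\omega k}\overline{C_{\omega k}}=\sum_{k\neq0}|C_{\omega k}|^2$, which is the claim.

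\emph{Parseval identity.} For the second identity, take the $(\omega,\omega)$ entry of \eqref{eq:app_second_moment}. This yields $\mathbb{E}_\theta[|a_\omega|^2]=\sum_{k\in K}|C_{\omega k}|^2$. Equivalently, this is Lemma~\ref{lem:app_char_orth} applied to the finite double sum $\sum_{k,l}C_{\omega k}\overline{C_{\omega l}}e^{i(k-l)\cdot\theta}$, i.e.\ Parseval for a trigonometric polynomial. No convergence issue arises because $K$ is finite.

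Next, split off the $k=0$ term, with the convention $C_{\omega0}=0$ when $0\notin K$. By \eqref{eq:app_mean}, $|C_{\omega 0}|^2=|\mathbb{E}_\theta[a_\omega]|^2$, and the remaining sum equals $\mathrm{Var}[a_\omega]$ by the first part.

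\emph{Consistency check.} As a cross-check, the same decomposition follows from the scalar case of \eqref{eq:prelims_second_moment_decomp}. It also confirms that the cross terms between the mean and the centred part vanish, since $\mathbb{E}_\theta[\tilde a_\omega]=0$.

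There is no real obstacle. The only point needing care is the convention that $C_{\omega 0}=0$ when $0\notin K$, so that the $k=0$ split remains valid in that case, where $P$ is the identity.
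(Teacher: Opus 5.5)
Your proposal is correct and matches the paper's proof. The variance identity is read off as the diagonal of $CPC^\dagger$, and the Parseval identity follows from character orthogonality applied to $|a_\omega|^2$, which is the same as taking the diagonal of $CC^\dagger$, before splitting off the $k=0$ term via the mean identity (with $C_{\omega 0}=0$ when $0\notin K$, as the paper also assumes).
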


\begin{proof}
The variance identity is the diagonal of \eqref{eq:app_cov}. For \eqref{eq:app_parseval}, expand
$|a_\omega(\theta)|^2$ using \eqref{eq:app_a_expansion} and apply Lemma~\ref{lem:app_char_orth} to eliminate cross terms:
\[
\mathbb{E}_\theta[|a_\omega(\theta)|^2]
=\sum_{k,l} C_{\omega k}\overline{C_{\omega l}}\,
\mathbb{E}_\theta[e^{i(k-l)\cdot\theta}]
=\sum_k |C_{\omega k}|^2,
\]
then separate the $k=0$ term and use \eqref{eq:app_mean}--\eqref{eq:app_variance}.
(See, e.g., standard Fourier analysis texts for Parseval/Plancherel on $\mathbb{T}^m$; e.g.\ \cite{casas_multidimensional_2023} or \cite{stein_fourier_2003})
\end{proof}


\section{Pauli propagation and node expansion}\label{appendix:pauli_propagation}

This appendix records the trainer-side derivations used in Sections~\ref{subsec:param_harmonics}, \ref{subsec:C_pauli_nodes}, and
\ref{sec:coeff_stats}. We formalise \emph{Pauli propagation} (Heisenberg back-propagation) for single-use Pauli-rotation ansatz with Clifford interleavings, derive the induced trigonometric (node) expansion, and show how trig-to-character conversion yields $K\subset\{-1,0,1\}^m$. These manipulations are standard in the stabilizer/Clifford
literature and in Pauli-basis simulation tools; see e.g.\ \cite{gottesman_stabilizer_1997,aaronson_improved_2004,dehaene_clifford_2003}
for background on Clifford/Pauli propagation and \cite{nielsen_quantum_2010} for general operator identities and gate conjugation rules.

\subsection{Setup: single-use Pauli rotations with Clifford interleavings}\label{app:pp_setup}

Consider an $m$-parameter circuit of the form
\begin{equation}
U(\theta)=U_L(\theta^{(L)})\cdots U_1(\theta^{(1)}),
\qquad
U_\ell(\theta^{(\ell)})=\Big(\prod_{r\in\mathcal{I}_\ell} e^{-\frac{i}{2}\theta_r P_r}\Big)\,U_\ell^{\mathrm{Cliff}},
\label{eq:app_pp_circuit}
\end{equation}
where each $P_r$ is an $n$-qubit Pauli string and each $U_\ell^{\mathrm{Cliff}}$ is a fixed Clifford unitary. The \emph{single-use} assumption means each scalar parameter $\theta_r$
appears in exactly one rotation gate in the entire circuit. Here $\mathcal{I}_\ell\subseteq\{1,\dots,m\}$ denotes the set of parameter indices whose Pauli rotations appear in layer $\ell$; under single-use, the sets $\{\mathcal{I}_\ell\}_{\ell=1}^L$ are disjoint and their union is $\{1,\dots,m\}$.

We study an expectation-value model
\begin{equation}
f(x;\theta)=\Tr\!\bigl[O\,U(x;\theta)\rho\,U(x;\theta)^\dagger\bigr],
\label{eq:app_pp_model}
\end{equation}
but in this appendix we suppress encoder dependence (treated in Appendix~\ref{appendix:qfms}) and focus on the trainer-induced
$\theta$-dependence.

\subsection{Pauli conjugation branching}\label{app:pp_dichotomy}

The basic step is conjugating a Pauli string by a Pauli rotation. This identity is standard and follows either from explicit multiplication using $P^2=I$ or from the adjoint action series $e^{A}Be^{-A}=\sum_{t\ge 0}\frac{1}{t!}\mathrm{ad}_A^t(B)$ (see e.g.\ \cite{nielsen_quantum_2010} for operator identities). In the Pauli case, the series closes in two dimensions because the commutator algebra generated by an anti-commuting pair is $\mathfrak{su}(2)$. 

\begin{lemma}[Pauli conjugation branching]\label{lem:pauli_conjugation}
Consider some unitary $U_r$ associated to some Pauli string that hits a particular $\theta_r$ such that $U_r(\theta_r)=\exp(-\tfrac{i}{2}\theta_r P_r)$ with $P_r$ a Pauli string and let $Q$ also be any Pauli string. Such a unitary is not necessarily unique to that $\theta_r$ and generally not equivalent to some layer unitary $U_\ell$. If $[P_r,Q]=0$ then $U_r(\theta_r)^\dagger Q U_r(\theta_r)=Q$. If $\{P_r,Q\}=0$ then
\begin{equation}
U_r(\theta_r)^\dagger Q U_r(\theta_r)=Q\cos\theta_r+(iP_rQ)\sin\theta_r,
\label{eq:app_pp_dichotomy}
\end{equation}
where $iP_rQ$ is again a Pauli string up to an overall sign (hence Hermitian).
\end{lemma}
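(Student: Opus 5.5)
The plan is to compute the conjugation directly by expanding the rotation in closed form. Because $P_r^2=\mathds{1}$, the exponential series for $U_r(\theta_r)$ collapses to
\[
U_r(\theta_r)=\cos\tfrac{\theta_r}{2}\,\mathds{1}-i\sin\tfrac{\theta_r}{2}\,P_r,
\qquad
U_r(\theta_r)^\dagger=\cos\tfrac{\theta_r}{2}\,\mathds{1}+i\sin\tfrac{\theta_r}{2}\,P_r .
\]
Two distinct Pauli strings either commute or anticommute, since single-qubit Paulis satisfy this and the sign multiplies tensor-factor-wise. So the two cases in the statement are exhaustive, and I will treat them separately.

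\textbf{Commuting case.} If $[P_r,Q]=0$, then $Q$ commutes with $U_r(\theta_r)$, because $U_r(\theta_r)$ is a polynomial in $P_r$. Unitarity then gives $U_r^\dagger Q U_r=U_r^\dagger U_r Q=Q$ immediately.

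\textbf{Anticommuting case.} If $\{P_r,Q\}=0$, I use $QP_r=-P_rQ$ to move every $P_r$ to the left of $Q$. This gives $QU_r(\theta_r)=U_r(\theta_r)^\dagger Q$, so
\[
U_r^\dagger Q U_r=(U_r^\dagger)^2 Q=\bigl(\cos\tfrac{\theta_r}{2}+i\sin\tfrac{\theta_r}{2}P_r\bigr)^2Q .
\]
Expanding the square with $P_r^2=\mathds{1}$ and the double-angle identities yields $(\cos\theta_r\,\mathds{1}+i\sin\theta_r\,P_r)Q$. This is exactly \eqref{eq:app_pp_dichotomy}. As an alternative route, one can cite the adjoint-series argument: $\mathrm{ad}_{P_r}(Q)=2P_rQ$ and $\mathrm{ad}_{P_r}(P_rQ)=2Q$, so the series closes on the span of $\{Q,iP_rQ\}$ and resums to $\cos$ and $\sin$. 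The direct computation is shorter, so I will use that.

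\textbf{Hermiticity of $iP_rQ$.} The product of two Pauli strings is a Pauli string times a phase in $\{\pm1,\pm i\}$. Anticommutation gives
\[
(iP_rQ)^\dagger=-iQP_r=iP_rQ,
\]
so $iP_rQ$ is Hermitian. A Hermitian operator of the form (phase)$\times$(Pauli string) must carry a phase of $\pm1$, which gives the claim.

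The only real subtlety, and the step I expect to be the main obstacle, is sign bookkeeping. I need to fix the convention $U^\dagger QU$ (Heisenberg picture), since that convention is what produces $+\sin\theta_r$ rather than $-\sin\theta_r$. I will also double-check the double-angle step, $\cos^2\tfrac{\theta}{2}-\sin^2\tfrac{\theta}{2}=\cos\theta$ and $2\sin\tfrac{\theta}{2}\cos\tfrac{\theta}{2}=\sin\theta$, against the $(U_r^\dagger)^2$ ordering.
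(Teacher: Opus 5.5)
Your proof is correct and follows essentially the same route as the paper: the closed form of $U_r(\theta_r)$ from $P_r^2=I$, direct algebra using the (anti)commutation relation together with the double-angle identities, and a phase argument showing $iP_rQ$ is a Hermitian Pauli string up to sign. The only difference is cosmetic. You push $Q$ through to obtain $(U_r^\dagger)^2Q$, whereas the paper expands the four-term product using $P_rQP_r=-Q$; yours is a slightly tidier way to organise the same computation.
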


\begin{proof}
Using $P_r^2=I$, one has the closed form
\[
U_r(\theta_r)=\cos(\tfrac{\theta_r}{2})I-i\sin(\tfrac{\theta_r}{2})P_r,
\qquad
U_r(\theta_r)^\dagger=\cos(\tfrac{\theta_r}{2})I+i\sin(\tfrac{\theta_r}{2})P_r.
\]
If $[P_r,Q]=0$, then $P_r$ commutes with $Q$ and the cross terms cancel, giving $U^\dagger Q U=Q$.
If $\{P_r,Q\}=0$, then $P_rQP_r=-Q$ and direct multiplication yields
\[
U^\dagger Q U
=
Q\big(\cos^2\tfrac{\theta_r}{2}-\sin^2\tfrac{\theta_r}{2}\big)
+
(iP_rQ)\big(2\sin\tfrac{\theta_r}{2}\cos\tfrac{\theta_r}{2}\big)
=
Q\cos\theta_r+(iP_rQ)\sin\theta_r,
\]
using $\cos\theta=\cos^2(\theta/2)-\sin^2(\theta/2)$ and $\sin\theta=2\sin(\theta/2)\cos(\theta/2)$.
Finally, if $\{P_r,Q\}=0$ then $P_rQ=\pm iR$ for some Pauli string $R$, so $iP_rQ=\pm R$ is Hermitian and Pauli up to sign.
\end{proof}

\subsection{Heisenberg back-propagation and node branching}\label{app:pp_branching}

Let $O_0:=O$ and define the back-propagated observable by peeling off gates from the right (Heisenberg picture):
\begin{equation}
O_{t+1}(\theta):=U_{t+1}(\theta)^\dagger\,O_t(\theta)\,U_{t+1}(\theta),
\qquad t=0,\dots,L-1,
\label{eq:app_pp_backprop}
\end{equation}
so that $O_L(\theta)=U(\theta)^\dagger O U(\theta)$ and $f(\theta)=\Tr[O_L(\theta)\rho]$ (up to encoder-side factors). Note that unlike the layer index $\ell$ which increases as we propagate the input state $\rho$ to the final unitary $U_L$, here we use $t$ for the Heisenberg picture where we propagate the observable $O$ backwards to the beginning of the circuit. We choose distinct indices to avoid confusion over what we are propagating and in which direction.

Each time $O_t(\theta)$ is conjugated by a parametrised Pauli rotation, Lemma~\ref{lem:pauli_conjugation} implies either: (i) no change (commuting case), or (ii) a two-term branching with factors $\cos\theta_r$ and $\sin\theta_r$ multiplying Pauli strings. Clifford gates conjugate Pauli strings to Pauli strings up to phase and introduce no trigonometric dependence; see \cite{gottesman_stabilizer_1997,aaronson_improved_2004,dehaene_clifford_2003} for standard treatments of the Clifford action on the Pauli group.

Consequently, $O_L(\theta)$ can be written as a finite sum of the form
\begin{equation}
O_L(\theta)=\sum_{\nu\in\mathcal{N}} \alpha_\nu\, P_\nu \, M_\nu(\theta),
\label{eq:app_pp_OL_node_sum}
\end{equation}
where:
\begin{itemize}
\item $\mathcal{N}$ indexes the nodes (branches) of the propagation tree;
\item $\alpha_\nu\in\{\pm 1,\pm i\}$ collects Pauli/Clifford phases;
\item $P_\nu$ is the branch Pauli string; and
\item $M_\nu(\theta)$ is a product of $\cos\theta_r$ and $\sin\theta_r$ factors over the active parameters on that branch.
\end{itemize}
Tracing against $\rho$ gives the corresponding scalar node expansion
\begin{equation}
f(\theta)=\Tr[O_L(\theta)\rho]
=
\sum_{\nu\in\mathcal{N}} d_\nu\, M_\nu(\theta),
\qquad
d_\nu:=\alpha_\nu\,\Tr[P_\nu\rho].
\label{eq:app_pp_scalar_node_expansion}
\end{equation}
When encoder dependence is present, each scalar coefficient $d_\nu$ further factorises into an encoder-side trigonometric node
factor (Appendix~\ref{appendix:qfms}), yielding the joint node expansion stated in \eqref{eq:node_expansion_fx} of the main text.

\subsection{Trig-to-character conversion and $K\subset\{-1,0,1\}^m$}\label{app:pp_trig_to_char}

Each node monomial $M_\nu(\theta)$ is a product of sines and cosines in \emph{distinct} parameters under single-use, so it admits
an exact Fourier expansion on $\mathbb{T}^m$ supported on $\{-1,0,1\}^m$. This is a standard Fourier-series identity (see e.g.\
\cite{katznelson_introduction_2004,stein_fourier_2003} for Fourier analysis on tori).

\begin{lemma}[Character support of a node monomial]\label{lem:app_pp_node_support}
Let
\(
M(\theta)=\prod_{r\in A_c}\cos\theta_r\prod_{r\in A_s}\sin\theta_r
\)
with $A_c\cap A_s=\emptyset$. Then
\begin{equation}
M(\theta)=\sum_{k\in\{-1,0,1\}^m} \widehat M(k)\,e^{ik\cdot\theta},
\label{eq:app_pp_M_char_exp}
\end{equation}
where $\widehat M(k)\neq 0$ only if $k_r=0$ for $r\notin A_c\cup A_s$ and $k_r\in\{\pm 1\}$ for $r\in A_c\cup A_s$.
In particular, $|\mathrm{supp}_k(M)|=2^{|A_c\cup A_s|}$.
\end{lemma}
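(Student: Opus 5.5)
The plan is to expand each factor of $M(\theta)$ in characters using Euler's formulas and then multiply out the product, exploiting the fact that distinct parameters contribute independent coordinates of $k$. For $r\in A_c$ write $\cos\theta_r=\tfrac12(e^{i\theta_r}+e^{-i\theta_r})$, and for $r\in A_s$ write $\sin\theta_r=\tfrac{1}{2i}(e^{i\theta_r}-e^{-i\theta_r})$. Since $A_c\cap A_s=\emptyset$, each active index $r\in A:=A_c\cup A_s$ appears in exactly one factor. Expanding the product over $r\in A$ therefore gives a sum over sign choices $\epsilon\in\{\pm1\}^{A}$ of terms $\prod_{r\in A}c_r(\epsilon_r)\,e^{i\epsilon_r\theta_r}$. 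Here $c_r(\pm1)=\tfrac12$ for $r\in A_c$ and $c_r(\pm1)=\pm\tfrac{1}{2i}$ for $r\in A_s$.

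Next I identify each term with a character $e^{ik\cdot\theta}$. Set $k_r=\epsilon_r$ for $r\in A$ and $k_r=0$ for $r\notin A$. Because no parameter outside $A$ appears in $M$, these coordinates are zero. This gives the expansion in the statement, with
\[
\widehat M(k)=\prod_{r\in A}c_r(k_r)=2^{-|A|}\,(-i)^{|A_s|}\prod_{r\in A_s}k_r
\]
on the set $\{k: k_r\in\{\pm1\}\ \text{for } r\in A,\ k_r=0\ \text{otherwise}\}$, and $\widehat M(k)=0$ elsewhere. The support condition is then immediate.

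The only step needing care is the exact cardinality claim. It requires two things.

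\textbf{Distinct sign choices give distinct characters.} The map $\epsilon\mapsto k$ is injective, since $k$ restricted to $A$ equals $\epsilon$. So no two terms can merge or cancel.

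\textbf{Each coefficient is nonzero.} $|\widehat M(k)|=2^{-|A|}\neq 0$ for every such $k$.

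Together these show the support has exactly $2^{|A|}$ elements. Uniqueness of the coefficients follows from the orthogonality of torus characters (Lemma~\ref{lem:app_char_orth}): any finite character expansion of $M$ must have coefficients $\int M\,e^{-ik\cdot\theta}\,\mathrm d\mu$. Hence the support is intrinsic and does not depend on how the product was expanded.
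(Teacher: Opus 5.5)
Your proposal is correct and follows the same route as the paper: write each $\cos\theta_r$ and $\sin\theta_r$ in characters and multiply out, so that each active coordinate contributes an independent sign choice. You are slightly more careful than the paper, which simply counts sign choices; you also check that distinct sign patterns give distinct $k$ and that every coefficient has modulus $2^{-|A|}$, and these are exactly what the exact cardinality claim needs.
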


\begin{proof}
Use the elementary character decompositions
\[
\cos\theta_r=\tfrac12(e^{i\theta_r}+e^{-i\theta_r}),\qquad
\sin\theta_r=\tfrac{1}{2i}(e^{i\theta_r}-e^{-i\theta_r}),
\]
and multiply out. Each active parameter contributes an independent choice of sign in the exponent, producing $k_r=\pm 1$ on active
coordinates and $k_r=0$ otherwise. Counting sign choices gives $2^{|A_c\cup A_s|}$ terms.
\end{proof}

\begin{proposition}[Single-use parameter-harmonic support]\label{prop:app_pp_K_support}
In the single-use setting \eqref{eq:app_pp_circuit}, the full model admits a finite Fourier expansion on the parameter torus,
\begin{equation}
f(\theta)=\sum_{k\in K} \beta_k\,e^{ik\cdot\theta},
\label{eq:app_pp_f_param_FT}
\end{equation}
with $K\subset\{-1,0,1\}^m$. More generally, each trainable Fourier coefficient $a_\omega(\theta)$ admits an expansion of the form
\eqref{eq:prelims_param_fourier_generic} with the same support restriction.
\end{proposition}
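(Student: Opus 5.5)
The proof combines the node expansion of Section~\ref{app:pp_branching} with Lemma~\ref{lem:app_pp_node_support}. It has three steps.

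\textbf{Step 1 (node expansion).} Back-propagate $O$ gate by gate through \eqref{eq:app_pp_circuit}, inducting on the number of gates peeled off. Assume the current operator is a finite sum $\sum_\nu \alpha_\nu P_\nu M_\nu(\theta)$, where each $M_\nu$ is a product of $\cos\theta_r$ and $\sin\theta_r$ over distinct parameters already encountered. The three possible next gates act as follows.
\begin{itemize}
\item A Clifford gate maps each $P_\nu$ to $\pm P'_\nu$ and leaves $M_\nu$ unchanged.
\item A rotation $e^{-\frac{i}{2}\theta_r P_r}$ that commutes with $P_\nu$ leaves that term unchanged, by Lemma~\ref{lem:pauli_conjugation}.
\item A rotation that anticommutes with $P_\nu$ replaces the term by two terms, with monomials $M_\nu\cos\theta_r$ and $M_\nu\sin\theta_r$.
\end{itemize}
Single-use is what keeps the monomials of the required form. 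Since $\theta_r$ labels exactly one gate, it cannot already appear in $M_\nu$, because all previously encountered gates carry different parameters. Hence every monomial remains a product over \emph{distinct} parameters with disjoint sets $A_c,A_s$. Tracing against $\rho$ gives \eqref{eq:app_pp_scalar_node_expansion}.

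\textbf{Step 2 (character expansion).} Apply Lemma~\ref{lem:app_pp_node_support} to each $M_\nu$. Each is a finite character sum supported in $\{-1,0,1\}^m$. Summing the finitely many nodes and collecting terms gives \eqref{eq:app_pp_f_param_FT} with $\beta_k=\sum_\nu d_\nu \widehat M_\nu(k)$ and $K=\bigcup_\nu \mathrm{supp}_k(M_\nu)\subset\{-1,0,1\}^m$. Finiteness holds because the tree has at most $2^m$ leaves.

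\textbf{Step 3 (extension to $a_\omega(\theta)$).} Here the encoders must be handled, and I expect this to be the main obstacle: $S_\ell(x)$ is not generally a Pauli rotation with a parameter. Two routes are available.
\begin{itemize}
\item \emph{Pauli encoders.} If the encoders are Pauli-rotation encoders (products of $e^{-ixP/2}$), treat $x$ as an extra, non-single-use variable. Back-propagation then produces monomials that multiply single-use $\theta$-monomials by trigonometric factors in $x$. The argument of Step~1 is unchanged on the $\theta$ side, because encoder gates never introduce $\theta$-dependence.
\item \emph{General commuting encoders.} Use a cleaner argument that avoids Pauli structure on the encoder side. For fixed $x$, $f(x;\theta)$ is a trigonometric polynomial in $\theta$ of degree at most one in each $\theta_r$. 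To see this, write each rotation as $\cos(\theta_r/2)I-i\sin(\theta_r/2)P_r$. Gate $r$ then contributes exactly one half-angle factor to $U$ and one to $U^\dagger$, so $f$ is multilinear in $(\cos\theta_r,\sin\theta_r)$ after using the double-angle identities. The cross terms $\cos\frac{\theta_r}{2}\sin\frac{\theta_r}{2}$ and $\cos^2\frac{\theta_r}{2}$ reduce to degree-one expressions in $\theta_r$.
\end{itemize}
Finally, extract $a_\omega(\theta)$ by the $x$-Fourier projection of Proposition~\ref{prop:app_Omega_minkowski}. This projection is a linear functional in $x$ that commutes with the $\theta$-expansion, so each $a_\omega$ inherits the support $K\subset\{-1,0,1\}^m$.
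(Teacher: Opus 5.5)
Your proof is correct. Steps 1--2 are the paper's argument: the paper inserts the character expansion of Lemma~\ref{lem:app_pp_node_support} into the node expansion \eqref{eq:app_pp_scalar_node_expansion} and collects characters. Your induction adds an explicit reason why single use keeps every node monomial a product over distinct parameters, which the paper leaves implicit in \eqref{eq:app_pp_OL_node_sum}. One small correction there: a general Hermitian $O$ must first be expanded in the Pauli basis, so the $\alpha_\nu$ are arbitrary scalars and the leaf bound is the number of Pauli terms of $O$ times $2^m$, not $2^m$.

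The real difference is in Step 3. The paper handles the $a_\omega(\theta)$ extension in one sentence: encoders add no $\theta$-dependence. On the node route, the obstacle you anticipate does not actually arise. Conjugating a Pauli string by any $x$-dependent $S_\ell(x)$ gives a finite Pauli-basis sum with $x$-dependent scalar coefficients, so propagation continues term by term and the $\theta$-monomials are untouched.

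Your second route is genuinely different and more elementary. Expanding each gate as $\cos\frac{\theta_r}{2}I-i\sin\frac{\theta_r}{2}P_r$, with each gate appearing once in $U$ and once in $U^\dagger$, uses only $P_r^2=I$ and single use. It therefore proves both claims at once, with no Clifford-interleaving, Pauli-tracking, or encoder assumptions. What it does not give is the branch-level structure the paper's route produces and reuses in Section~\ref{subsec:C_pauli_nodes} and Appendix~\ref{appendix:k_mode_scaling}: active sets, per-node $2^{|A|}$ supports, and the node factorisation of $C$.

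Your final extraction step is sound. The cleanest phrasing is to take the $k$-th $\theta$-Fourier coefficient of both sides of $f(x;\theta)=\sum_{\omega}a_\omega(\theta)e^{i\omega\cdot x}$. Linear independence of the distinct exponentials $e^{i\omega\cdot x}$ then gives $\widehat{a_\omega}(k)=0$ for every $k\notin\{-1,0,1\}^m$.
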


\begin{proof}
Insert the character expansions \eqref{eq:app_pp_M_char_exp} for each node monomial into the finite sum
\eqref{eq:app_pp_scalar_node_expansion} and collect like characters. Since every node monomial is supported on $\{-1,0,1\}^m$
(Lemma~\ref{lem:app_pp_node_support}), the union of supports is also contained in $\{-1,0,1\}^m$, yielding
\eqref{eq:app_pp_f_param_FT}. The same argument applies coefficient-wise for $a_\omega(\theta)$, since encoder-side dependence only
changes the (finite) outer sum over $\omega$ and does not introduce additional $\theta$-dependence.
\end{proof}

\subsection{Joint coefficients from node expansion}\label{app:pp_to_C}

Combining the encoder-side expansion (Appendix~\ref{appendix:qfms}) with the trainer-side node expansion yields a finite separable form
\begin{equation}
f(x;\theta)=\sum_{s,c}\ \sum_{s',c'} d_{s,c,s',c'}\; N_{s,c}(x)\; M_{s',c'}(\theta),
\label{eq:app_pp_joint_node}
\end{equation}
where $N_{s,c}(x)$ is an input-side trigonometric monomial (encoder-induced sine/cosine product) and $M_{s',c'}(\theta)$ is a
parameter-side trigonometric monomial (trainer-induced sine/cosine product). In the Pauli-rotation setting these monomials arise
from repeatedly applying Lemma~\ref{lem:pauli_conjugation} and expanding encoder phase factors as trigonometric functions; see
Appendix~\ref{appendix:qfms} for the encoder-side construction.

Explicitly, we may write
\begin{equation}
N_{s,c}(x):=\prod_{j=1}^d \sin(x_j)^{s_j}\cos(x_j)^{c_j},
\qquad
M_{s',c'}(\theta):=\prod_{r=1}^m \sin(\theta_r)^{s'_r}\cos(\theta_r)^{c'_r}.
\label{eq:app_pp_node_monomials}
\end{equation}
For each fixed $(s,c)$ and $(s',c')$, define their torus Fourier coefficients by
\begin{align}
N_{s,c}(x) &= \sum_{\omega\in\mathbb{Z}^d} \widehat N_{s,c}(\omega)\,e^{i\omega\cdot x},
&
\widehat N_{s,c}(\omega)
&=\frac{1}{(2\pi)^d}\int_{[0,2\pi]^d} N_{s,c}(x)\,e^{-i\omega\cdot x}\,\mathrm{d}x,
\label{eq:app_pp_Nhat_def}
\\
M_{s',c'}(\theta) &= \sum_{k\in\mathbb{Z}^m} \widehat M_{s',c'}(k)\,e^{ik\cdot\theta},
&
\widehat M_{s',c'}(k)
&=\frac{1}{(2\pi)^m}\int_{[0,2\pi]^m} M_{s',c'}(\theta)\,e^{-ik\cdot\theta}\,\mathrm{d}\theta.
\label{eq:app_pp_Mhat_def}
\end{align}
Substituting these Fourier expansions into \eqref{eq:app_pp_joint_node} and collecting the coefficient of
$e^{i\omega\cdot x}e^{ik\cdot\theta}$ yields the joint-coefficient formula
\begin{equation}
C_{\omega k}
=
\sum_{s,c}\ \sum_{s',c'}
d_{s,c,s',c'}\;\widehat N_{s,c}(\omega)\;\widehat M_{s',c'}(k),
\label{eq:app_pp_C_leaf_free}
\end{equation}
which is the leaf-free version of the node-factorisation identity \eqref{eq:C_node_factorisation} in the main text.


\section{$k$-mode support growth and scaling lower bounds}\label{appendix:k_mode_scaling}

This appendix records simple, architecture-driven lower bounds on the number of non-trivial parameter harmonics $k\in\mathbb{Z}^m$
that can appear in the joint expansion
\[
f(x;\theta)=\sum_{\omega\in\Omega}\sum_{k\in K} C_{\omega k}\,e^{i\omega\cdot x}e^{ik\cdot\theta},
\qquad
K\subset\{-1,0,1\}^m
\]
for single-use Pauli-rotation ans\"atze. The fundamental mechanism is the binary $\cos/\sin$ branching induced by anti-commuting
Pauli conjugations in Heisenberg back-propagation (Appendix~\ref{appendix:pauli_propagation}), followed by trig-to-character
conversion on $\mathbb{T}^m$, which is standard Fourier analysis on the torus \cite{katznelson_introduction_2004,stein_fourier_2003}.

\subsection{Active set and per-node character count}\label{app:k_modes_active_set}

Work in the single-use setting where each parameter $\theta_r$ appears in exactly one Pauli rotation $\exp(-\tfrac{i}{2}\theta_r P_r)$.
Consider the Heisenberg back-propagation of the measured observable through the trainable circuit. Each time an anti-commuting
parametrised Pauli rotation is encountered, the branch acquires a factor $\cos\theta_r$ or $\sin\theta_r$
(Section~\ref{subsec:param_harmonics} and Appendix~\ref{appendix:pauli_propagation}). This is the same branching structure
underlying standard Pauli propagation methods (cf.\ Clifford/Pauli conjugation rules \cite{gottesman_stabilizer_1997,aaronson_improved_2004,dehaene_clifford_2003}).

\begin{definition}[Active set on a branch]\label{def:active_set}
Fix a particular back-propagation branch (node). The \emph{active parameter set} of that branch is
\begin{equation}
A \;:=\;\bigl\{r\in[m]: \text{$\theta_r$ appears through an anti-commuting interaction on the branch}\bigr\}.
\label{eq:app_active_set}
\end{equation}
Equivalently, along that branch the prefactor contains one factor $\sin\theta_r$ or $\cos\theta_r$ for each $r\in A$,
and no $\theta_r$-dependence for $r\notin A$.
\end{definition}

\begin{lemma}[Per-node $k$-support cardinality]\label{lem:per_node_support}
Let $M(\theta)$ be a parameter-side monomial of the form
\begin{equation}
M(\theta)
=
\prod_{r\in A_c}\cos\theta_r \;
\prod_{r\in A_s}\sin\theta_r,
\qquad
A_c\cap A_s=\emptyset,
\label{eq:app_param_monomial}
\end{equation}
with active set $A:=A_c\cup A_s$. In the single-use regime, its character expansion contains exactly $2^{|A|}$ distinct Fourier characters:
\begin{equation}
M(\theta)=\sum_{k\in\mathrm{supp}_k(M)} \widehat M(k)\,e^{ik\cdot\theta},
\qquad
\bigl|\mathrm{supp}_k(M)\bigr|=2^{|A|},
\label{eq:app_per_node_2b}
\end{equation}
where $\mathrm{supp}_k(M)\subset\{-1,0,1\}^m$ is supported on $k_r\in\{\pm1\}$ for $r\in A$ and $k_r=0$ otherwise.
\end{lemma}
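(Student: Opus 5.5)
The plan is to expand each factor into characters and multiply out, exactly as in Lemma~\ref{lem:app_pp_node_support}, but now tracking the coefficients explicitly so that we show every one of the $2^{|A|}$ candidate characters actually survives. For each $r\in A_c$ write $\cos\theta_r=\tfrac12(e^{i\theta_r}+e^{-i\theta_r})$, and for each $r\in A_s$ write $\sin\theta_r=\tfrac{1}{2i}(e^{i\theta_r}-e^{-i\theta_r})$. Since $A_c\cap A_s=\emptyset$ and single-use guarantees that each $\theta_r$ occurs in at most one factor, the product factorises over distinct coordinates. Distributing the product gives a sum indexed by sign vectors $\sigma\in\{\pm1\}^{A}$. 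The term labelled by $\sigma$ is
\[
\Bigl(\prod_{r\in A_c}\tfrac12\Bigr)\Bigl(\prod_{r\in A_s}\tfrac{\sigma_r}{2i}\Bigr)\,e^{i k(\sigma)\cdot\theta},
\]
where $k(\sigma)_r=\sigma_r$ for $r\in A$ and $k(\sigma)_r=0$ for $r\notin A$.

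The key step is to check that no two terms merge or cancel. The map $\sigma\mapsto k(\sigma)$ is injective, because $k(\sigma)$ restricted to $A$ is $\sigma$ itself. So each character $e^{ik\cdot\theta}$ with $k$ in the stated set receives exactly one contribution, and no collection of like terms is needed. That contribution is
\[
\widehat M(k(\sigma))=2^{-|A|}\,i^{-|A_s|}\prod_{r\in A_s}\sigma_r,
\]
which has modulus $2^{-|A|}\neq0$. By uniqueness of Fourier coefficients on $\mathbb{T}^m$ (Lemma~\ref{lem:app_char_orth}, or equivalently \eqref{eq:C_fourier_coeff}), these are the coefficients of $M$. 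All other $k$ get coefficient zero: if $k_r=0$ for some $r\in A$, or if $k_r\neq0$ for some $r\notin A$, or if $|k_r|\ge2$ anywhere, no term produces that character.

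It follows that $\mathrm{supp}_k(M)=\{k\in\{-1,0,1\}^m: k_r\in\{\pm1\}\ \forall r\in A,\ k_r=0\ \forall r\notin A\}$, and that this set has cardinality $2^{|A|}$, which proves \eqref{eq:app_per_node_2b}.

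The only real point of care is where single-use is used. If a parameter were reused, a factor such as $\cos^2\theta_r$ would produce the exponents $\pm2$ and $0$, and the signed exponents coming from different factors of the same coordinate could coincide. This would merge terms and possibly cancel them. Disjointness of $A_c$ and $A_s$, together with single-use, rules this out, and it is exactly what makes the map $\sigma\mapsto k(\sigma)$ a bijection onto the support. The remainder of the argument is bookkeeping.
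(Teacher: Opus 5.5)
Your proof is correct and takes the same route as the paper: you expand each $\cos\theta_r$ and $\sin\theta_r$ into characters, multiply out over the distinct active coordinates, and count the $2^{|A|}$ sign patterns $k_r=\pm1$ on $A$. Your explicit coefficient $\widehat M(k(\sigma))=2^{-|A|}\,i^{-|A_s|}\prod_{r\in A_s}\sigma_r$, together with the injectivity of $\sigma\mapsto k(\sigma)$, makes rigorous the no-cancellation point that the paper's one-line argument leaves implicit.
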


\begin{proof}
Use $\cos\theta_r=\tfrac12(e^{i\theta_r}+e^{-i\theta_r})$ and $\sin\theta_r=\tfrac{1}{2i}(e^{i\theta_r}-e^{-i\theta_r})$ coordinate-wise
and expand the finite product. Each active coordinate contributes an independent choice of sign in the exponent, producing $2^{|A|}$ distinct
sign patterns $k_r=\pm1$ on $r\in A$, with $k_r=0$ for inactive $r\notin A$ \cite{katznelson_introduction_2004,stein_fourier_2003}.
\end{proof}

\subsection{Global $k$-support and node-generated lower bounds}\label{app:k_modes_global_support}

\begin{definition}[Global parameter-mode support]\label{def:global_support}
Define the global parameter-mode support of $f$ by
\begin{equation}
\mathrm{supp}_k(f) \;:=\; \bigl\{k\in\mathbb{Z}^m:\ \exists\,\omega\in\Omega \text{ with } C_{\omega k}\neq 0\bigr\},
\qquad
S_{\mathrm{global}}:=|\mathrm{supp}_k(f)|.
\label{eq:app_global_support}
\end{equation}
\end{definition}

The node expansion separates input-side and parameter-side trigonometric factors:
\begin{equation}
f(x;\theta)=\sum_{s,c}\ \sum_{s',c'} d_{s,c,s',c'}\;N_{s,c}(x)\;M_{s',c'}(\theta),
\label{eq:app_node_expansion}
\end{equation}
with $N_{s,c}$ depending only on encoder-side trigonometric data and $M_{s',c'}$ a product of sines/cosines in trainable parameters
(Appendix~\ref{appendix:pauli_propagation}). Let $\widehat N_{s,c}(\omega)$ and $\widehat M_{s',c'}(k)$ denote their respective Fourier coefficients.
Collecting coefficients yields the leaf-free joint coefficient formula
\begin{equation}
C_{\omega k}
=
\sum_{s,c}\ \sum_{s',c'} d_{s,c,s',c'}\;\widehat N_{s,c}(\omega)\;\widehat M_{s',c'}(k).
\label{eq:app_joint_coeff_leaf_free}
\end{equation}

Because \eqref{eq:app_joint_coeff_leaf_free} is a \emph{sum} over many node contributions, it is possible (in principle) that distinct terms
cancel for a given $(\omega,k)$. It is therefore useful to distinguish the set of modes that are \emph{generated} by at least one node
from those that survive after aggregation.

\begin{definition}[Node-generated $k$-support]\label{def:generated_support}
Define the node-generated support
\begin{equation}
\mathrm{supp}^{\mathrm{gen}}_k(f)
:=\bigcup_{\substack{s,c,s',c':\\ d_{s,c,s',c'}\neq 0}}
\mathrm{supp}_k(M_{s',c'}),
\qquad
S_{\mathrm{gen}}:=\bigl|\mathrm{supp}^{\mathrm{gen}}_k(f)\bigr|.
\label{eq:app_generated_support}
\end{equation}
This counts $k$-modes that appear in at least one parameter-side node monomial before cancellations across nodes/pathways are taken into account.
\end{definition}

\begin{proposition}[Node-induced lower bound on generated $k$-support]\label{prop:node_lower_bound}
Assume single-use. Suppose there exists at least one node index quadruple $(s,c,s',c')$ such that $d_{s,c,s',c'}\neq 0$ and
$\widehat N_{s,c}(\omega)\neq 0$ for at least one $\omega\in\Omega$. Let $|A(s',c')|$ be the number of active parameters in the monomial
$M_{s',c'}(\theta)$. Then
\begin{equation}
S_{\mathrm{gen}}
\;\ge\;
\bigl|\mathrm{supp}_k(M_{s',c'})\bigr|
\;=\;
2^{\,|A(s',c')|}.
\label{eq:app_Sgen_ge_2b}
\end{equation}
In particular, defining
\begin{equation}
b_{\max}:=\max\Bigl\{|A(s',c')|:\ \exists\,s,c \text{ with } d_{s,c,s',c'}\neq 0\Bigr\},
\label{eq:app_bmax_def}
\end{equation}
one has the immediate bound
\begin{equation}
S_{\mathrm{gen}}\ \ge\ 2^{\,b_{\max}}.
\label{eq:app_Sgen_ge_2bmax}
\end{equation}
\end{proposition}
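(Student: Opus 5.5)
The claim follows almost directly from the definition of $\mathrm{supp}^{\mathrm{gen}}_k(f)$ in Definition~\ref{def:generated_support} together with Lemma~\ref{lem:per_node_support}. First fix a quadruple $(s,c,s',c')$ satisfying the hypothesis, so that $d_{s,c,s',c'}\neq 0$. By Definition~\ref{def:generated_support}, the union defining $\mathrm{supp}^{\mathrm{gen}}_k(f)$ ranges over all quadruples with nonzero $d$. This fixed quadruple is one of them, so $\mathrm{supp}_k(M_{s',c'})\subseteq \mathrm{supp}^{\mathrm{gen}}_k(f)$, and monotonicity of cardinality under inclusion gives $S_{\mathrm{gen}}\ge |\mathrm{supp}_k(M_{s',c'})|$. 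The extra hypothesis $\widehat N_{s,c}(\omega)\neq 0$ is not needed for this inclusion. It only ensures that the node genuinely contributes to some row of $C$, so the bound is not vacuous in spirit. I would remark on this rather than use it.

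Second, I would evaluate $|\mathrm{supp}_k(M_{s',c'})|$. Under single-use, $M_{s',c'}$ as in \eqref{eq:app_pp_node_monomials} has exponents $s'_r,c'_r\in\{0,1\}$, and no parameter carries both a sine and a cosine. I would justify this from Lemma~\ref{lem:pauli_conjugation} and the node construction in Appendix~\ref{appendix:pauli_propagation}. Each parameter is encountered once along a branch, and that encounter contributes either nothing, a single $\cos\theta_r$, or a single $\sin\theta_r$. So $M_{s',c'}$ has exactly the form \eqref{eq:app_param_monomial}, with $A_c=\{r:c'_r=1\}$, $A_s=\{r:s'_r=1\}$ disjoint and $A(s',c')=A_c\cup A_s$. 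Lemma~\ref{lem:per_node_support} then gives exactly $2^{|A(s',c')|}$ characters.

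It is worth confirming that these characters are all nonzero, and the product expansion shows this. The coefficient of each sign pattern $k$ is $\prod_{r\in A_c}\tfrac12\prod_{r\in A_s}(\pm\tfrac{1}{2i})$. This has modulus $2^{-|A|}\neq 0$, so there is no cancellation within a single monomial. Combining the two steps yields \eqref{eq:app_Sgen_ge_2b}.

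For \eqref{eq:app_Sgen_ge_2bmax}, I would choose $(s',c')$ attaining the maximum in \eqref{eq:app_bmax_def}, together with a witnessing $(s,c)$ with $d\neq 0$. The maximum exists because the node sum is finite. Applying the first inclusion argument to this quadruple gives the bound; that argument only needs $d\neq 0$, so the $\widehat N$ condition is not required here either. The only delicate point, and hence the main potential obstacle, is the single-use justification that each $M_{s',c'}$ is square-free with disjoint sine/cosine sets. If a parameter carried $\sin\theta_r\cos\theta_r$ or a squared factor, harmonics $k_r\in\{0,\pm2\}$ could appear, and the count $2^{|A|}$ would change. I would address this by tracing the one-encounter-per-parameter property through the back-propagation recursion \eqref{eq:app_pp_backprop}.
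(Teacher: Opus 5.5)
Your proposal is correct and follows the paper's proof: Definition~\ref{def:generated_support} gives the inclusion $\mathrm{supp}_k(M_{s',c'})\subseteq \mathrm{supp}^{\mathrm{gen}}_k(f)$, Lemma~\ref{lem:per_node_support} gives the count $2^{|A(s',c')|}$, and maximising over nodes gives the $b_{\max}$ bound. Your extra checks make explicit points the paper's short proof leaves implicit: single-use forces every $M_{s',c'}$ with $d\neq 0$ to be square-free with disjoint sine/cosine sets, which is what allows the lemma to be invoked; no character coefficient cancels within a monomial; and the hypothesis $\widehat N_{s,c}(\omega)\neq 0$ is never actually used.
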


\begin{proof}
Fix such a node. Lemma~\ref{lem:per_node_support} gives $\bigl|\mathrm{supp}_k(M_{s',c'})\bigr|=2^{|A(s',c')|}$.
By Definition~\ref{def:generated_support}, $\mathrm{supp}_k(M_{s',c'})\subseteq \mathrm{supp}^{\mathrm{gen}}_k(f)$ whenever
$d_{s,c,s',c'}\neq 0$, hence \eqref{eq:app_Sgen_ge_2b}. Taking the maximum over nodes gives \eqref{eq:app_Sgen_ge_2bmax}.
\end{proof}

\end{document}